
\documentclass{article}
\usepackage[utf8]{inputenc}
\usepackage{amsthm}
\usepackage{amsmath}
\usepackage{xcolor}
\usepackage{xspace}
\usepackage{hyperref}
\usepackage{mathtools}

\usepackage[subtle, mathspacing=normal]{savetrees} 

\title{Iceberg Hashing: Optimizing Many Hash-Table Criteria at Once}

\author{
\begin{tabular}{@{}c@{}}
Michael A. Bender\thanks{Stony Brook University. \href{mailto:bender@cs.stonybrook.edu}{\texttt{bender@cs.stonybrook.edu}}} \qquad Alex Conway\thanks{Cornell Tech. \href{mailto:ajc473@cornell.edu}{\texttt{ajc473@cornell.edu}}} \qquad Mart\'{\i}n Farach-Colton \thanks{Rutgers University. \href{mailto:martin@farach-colton.com}{\texttt{martin@farach-colton.com}}}\vspace{0.22cm}\\
William Kuszmaul \thanks{Massachusetts Institute of Technology. \href{mailto:kuszmaul@mit.edu}{\texttt{kuszmaul@mit.edu}}} \qquad Guido Tagliavini \thanks{Snowflake Inc. \href{mailto:guido.tagliavini@snowflake.com}{\texttt{guido.tagliavini@snowflake.com}}}
\end{tabular}
}

\date{}

\usepackage[letterpaper, margin=1in]{geometry}
\usepackage{xspace}
\usepackage{enumitem}
\usepackage{cleveref}
\usepackage{tcolorbox}
\usepackage{amsmath,amsthm,amsfonts}
\tcbuselibrary{skins,breakable}

\usepackage{algpseudocode,algorithm,algorithmicx}
\newcommand*\Let[2]{\State #1 $\gets$ #2}
\algrenewcommand\algorithmicrequire{\textbf{Description:}}
\algrenewcommand\algorithmicensure{\textbf{Postcondition:}}
\algnewcommand{\IIf}[1]{\State\algorithmicif\ #1\ \algorithmicthen}
\algnewcommand{\EndIIf}{\unskip\ \algorithmicend\ \algorithmicif}

\usepackage{listings}
\lstset{basicstyle=\small\ttfamily,columns=flexible,breaklines=true}

\usepackage{thmtools}
\usepackage{thm-restate}
\usepackage{lipsum}
\usepackage[normalem]{ulem}

\DeclareMathOperator{\E}{\mathbb{E}}

\renewcommand{\epsilon}{\varepsilon}
\usepackage{todonotes}

\newcommand{\defn}[1]{\textbf{\emph{#1}}}
\newcommand{\poly}{\operatorname{poly}}
\newcommand{\polylog}{\operatorname{polylog}}
\newcommand{\prob}[1]{\Pr\left[#1\right]}

\newcommand{\expect}[1]{\E\left[#1\right]}

\newcommand{\eps}{\varepsilon}
\newcommand{\paren}[1]{\left(#1\right)}

\newcommand{\bin}{\text{Bin}}

\newcommand{\g}{\textsf{bin}}
\newcommand{\f}{\textsf{fp}}
\newcommand{\rr}{r}

\newcommand{\historyless}{\mbox{\textsc{Iceberg Game}}\xspace}

\newcommand\numberthis{\addtocounter{equation}{1}\tag{\theequation}}

\newtheorem{thm}{Theorem}
\newtheorem{lem}{Lemma}
\newtheorem{cor}{Corollary}

\newtheorem{clm}{Claim}
\newtheorem{rmk}{Remark}

\theoremstyle{definition}

\newcommand{\punt}[1]{}

\newcommand{\calF}{\mathcal{F}}

\newcommand{\calT}{\mathcal{T}}

\newcommand{\word}{w}

\renewcommand{\paragraph}[1]{\medskip\smallskip\noindent\textbf{#1}}

\sloppy

\begin{document}
\maketitle
\thispagestyle{empty}
\begin{abstract}

Despite being one of the oldest data structures in computer science, hash tables continue to be the focus of a great deal of both theoretical and empirical research. A central reason for this is that many of the fundamental properties that one desires from a hash table are difficult to achieve simultaneously; thus many variants offering different trade-offs have been proposed.

This paper introduces Iceberg hashing, a hash table that simultaneously offers the strongest known guarantees on a large number of core properties. Iceberg hashing supports constant-time operations while improving on the state of the art for space efficiency, cache efficiency, and low failure probability.  Iceberg hashing is also the first hash table to support a load factor of up to $1 - o(1)$ while being stable, meaning that the position where an element is stored only ever changes when resizes occur. In fact, in the setting where keys are $\Theta(\log n)$ bits, the space guarantees that Iceberg hashing offers, namely that it uses at most $\log \binom{|U|}{n} + O(n \log \log n)$ bits to store $n$ items from a universe $U$, matches a lower bound by Demaine et al. that applies to any stable hash table.

Iceberg hashing introduces new general-purpose techniques for some of the most basic aspects of hash-table design.  Notably, our indirection-free technique for dynamic resizing, which we call waterfall addressing, and our techniques for achieving stability and very-high probability guarantees, can be applied to any hash table that makes use of the front-yard/backyard paradigm for hash table design.

\end{abstract}

\newpage 
\pagenumbering{arabic}

\section{Introduction}
\label{sec:intro}

The hash table is one of the oldest and most fundamental data structures in computer science.
Hash tables were invented by Hans Peter Luhn in 1953 during the development of  IBM's first commercial
scientific computer, the IBM 701 \cite{KnuthVol3}. Luhn's implementation used what is now known as
chained hashing.\footnote{As Knuth points out in \cite{KnuthVol1}, the implementation may have also been the
first use of linked lists in computer science.}
Since items are addressed via pointers, chained hash tables waste space and offer poor data locality.

In the nearly seven decades since, there has been a huge literature on hashing; some important milestones can be summarized in the following progression of work. 
Linear probing, which was introduced in 1954~\cite{KnuthVol3, peterson57}, achieves good data locality and constant-time operations in expectation, but scales poorly to high load factors.  In the 1980s, Fredman, Koml\'os, and Szemer\'edi~\cite{Fredman82FKS} showed how to achieve worst-case constant time queries and subsequent work~\cite{dietzfelbinger1990new, dietzfelbinger88hash} showed how to dynamize this hash table, but at the cost of poor space efficiency and data locality.  In the early 2000s, Cuckoo hashing~\cite{Pagh:CuckooHash, Fotakis03Cuckoo, DietzfelbingerWe07, Arbitman09Deamortized}  was introduced, providing constant-time queries and updates with better space efficiency.  Finally, in the past decade, several hash tables 
\cite{arbitman2010backyard, liu2020succinct, goodrich2011fully, goodrich2012cache} have been developed that offer a variety of even stronger performance guarantees, including very-high probability constant-time operations, very high load factors, etc.

One of the great ironies in the study of hashing is that,
even after seven decades of research and many proposed alternatives, chained hashing remains one of the most widely used hash table designs, even serving as the default in performance-oriented languages such as C++ \cite{cplusplus1, cplusplus2}.
Since chaining is missing many of the desirable properties of other hash tables (space efficiency, data locality, constant-time operations, etc.), why is it that it continues to be so widely used?

Chaining offers one guarantee that is not offered by other hash table designs: referential stability.
{Referential stability} requires that elements not change location in the table,  except for when table resizes are performed \cite{sandersstability, cplusplus1, cplusplus2, KnuthVol3, originalstability}.
This is important in many settings: to reduce locking and increase concurrency; to allow pointers into the table; to support iterators through the hash table, etc.

What makes stability algorithmically interesting is that the known techniques for achieving it are fundamentally at odds with
the other desirable guarantees. Stability itself is easily achieved by storing pointers to elements in the hash table, rather than the elements themselves.  
But as in chaining, these pointers compromise other central guarantees, such as space efficiency and data locality. 

In fact, stability illustrates just one example of a more general phenomenon---that known techniques for achieving many central hash-table guarantees preclude others. 
Even in cases where we know how to achieve individual guarantees, the question of whether we can get these guarantees together in the same hash table is often much harder. Some of the most substantial breakthroughs in the field have been needed to achieve even basic combinations, e.g., high load factor and dynamic resizing \cite{Raman03Succinct}, high load factor and constant-time operations \cite{arbitman2010backyard}, or very recently, dynamically-resizable high load factor and constant-time operations \cite{liu2020succinct}. And, as we shall discuss in more detail later, some other basic combinations are still well beyond the known techniques. 

Modern work on hashing~\cite{jensen2008optimality, arbitman2010backyard, liu2020succinct, goodrich2011fully, goodrich2012cache, sandersstability} focuses on the following core list of desirable guarantees:

\medskip
\noindent{\textbf{Time:}}
\begin{itemize}[noitemsep,nolistsep]
\item \textbf{Constant-time operations:} insertions/queries/deletions take $O(1)$ time w.h.p.
\item \textbf{\boldmath $(1 + o(1))$ cache optimality:} operations incur $1 + o(1)$ cache misses in the external memory model.  
\end{itemize}
\medskip
\noindent{\textbf{Space:}}
\begin{itemize}[noitemsep,nolistsep]
\item \textbf{Load factors of \boldmath $1 - o(1)$:} all but a $o(1)$ fraction of space is used to store elements. 
\item \textbf{Dynamic resizing:} the table dynamically adjusts its space consumption to match the current size. 
\end{itemize}
\medskip
\noindent{\textbf{Functionality:}}
\begin{itemize}[noitemsep,nolistsep]
\item \textbf{Very-high probability guarantees:} the guarantees have subpolynomial failure probability.
\item \textbf{Referential stability:} the only way that elements move around is when the table is resized.  
\end{itemize}
\medskip
\medskip

Each property individually has its own (sometimes extensive) line of research, 
and the question of whether optimal guarantees for all of the properties can be achieved together has remained a significant open problem.

\paragraph{This paper: Iceberg hashing.}
In this work we introduce \defn{Iceberg hash tables}. Iceberg hashing matches the states of the art for all of the above properties
\emph{simultaneously}, and also improves the states of the art for space efficiency and failure probability. 

Iceberg hashing introduces new techniques for some of the most basic aspects of hash-table design. 
Notably, our indirection-free technique for dynamic resizing, which we call \defn{waterfall addressing}, and our techniques for achieving stability and very-high probability 
guarantees, can be applied to any hash table that makes use of the backyarding paradigm for hash table design.

Iceberg hashing also revisits one of the oldest approaches for designing space-efficient hash tables: backyarding. 
Introduced in the 1950s~\cite{peterson57, vitter1987design}, the basic idea is that records are first hashed into bins in the \defn{front yard} and if the target bin is full, the record is instead stored in a
small \defn{backyard} hash table. As long as the backyard is small, consisting of $o(n)$ elements, we can afford to store it
in a less space-efficient manner. In recent work, backyarding has been used to achieve high space efficiency in constant-time hash tables 
\cite{arbitman2010backyard, BenderFaGo18, goodrich2011fully, goodrich2012cache}. Our techniques allow for this space efficiency to be preserved, while also achieving the other core guarantees described above. 

\subsection{The guarantees of an Iceberg hash table}

\paragraph{Referential stability.}
A hash table is said to be \defn{stable} if whenever a new element  $x$ is inserted, the position
in which $x$ (along with any value associated with $x$) is stored is guaranteed not to change until either $x$ is deleted or the table
is resized~\cite{sandersstability,originalstability}.\footnote{In addition to being required for any implementation of the C++ unordered map~\cite{cplusplus1}, stability is an integral part of the design for the standard hash tables used at both Google \cite{abseil}
and Facebook \cite{F14}. Stable hash tables typically offer a \texttt{Reserve} function, which allows users to guarantee that the table
will remain stable until the next time that it exceeds some reserved capacity, which is why stability is typically not required during resizing.}

Empirical work on the problem of designing space-efficient, stable hash tables dates back to the early 1980s~\cite{originalstability, sandersstability, abseil, F14} (see also Knuth's Volume 3 \cite{KnuthVol3}). Much of the theoretical work on stability has focused on a weaker version of the property called \defn{value stability}: values associated with keys are stable, but the keys need not be stored with those values and are allowed to move.\footnote{Value stability is sufficient for some applications of stability (e.g., storing pointers to values, so that the values can be directly edited) but not others (e.g., supporting iterators, storing pointers into the hash table that can be used to verify that a given key/value is present; designing concurrent hash tables that rely on elements staying put, etc.).} Demaine et al.~give a general-purpose approach (Theorem 3 of \cite{demaine2005dynamic}) for space-efficiently achieving value stability in any hash table, by adding an extra layer of indirection that can be encoded with just $O(\log \log n)$ extra bits per key. Of course, such a layer of indirection is incompatible with data-locality, so if we want to achieve value stability (and, more generally, full stability) in a hash table that is also cache friendly, then an alternative approach must be taken.

Besides the approach of using indirection \cite{cplusplus1, cplusplus2}, a second common approach to achieving stability has been to consider open addressing schemes (such as linear probing) 
with deletions implemented using \defn{tombstones}; in particular, this means that when an element is deleted,
it is simply removed from the table, and no other elements are moved around. Despite both empirical work \cite{originalstability, sandersstability}
and theoretical work \cite{larson1983analysis} on analyzing such schemes,
the complex dependencies between insertions and deletions over time have prevented any analysis from offering provable guarantees
at high load factors (see discussion in \cite{sandersstability}).\footnote{And even if such guarantees were possible, the performance degradation
\cite{originalstability} that these schemes incur at high load factors would still appear to be problematic for proving time bounds on unsuccessful searches.}

\paragraph{Our technique for stability: an unmanaged backyard.} 
A trademark of the use of backyards in recent work \cite{liu2020succinct, arbitman2010backyard, Arbitman09Deamortized, dietzfelbinger1990new, demaine2005dynamic, goodrich2011fully, goodrich2012cache} has been the design of creative ways to move elements
from the backyard to the front yard whenever space frees up in the latter (for example,
Arbitman et al. \cite{arbitman2010backyard} store the backyard as a deamortized
cuckoo hash table, and whenever a cuckoo eviction is performed, they check whether the element
can instead be moved back to the front yard). 

Of course, another approach would be to simply leave the backyard
\emph{unmanaged}, allowing for elements to remain in the backyard \emph{even when
space frees up in the frontyard}. We prove a general-purpose result that we 
call the \defn{Iceberg Lemma}, which establishes that backyards
do not, in fact, require any maintenance to stay small.\footnote{The name of the lemma stems from the fact that the 
majority of an iceberg remains naturally underwater, while only a small portion protrudes above.} 
An essential ingredient of the Iceberg Lemma is that it bounds the size of the backyard not just with high probability, but also with super high probability (in fact, probability $1 - 1 / 2^{n / \polylog n}$). This ends up being central to our data-structure design, as it allows for stability and super-high probability guarantees to be achieved simultaneously, without being at odds with one another.

The approach of having an unmanaged backyard is analogous to the use of tombstones in open addressing. In both cases,
one takes a data structure in which one would normally move elements around and one simply 
analyzes what happens if instead elements are always left in place. The result is that there 
are intricate circular dependencies between where elements reside over time, 
depending on the history of past insertions, deletions, and re-insertions. To overcome these dependencies and achieve super-high probability guarantees, our proof of the Iceberg Lemma makes use of a number of interesting combinatorial ideas. 

\paragraph{\boldmath Using only $O(\log \log n)$ extra bits per key.}
The first hash table to achieve constant-time operations with a load factor of $1 - o(1)$ was that of Arbitman et al.~\cite{arbitman2010backyard}. They achieve a load factor of $1-\varepsilon$, where 
$\varepsilon = O(\sqrt{\log \log n} / \sqrt{\log n})$.  The same paper poses as an open question whether a smaller $\varepsilon$ is achievable. 
Recently, Liu, et al.~\cite{liu2020succinct} presented the first progress on this problem, shaving a $\sqrt{\log\log n}$ factor, and achieving $\varepsilon = O(1/\sqrt{\log n})$.\footnote{Although~\cite{liu2020succinct} considers only insertions (and no deletions), the same basic approach can be made to work with deletions, using the allocate-free version of their techniques (see Section~$7$ of~\cite{liu2020succinct}).}

Iceberg hashing further improves $\epsilon$ to $O(\log \log n / \log n)$. 
This is an especially big improvement in the common case where keys consist of $\Theta(\log n)$ bits.
Here, Iceberg hashing uses only $O(\log \log n)$ extra bits per key in comparison to the previous state-of-the-art of $\Theta(\sqrt{\log n})$ extra bits per key~\cite{liu2020succinct}. 

For $\Theta(\log n)$-bit keys, we also show how to implement Iceberg hashing as a succinct data structure, using only $O(\log \log n)$ extra bits per key when compared to the information-theoretic optimum. In achieving this space bound, our hash table is the first dynamic dictionary to match the lower bound of Demaine et al.~(Theorem 2 of \cite{demaine2005dynamic}, which applies to any static-capacity dynamic dictionary) on the number of bits required by any hash table that stores elements by assigning them stable positions in an array. Thus our hash table has provably optimal space usage across all such hash tables. 

Interestingly, in addition to enabling a stable backyard, the Iceberg Lemma ends up independently
 playing an important role in our high-space-efficiency results. In particular, it allows for 
 the use of backyarding as a way to store metadata succinctly.

\paragraph{In-place dynamic resizing.}
A hash table supports dynamic resizing if the space consumption is a function of the current number of records $n$, rather than some upper bound $N$ on the number of records that could ever be in the data structure. 

Arbitman, Naor, and Segev~\cite{arbitman2010backyard} pose the open question of how to maintain a constant-time, space-efficient hash table that  supports dynamic resizing.  Recently, Liu, Yin, and Yu~\cite{liu2020succinct} gave an elegant solution to this problem, in which records are stored in bins and each bin is represented space-efficiently with fine-grained memory allocations, where the bin  is incrementally expanded/contracted by allocating/deallocating small chunks of memory. The resulting layer of indirection is incompatible with $1+o(1)$ cache optimality.

We remark that there are many approaches to resizing hash tables that incur an $\Omega(1/\epsilon)$-factor time overhead in order to maintain a space efficiency of $1 - \epsilon$. For an excellent discussion of such approaches, see, e.g., \cite{maier2019dynamic}.

\paragraph{Our technique for indirection-free resizing: waterfall addressing.} 
Waterfall addressing revisits the most natural approach to maintaining a space-efficient hash table, which is to simply incrementally resize the table by
$1 \pm o(1)$ factors so that it always stays at a high load factor. The problem with this approach, and the reason that it has not been used in past work,
is that each resize na\"{\i}vely requires $\Omega(n)$ work to rebuild the table, making the approach time inefficient. 

Waterfall addressing maps elements to bins in a way that offers the following guarantees.  Whenever the table size increases by a $1 + o(1)$ factor, only a $o(1)$ \emph{fraction} of elements have their bin changed, and in fact, the only elements whose bin change are the ones that move into the newly created portion of the hash table. Moreover, waterfall addressing allows a time-efficient way to identify \emph{which} elements need to be moved, so a resize can be performed in time proportional to the amount by which the table size is changing. Finally, the probability that any element lands in any bin is nearly uniform, both before and after resizing.

\begin{rmk}
At this point, it is worth taking a moment to expand on the subtle relationship between stability, resizing, and constant-time operations. As discussed earlier, stability is a property that holds at all times \emph{except} for when a hash table is being resized.\footnote{Note that this restriction is fundamental. In particular, as elements are removed from a hash table, the remaining elements \emph{must} be moved to occupy a smaller portion of memory (lest we incur poor space utilization).} On the other hand, in order to deamortize resizes (so that every operation takes time $O(1)$), constant-time hash tables spread the resize operation across a sequence of operations. In our constructions, if a resize increases the table size by a $(1 + 1/s)$ factor, then the work for the resize will be spread across $O(n / s)$ operations. This means that stability kicks in only after these next $O(n / s)$ operations are complete. Of course, another natural approach is for each rebuild to occur atomically in $O(n/s)$ time. If one uses this approach, then the only operations that violate stability are those that trigger a rebuild. \end{rmk}

\paragraph{\boldmath $(1 + o(1))$ cache optimality.}
Whereas the standard RAM model evaluates the running time of an algorithm in terms of the number of operations performed, the External Memory (EM) model~\cite{AggarwalVi88IO} measures performance in cache misses (sometimes called block transfers or I/Os). The EM model has two parameters, the size $M$ of the cache and the size $B$ of a cache line (both measured in machine words). 

Any constant-time hash table trivially incurs $O(1)$ cache misses per operation. Jensen and Pagh~\cite{jensen2008optimality} showed that a much stronger guarantee is possible: there is a constant $c$ such that if $M\ge cB$, one can implement a hash table having load factor $1 - O(1/ \sqrt{B})$ and supporting each operation with $1 + O(1 / \sqrt{B})$ expected cache misses (at the cost of some extra computation). 

In the case where $B \le \log^2 n / \log \log n$, Iceberg hashing achieves  nearly as strong a guarantee on cache misses, while also reducing the computational cost to $O(1)$.  Specifically, using a cache of size $M \ge \polylog n$, an Iceberg hash table achieves a load factor of $1 - O(\sqrt{\log B} / \sqrt{B})$ with $1 + O(1 / \sqrt{B})$ expected cache misses per operation.\footnote{The upper bound on $B$ is necessary given that achieving the same space efficiency for larger $B$ would require further improvement on the state of the art for hash-table space efficiency in the RAM model.} 

\paragraph{Very-high probability constant-time guarantees.}
Goodrich et al.~\cite{goodrich2011fully, goodrich2012cache} consider the problem of achieving subpolynomial probabilities of failure in a constant-time hash table. They note that modern hash tables have two sources of failure: failures due to hash functions being not sufficiently random; and failure due to the design of the table itself. Failures of the former type stem from the fact that the best known families of hash functions \cite{pagh2008uniform, dietzfelbinger2003almost, siegel2004universal} make use of expander graphs, the deterministic construction of which remains one of the longest-standing open problems in extremal combinatorics. As noted by Goodrich et al.~\cite{goodrich2011fully, goodrich2012cache}, however, it is nonetheless possible to isolate out failures of the second type by simply assuming access to a fully random hash function. 
Under this assumption, the authors~\cite{goodrich2011fully, goodrich2012cache} construct the first hash table to have a subpolynomial failure probability, specifically achieving a $1/2^{\polylog n}$ probability of failure with a load factor of $1 - \varepsilon$ for an arbitrarily small constant $\varepsilon > 0$.

Iceberg hashing matches this probability guarantee with an interesting twist: if there exists a hash table with a lower failure probability $p$ and that supports a constant load factor, then Iceberg hashing can be automatically improved to have failure probability $O(p) + 2^{-n / \polylog n}$ (and without compromising any of the other guarantees on space efficiency, cache efficiency, dynamic resizing, and stability).

The smallest achievable value of $p$ remains an open question. The hash table of \cite{goodrich2012cache} achieves 
$p = 1/2^{\polylog n}$, which is the state of the art. We show that, in the common case where keys are $\Theta(\log n)$ bits, a substantially smaller failure probability of $p$ is achievable. In particular, we give a simple data structure that achieves failure probability $p = 1/2^{n^{1 - \epsilon}}$ (for a positive constant $\epsilon$ of our choice). This, in turn, implies that the same failure probability
can be achieved for Iceberg hashing in this case.

As we shall discuss later, all of the properties of Iceberg hashing besides very-high-probability guarantees can be implemented using known families of hash functions (assuming the description bits of the hash function are cached). 
The known results on very-high-probability guarantees (including ours) all require access to fully random hash functions (or other families of hash functions that are not yet known to exist). Removing this requirement remains an interesting direction for future work. 

\subsection{Paper Outline}
The rest of the paper proceeds as follows. 

\begin{itemize}[leftmargin=*]
    \itemsep.1em
    \item Section \ref{sec:iceberg-lemma} proves the Iceberg Lemma.
    \item Section \ref{sec:iceberg-hashing} presents a basic version of the Iceberg hash table that
    is space efficient, cache efficient, and stable. Subsequent sections then build on this basic
    data structure to achieve further guarantees.
    \item Section \ref{sec:dynamic} shows how to perform fine-grained dynamic resizing on an
    Iceberg hash table, using waterfall addressing. We show how to achieve space-efficient dynamic resizing without compromising other properties,  such as cache
    performance.
    \item Section \ref{sec:space} extends the parameter range in which Iceberg hashing can
    be implemented in order to allow for further improvements to space efficiency. This results in
    a load factor of $1 - O(\log\log n / \log n)$.
    \item Section~\ref{sec:prob} considers the problem of achieving subpolynomial failure guarantees assuming fully random hash functions. We show how to implement Iceberg hashing for $\Theta(\log n)$-bit
    keys in a way that achieves failure probability $1/2^{n^{1 - \epsilon}}$. 
    \item Section~\ref{sec:quotient} uses quotienting to make Iceberg hashing into a fully succinct
    data structure, meaning that the space consumption is $(1+o(1))$ times the theoretical optimal.  As in 
    past work~\cite{arbitman2010backyard, liu2020succinct}, we focus on the case where keys are $\Theta(\log n)$ bits. This results in the first succinct dynamic hash table
    to support constant-time operations (with high probability) and waste only $\Theta(\log \log n)$
    bits of space per key when compared to the information-theoretical optimum.
    \item Section~\ref{sec:related} presents an overview of related work.
    \item Finally, Appendix~\ref{sec:hashing} gives explicit families of hash functions that can be used
    to implement Iceberg hashing. As is the case for past hash tables, this introduces a $1/\poly n$
    probability of failure. 
\end{itemize}

Since several variations of the Iceberg hash table are presented in the paper, each of which builds on the previous one, the reader may find it helpful to sometimes reference Appendix \ref{app:figures} which includes a table of (1) the different types of metadata used in each section; and (2) the different reasons that an element can end up in the backyard.

\subsection{Applications in Later Work}

Since the preliminary version of this paper, Iceberg hashing and the techniques that it contains have found a variety of applications to both hashing \cite{PandeyBeJo123, kuszmaul2022hash, bender2022linear} and related areas \cite{bender2023tiny, bender2021paging, gosakan2023mosaic}. Although the original purpose of this paper was primarily to answer a set of theoretical questions, the techniques and data structures have subsequently been used in applied settings \cite{PandeyBeJo123, bender2021paging, gosakan2023mosaic}. 

We begin by discussing applications within the theory literature. Bender, Kuszmaul, and Kuszmaul (FOCS'21 \cite{bender2022linear}) study the question of whether it is possible to construct linear-probing hash tables that avoid a phenomenon known as \emph{primary clustering}. The data structure that they introduce, known as \emph{graveyard hashing}, makes use of waterfall addressing (introduced in Section \ref{sec:dynamic}) to perform efficient in-place resizing without compromising I/O guarantees. Kuszmaul (FOCS'22 \cite{kuszmaul2022hash}) studies the question of whether a failure probability of $1 / 2^{n^{1-\epsilon}}$ can be achieved without assuming fully random hash functions. The resulting data structure, called an \emph{Amplified Rotated Trie}, builds upon the basic design/analysis framework in Section \ref{sec:prob} for how to amplify probability guarantees by using a trie as a backyard data structure. Bender et al.~(SODA'23 \cite{bender2023tiny}) introduce a data-structural primitive called the \emph{tiny pointer}, which can be used as a space-efficient swap-in replacement for pointers in many applications. The construction of the tiny pointer relies heavily on the fact that very-high-probability stable backyards are possible (shown in Section \ref{sec:iceberg-lemma}). 

On the applied side, Gosakan et al.~(SPAA'21 \cite{bender2021paging}, ASPLOS'23 Distinguished Paper \cite{gosakan2023mosaic}) propose the use of Iceberg hashing to assign virtual pages to physical page addresses in RAM. Here, the stability of Iceberg hashing is critical, as the virtual-to-physical mapping cannot be changed dynamically. Another property that they exploit is the \emph{low associativity} of Iceberg hashing (i.e., the fact that each item is guaranteed to be in one of a small number of positions, which follows from the techniques used to achieve space-efficient stability in Section \ref{sec:iceberg-hashing}). This, combined with stability, allows for \cite{bender2021paging, gosakan2023mosaic} to redesign a performance-critical piece of hardware known as the translation look-aside buffer (TLB) in order to achieve significant performance improvements in a wide range of experiments \cite{gosakan2023mosaic}.

Pandey et al.~\cite{PandeyBeJo123} implement a practical version of Iceberg hashing, called \texttt{IcebergHT} (SIGMOD'23 \cite{PandeyBeJo123}). \texttt{IcebergHT} is a state-of-the-art space-efficient concurrent hash table designed for persistent memory and RAM.
The stability of Iceberg hashing ends up playing an important role in the design, as it lends itself to simple and efficient concurrency mechanisms. Additionally, the routing-table techniques in Section \ref{sec:iceberg-hashing}, which \emph{a priori} might seem purely theoretical, end up lending themselves naturally to the use of \texttt{AVX-512} vector instructions. The $(1 + o(1))$ cache-efficiency guarantee also comes into play, as it reduces TLB misses (and thus page-table lookups) compared to alternatives such as Cuckoo hashing.  Finally, the space-efficiency of Iceberg hashing allows for \texttt{IcebergHT} to operate continuously at $> 85\%$ full without compromising performance.

 \section{Iceberg Lemma}
\label{sec:iceberg-lemma}

As discussed in the introduction, a common technique for implementing space-efficient hash tables
is to use a front yard data structure for most elements and a backyard data structure on a small subset of overflow elements.  Since the backyard is so small, the 
data structure used to implement it need not be as space efficient.  

This section considers the question of what happens if the backyard is unmanaged, meaning that
once an element is placed into the backyard, it is not moved back to the front yard, even if space frees up
in the appropriate bin. 

\paragraph{The \historyless.} 
We capture the problem formally with what we call the \historyless, which ignores the structure of the backyard but allows us to bound its size.
The \historyless considers $n$ bins and a universe $U$ of balls. A sequence of ball insertions and removals are performed
over time, with the only constraints being that there are never more than $m = hn$ balls in the system
at any given moment, and that all balls in the system are distinct.
Whenever a ball is inserted, it is hashed to a random bin (if the same ball is inserted, deleted, and later reinserted,
the same bin assignment is used). If the bin being inserted into contains more than $h + \tau_h$ balls (including the ball currently being inserted), where $\tau_h$ is a parameter we will set later,
then the ball is labeled as \defn{exposed}.\footnote{One should think of the balls in the $\historyless$ as stacked up inside the bins in the order of arrival, forming an ``iceberg'': balls at height at most $h + \tau_h$ are below the sea level, and all other balls are afloat and exposed. Since balls can be deleted, some exposed balls may sink below sea level, but they will retain the exposed label. In other words, this label doesn't refer to the current location of a ball in the iceberg layout, but rather to whether it was exposed upon insertion.} 
Note that the number of exposed balls at any time is an upper bound for the number of balls that would be in a backyard.

The \historyless takes an intentionally liberal approach to labeling balls as exposed. In particular, it would be natural
to consider exposed balls as residing in a backyard, and thus not counting towards the fills of the bins in the front yard.
On the other hand, in the \historyless, we intentionally count all balls towards the fills of bins. As we shall discuss
in more detail later, this ensures the following useful property:  whether a given ball is exposed or not depends only on the 
set of balls present during the insertion, rather than on the entire history of the system. 

The threshold $\tau_h$ needs to be chosen so that it is small enough to make the resulting hash table space efficient and 
large enough to make the backyard small.  If $h \le \polylog m$, which is the relevant parameter regime for hash tables, 
then will show that 
$$\tau_h = k\cdot (h\log h)^{1/2},$$
for large enough constant $k$, is a good choice.
For the sake of cleaner calculations,
we also define an additional parameter $c = k^2/3$
which we will use in the analysis. 

\paragraph{A simple example. }As an illustrative example, consider a game with $n = 2$ bins, $h = 1$, and $\tau_h = 0$. Consider the following sequence of insertions and deletions, where every ball is assumed to hash to bin 1. First, balls $x_1$ and $x_2$ are inserted, then $x_1$ is deleted, and then $x_3$ is inserted.

Ball $x_1$ is not exposed. Ball $x_2$ is exposed since, when it is inserted, bin $1$ contains $2 > h + \tau_h = 1$ balls (namely, $x_1$ and $x_2$). After $x_1$ is deleted, $x_2$ remains exposed since being exposed is a static property. Then, when ball $x_3$ is inserted, it too is exposed, since the bin still contains $2 \ge  h + \tau_h = 1$ balls (namely, $x_2$ and $x_3$). Note that the fact that $x_2$ was exposed did not affect $x_3$'s exposure.

As discussed above, this definition of exposure is more aggressive than the overflow condition that Iceberg hashing will actually use---in an Iceberg hash table, ball $x_2$ would end up in the backyard, and then ball $x_3$ would end up in the front yard because bin 1 would be empty when $x_3$ was inserted (as $x_2$ would not be in it). This is a case where by stating a slightly stronger result (i.e., bounding exposed balls instead of backyard balls), we obtain a result that is slightly easier to prove.

\paragraph{Bounding the number of exposed balls. }
An adversary that wishes to force a state where almost all balls were exposed should seek to delete non-exposed balls and insert new balls (or reinsert the old ones), hoping that the new ones become exposed. The following lemma shows that it is almost impossible for an oblivious adversary to achieve this goal.  We use the convention that an event happens \defn{with super-high probability (w.s.h.p.) in \boldmath $n$} if it happens with probability at least $1- 2^{-n / \polylog n}$. 

\begin{lem}[Iceberg Lemma]
\label{lem:iceberg-lemma}
As long as $h \le \polylog m$, then at every point in time the number of exposed balls is at most $n / \poly h$ w.s.h.p.\ in $m$.
\end{lem}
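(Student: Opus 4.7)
Plan: My plan is to combine a single-insertion Chernoff bound with a witness-based combinatorial union bound. The crucial simplification afforded by oblivious adversaries is that, conditioned on the strategy, the bins of all balls currently present are i.i.d.\ uniform over $[n]$, so bin loads follow the standard balls-into-bins distribution. Fix any time $t$ and any insertion event occurring at $t$: the target bin's load (excluding the incoming ball) is stochastically dominated by $\binomial(m, 1/n)$, which has mean $h$. A standard Chernoff tail bound gives that this load exceeds $h + \tau_h$ with probability at most $\exp(-\tau_h^2/(3h)) = h^{-c}$, where we use $\tau_h = k(h\log h)^{1/2}$ and $c = k^2/3$. Writing $E(t) = \sum_{x \in S_t} I_x$ with $I_x$ the indicator that $x$'s most recent insertion was exposing, linearity of expectation gives $\E[E(t)] \leq m/h^c = n/h^{c-1}$, which already achieves $n/\poly h$ in expectation.

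To upgrade to a w.s.h.p.\ bound I would use a combinatorial witness argument. If $E(t) \geq k_0$, then there exist $k_0$ balls $x_1, \ldots, x_{k_0} \in S_t$ and, for each $i$, a set $W_i$ of $h + \tau_h$ balls present at $\tau_{x_i}$ (the last insertion time of $x_i$) and hashing to the same bin as $x_i$. I would group the $x_i$'s by bin: if bin $b$ contains $e_b$ currently-exposed balls then $b$ was overfull at some past moment, so $h + \tau_h$ distinct balls must have once hashed to $b$. The probability that any fixed collection of $h + \tau_h$ balls all hash to a particular bin is $n^{-(h+\tau_h)}$, and these events are independent across bins. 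Union bounding over partitions of the $k_0$ exposed balls into bins and over the witness identities, and choosing $k$ in $\tau_h$ large enough, the combinatorial factors are dominated by the probabilistic factor $n^{-(h+\tau_h)}$ per overfull bin, yielding $\Pr[E(t) \geq n/h^{c'}] \leq 2^{-m/\polylog m}$ for any fixed $c' < c$.

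The main obstacle is the historical dependence of the witnesses: balls in $W_i$ need only be present at $\tau_{x_i}$, not at $t$, so the universe of possible witness identities is the full set of balls ever inserted, not just $S_t$. My plan is to bound this universe by the total number of insertions in the history (polynomial in $m$ for any reasonable operational model) and absorb the resulting polylogarithmic factor into the slack of the Chernoff bound via the constant $k$. A secondary subtlety is that different $x_i$'s may share a bin, so their witness sets $W_i$ overlap; bundling by bin, as above, is designed precisely to handle this. The remaining calculation is routine but delicate, requiring one to balance three competing quantities: the size of the historical universe, the number of witness configurations, and the gain $n^{-(h+\tau_h)}$ from forcing coincidences of random hashes.
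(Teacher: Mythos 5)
Your Chernoff-based bound on $\E[X]$ is correct and matches the first moment calculation the paper implicitly performs, but the witness/union-bound step that you propose to upgrade this to a w.s.h.p.\ bound does not work. The core problem is quantitative: the event you are trying to witness is a moderate-deviation event, and a first-moment union bound over witnesses cannot recover Chernoff-type concentration. Concretely, for a single exposed ball, you union bound over $\binom{m}{h+\tau_h}$ choices of a witness set and gain $n^{-(h+\tau_h)}$ per configuration. But
\[
\binom{m}{h+\tau_h}\, n^{-(h+\tau_h)} \le \left(\frac{me}{(h+\tau_h)n}\right)^{h+\tau_h} = \left(\frac{he}{h+\tau_h}\right)^{h+\tau_h} \approx \exp\!\left(h - \tfrac{\tau_h^2}{2h}\right) = \exp\!\left(h - \Theta(\log h)\right),
\]
which is vastly larger than $1$ for $h$ greater than a constant. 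This is the usual failure of a witness union bound for events only a few standard deviations above the mean: the true probability is $h^{-c}$ by Chernoff, but the witness count already swamps the gain. Bundling by bin does not help, since each per-bin factor is $\ge 1$ and they multiply: across $K$ exposed balls in disjoint bins the bound blows up to $\exp(\Theta(hK))$, which is astronomically far from $2^{-m/\polylog m}$.

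There is also a second, more structural gap in the relaxation you make when bundling. You replace ``$h+\tau_h$ balls simultaneously present at $\tau_{x_i}$ all hashing to $\g(x_i)$'' with ``$h+\tau_h$ distinct balls ever hashing to $b$.'' But the latter event is typical, not rare: the relevant history has up to $\Theta(m^2)$ distinct balls, so a given bin sees $\Theta(m^2/n) = \Theta(hm)$ distinct balls hashing to it in expectation, far above the threshold $h+\tau_h$. The difficulty of the Iceberg Lemma is exactly that the adversary controls simultaneity, so the witness sets are constrained to the fixed sets $S_{t_i}$, and reasoning about many such constrained sets at once reintroduces the history-dependence you are trying to route around. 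Finally, a smaller issue: you appeal to a ``polynomial number of insertions in the history,'' but the lemma holds with no bound on history length; the correct observation (used in the paper) is that only the $O(m^2)$ balls present at the insertion times $t_1,\dots,t_{m'}$ matter, regardless of how long the history is. The paper's actual proof avoids witnesses entirely: it conditions on the bin choices $\beta$ of the not-currently-present balls, applies McDiarmid's inequality (with a truncation to bound the Lipschitz constant) to $X\mid\beta$ as a function of the $O(m)$ independent choices $\alpha$, and then separately shows that $\E[X\mid\beta]$ concentrates using linearity of expectation and the fact that bad time steps are exponentially rare. That two-phase McDiarmid structure is what replaces the concentration step you are missing.
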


In this section we assume $h \leq \polylog m$. This means that \emph{w.s.h.p.\ in $n$} is equivalent to \emph{w.s.h.p.\ in $m$}, so for now on we will simply say \emph{w.s.h.p.}~without specifying the variable. For the same reason, $n / \polylog n = m / \polylog m$ everywhere.

We remark that if  $h > \polylog m$, and we were to set $\tau_h = h^{1/2 + \epsilon}$ for $\epsilon>0$, then a result analogous to the Iceberg Lemma (but only w.h.p. rather than w.s.h.p.) would be immediate because standard Chernoff bounds would show that there are no exposed balls, w.h.p. What makes the $h \le \polylog m$ case interesting is that there are (almost certainly) going to be exposed balls, but we want to show that there will not be too many. 

\subsection*{Proving the Iceberg Lemma}

\paragraph{A simple argument to achieve a w.h.p. bound.}
We begin by observing that there is a very simple argument that can be used to prove a w.h.p.~version (rather than a w.s.h.p~version) of the Iceberg Lemma for any sequence of $\poly n$ ball insertions/removals. The argument goes in three steps. First, we partition the bins into groups of size $n^{\epsilon}$, and argue that, w.h.p., each group always has at most $n^{\epsilon} + n^{2\epsilon/3}$ balls that hash to it at a time; the rest of the argument conditions on a fixed outcome of which group each ball hashes to. Second, we use linearity of expectation to bound the expected number of exposed balls for each group of bins. Finally, we use the fact that, once we have conditioned on which balls hash to which groups, the numbers $r_1, r_2, \ldots, r_{n^{1 - \epsilon}}$ of exposed balls that are in each of the $n^{1 - \epsilon}$ groups are independent random variables with values in the range $[0, O(n^{\epsilon})]$; applying Hoeffding's inequality, we can conclude that the total number $X = \sum_i r_i$ of exposed balls is tightly concentrated around its mean, w.h.p.

This basic technique of breaking the bins into groups, conditioning on how many balls hash to each group, and analyzing the groups independently, is a classic approach for handling dependencies in balls-and-bins games (and has been used, for example, to handle limited independence in hash tables that require high independence \cite{arbitman2010backyard, liu2020succinct} and to construct quotient-friendly families of permutation hash functions \cite{demaine2005dynamic}). The limitation of the technique, however, is that it achieves much weaker probability bounds than the super-high-probability bounds that we want for the Iceberg Lemma.

In order to achieve tight probabilistic bounds, we will need to take a more sophisticated approach that analyzes all of the balls/bins together, and carefully handles the subtle interdependencies between operations in the operation sequence. By allowing for an unmanaged backyard, while also establishing super-high probability bounds, the Iceberg Lemma will allow for us to achieve both stability and super-high-probability guarantees in the Iceberg hash table. It turns out that the strong probability bounds offered by the Iceberg Lemma also enable applications of the lemma to other areas in data structures---we outline a number of such applications in a subsequent paper \cite{bender2023tiny}.

\paragraph{Notation.}
Before we discuss the w.s.h.p.~analysis, let us take a moment to define some notation. Let $t$ be any fixed time step. Let $A = \{a_1, \dots, a_{m'}\}$ be the set of balls present in the system at time $t$. Let $t_i$ be time where $a_i$ was most recently inserted, and let $T = \{t_1, \dots, t_{m'}\}$. 
Let $B$ be the set of balls other than those in $A$ that are present at any $t_i \in T$ and 
denote the balls in $B$ by $b_1, b_2, \ldots, b_{|B|}$.
Observe that $m' \leq m$, since there are there are at most $m$ balls present at time $t$, and that $|B| \leq O(m^2)$, since there are at most $m$ balls present during each time $t_1, t_2, \ldots, t_{m'}$. 

Let $\alpha = (\alpha_1, \dots, \alpha_{m'})$ be the bin choices of the $a_i$'s, and let $\beta = (\beta_1, \dots, \beta_{|B|})$ be the bin choices of the $b_i$'s. Finally, let $X_i$ be the indicator variable that is $1$ exactly if ball $a_i$ is exposed at its insertion time $t_i$. Then $X = \sum_i X_i$ is the number of exposed balls at time $t$. We want to prove that $X < n / \poly h$ w.s.h.p.

\paragraph{The difficulty of performing a tight probabilistic analysis on \boldmath $X$.}
Roughly speaking, the main challenge in the analysis stems from the fact that each ball $a_i$ may be present during an arbitrary subset of past time steps in $T$, since balls can be deleted and reinserted. This means that the state of the system at steps before $t_i$ may already depend on the randomness of $a_i$. 

Since $X_i$ depends on the balls present at time $t_i$, it follows that $X_i$ depends on the bin choice of $a_j$ for every $j$ such that $t_j < t_i$, and thus $X_i$ depends on $X_j$. On the other hand, because $a_i$ may have been present at time $t_j$ (before being removed and subsequently reinserted at $t_i$), $X_j$ may also depend on $X_i$. In particular, this latter type of dependency implies that we cannot treat $a_i$ as choosing a bin uniformly and independently at random at time $t_i$.

Given that the $X_i$'s are not independent, it is natural to hope that they might nonetheless be stochastically dominated by a sum of independent $0$-$1$ random variables $Y_1, \dots, Y_{m'}$. In particular, one can show that, w.s.h.p., at every time step $t_1, \dots, t_{m'}$ there is at most a small fraction $p$ of bins that have load above $h + \tau_h$. This suggests that, perhaps, the $X_i$'s should be stochastically dominated by independent $Y_i$'s each with mean $p$.

Perhaps surprisingly, this stochastic-dominance approach does not work (even w.h.p.), as one can see with the following example, which highlights some of the subtle dependencies between $X_i$'s. 
Consider the basic setting in which $m = 2$ and balls are labeled as exposed if they land on top of another ball (note that, in this case, we have $p = 1/n$). The adversary performs the following sequence of operations on two balls, $a$ and $b$: (1) insert $a$, (2) insert $b$, (3) delete $a$, (4) insert $a$. Let $X_1$ indicate whether $b$ is exposed at step 2, and let $X_2$ indicate whether $a$ is exposed at step 4. Both of $X_1$ and $X_2$ are $1$ exactly when $a$ and $b$ choose the same bin, and thus $X_1 = X_2$ deterministically. Since $\Pr[X_1 + X_2 = 2] = p$, the random variables are not dominated by independent random variables $Y_1, Y_2$ with mean $p$. This example can be  extended to an arbitrary $m$ and threshold $h + \tau_h$, by  adding $m - 2$ redundant balls before step 1, and then replacing them with another $m - 2$ balls before step 3; note that, in general, this does not result in $X_1 = X_2$, but instead in a subtle positive dependence between $X_1$ and $X_2$.
\footnote{We caution that these dependencies can be quite tricky to reason about. For example, in past work there are several examples where authors attempted to use an unmanaged backyard \cite{Bercea2020Dictionary, BenderFaGo18}, and either incorrectly assumed independence between $X_i$s \cite{BenderFaGo18}, or attempted to perform an erroneous stochastic-dominance argument as described above \cite{Bercea2020Dictionary}---fortunately, this issue is not a big deal in either case, since in both cases it is straightforward to manage the backyard in question in order to fully recover the claimed results.}

\paragraph{Using McDiarmid's Inequality.}
In order to prove the Iceberg lemma, we first discuss a useful inequality: 

\begin{thm}[McDiarmid's inequality \cite{McDiarmid89}]
Let $X_1, \dots, X_k$ be independent random variables taking values from an arbitrary universe $U$. Let $F: U^k \to \mathbb{R}$. Suppose $F$ satisfies the following \defn{Lipschitz condition}: there exists a real number $\ell$ (the \defn{Lipschitz bound}), such that for all $i \in [k]$, $x_1, \dots, x_n, \hat{x_i} \in U$,
\[|F(x_1, \dots, x_i, \dots, x_k) - F(x_1, \dots, \hat{x_i}, \dots, x_k)| \leq \ell.\] 
Let $X = F(X_1, \dots, X_k)$. Then, for all $b > 0$,
\[\prob{X \geq \expect{X} + b} \leq \exp\paren{-\frac{2b^2}{k\ell^2}}.\]
In particular, if $\ell \le \polylog k$, then we can conclude that $X < \expect{X} + k / \polylog k$ w.s.h.p.\ in $k$.\footnote{Indeed, setting $\ell \le \polylog k$ and $b = k / \polylog k$ gives a probability of the form $\exp\paren{-\frac{2b^2}{k\ell^2}} = \exp\paren{-\frac{k^2/\polylog^2 k}{k\polylog k}} = \exp\paren{-k / \polylog k}$.}
\label{thm:mc}
\end{thm}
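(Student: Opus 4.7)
The plan is to prove the inequality via the classical Doob-martingale approach combined with a conditional-range version of Hoeffding's lemma. Define the filtration $\calF_i = \sigma(X_1, \dots, X_i)$ with $\calF_0$ trivial, and set the Doob martingale $Z_i = \expect{X \mid \calF_i}$ for $i = 0, 1, \dots, k$, so that $Z_0 = \expect{X}$ and $Z_k = X$. Then $X - \expect{X} = \sum_{i=1}^k (Z_i - Z_{i-1})$ is a sum of martingale differences, and the goal reduces to showing that this sum concentrates at the claimed rate.

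The main step, and the only place where the Lipschitz hypothesis is used, is to show that, conditional on $\calF_{i-1}$, the martingale difference $Z_i - Z_{i-1}$ takes values in an interval of length at most $\ell$ (not $2\ell$). For fixed values $x_1, \dots, x_{i-1}$, define
\[g(y) = \expect{F(x_1, \dots, x_{i-1}, y, X_{i+1}, \dots, X_k)}.\]
Because $X_{i+1}, \dots, X_k$ are independent of $X_i$, this function satisfies $Z_i = g(X_i)$ and $Z_{i-1} = \expect{g(X_i)}$ (conditional on $\calF_{i-1}$). The Lipschitz condition, applied inside the expectation defining $g$, yields $|g(y) - g(\hat y)| \leq \ell$ for all $y, \hat y \in U$, so the range of $g$ is at most $\ell$. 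Hence $Z_i - Z_{i-1}$ lies almost surely in an interval of length $\ell$, conditional on $\calF_{i-1}$.

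Once the conditional bounded-range property is in hand, I would invoke Hoeffding's lemma: for a zero-mean random variable supported on an interval of length $L$, the moment generating function is at most $\exp(s^2 L^2 / 8)$. Applied conditionally, this gives
\[\expect{e^{s(Z_i - Z_{i-1})} \mid \calF_{i-1}} \leq \exp\paren{s^2 \ell^2 / 8}.\]
Iterating through the tower property, $\expect{e^{s(X - \expect{X})}} \leq \exp(k s^2 \ell^2 / 8)$. A Markov/Chernoff step then yields $\prob{X \geq \expect{X} + b} \leq \exp(-sb + k s^2 \ell^2 / 8)$, and optimizing with $s = 4b/(k\ell^2)$ produces the stated bound $\exp(-2b^2/(k\ell^2))$.

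The main obstacle is ensuring the factor of $2$ in the exponent; a na\"ive use of $|Z_i - Z_{i-1}| \leq \ell$ together with the basic Azuma bound only yields $\exp(-b^2/(2k\ell^2))$, which is weaker by a factor of $4$ in the exponent. Extracting the sharper constant requires exactly the conditional-range observation above (length $\ell$ rather than $2\ell$), which in turn relies on independence of $X_i$ from the remaining coordinates. The final ``in particular'' clause follows by substituting $b = k/\polylog k$: then $2b^2/(k\ell^2) = 2k / (\polylog k \cdot \ell^2)^{\phantom{1}}$, which is $k/\polylog k$ whenever $\ell \le \polylog k$, giving the desired w.s.h.p.~statement.
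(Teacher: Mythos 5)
Your proof is correct, and the constant is handled properly: the conditional-range observation (interval of length $\ell$, via independence of $X_i$ from the later coordinates) is exactly what upgrades the naive Azuma bound to the stated $\exp(-2b^2/(k\ell^2))$, and the optimization $s = 4b/(k\ell^2)$ checks out, as does the ``in particular'' clause. The paper itself does not prove this statement---it cites McDiarmid's inequality as a known result---and your argument is the standard bounded-differences proof from the cited literature, so there is nothing to compare beyond noting that you have reproduced it faithfully.
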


What happens if we try to apply McDiarmid's inequality to the random variable $X$ as a function $F$ of the $\alpha_i$'s and $\beta_i$'s? 
This comes with two issues: the first is that there are up to $\Theta(m^2)$ different $\beta_i$'s, meaning that $k$ is $\Theta(m^2)$ (which 
is too large to be useful); and the second is that the Lipschitz condition ends up not being satisfied (although, as we shall see, it is
``close'' to satisfied). 

\paragraph{A two-phased analysis.}
To enable the use of McDiarmid's inequality, we will break the analysis into two phases. In the first phase, we will consider an arbitrary $\beta$ and analyze the random variable $X \mid \beta$, that is, the random variable $X$ in which we are using a predetermined $\beta$ (so the only remaining randomness is in $\alpha$). Since $\alpha$ has dimension only $O(m)$, (with a few additional ideas) we can use McDiarmid's inequality to show that, for any fixed $\beta$, $X \mid \beta$ is tightly concentrated around its mean $\E[X \mid \beta]$. 

The second phase of the analysis will then bound $\E[X \mid \beta]$ as a random variable that depends on $\beta$'s randomness. 
Although $\E[X \mid \beta]$ is a random variable (as a function of $\beta$), the fact that it is also an expectation (as a function of $\alpha $)
will allow for us to use linearity of expectation in order to avoid any complications having to do with dependencies across time. 
Leveraging this, we will show that $\E[X \mid \beta]$ is tightly concentrated around $\E[X]$. 

Combining together the two phases of the analysis, we will finally be able to conclude that $X$ is tightly concentrated around $\E[X]$.
We now perform the first phase of the analysis.
\begin{clm}
\label{clm:exposed-mcdiarmid}
For any value of $\beta$, the random variable $X \mid \beta$ satisfies
$$X \mid \beta \leq \expect{X \mid \beta} + n/\polylog n$$ w.s.h.p.
\end{clm}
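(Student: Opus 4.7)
The plan is to apply McDiarmid's inequality to $F(\alpha_1, \ldots, \alpha_{m'}) := X \mid \beta$, viewed as a function of the $m' \leq m$ independent uniform random variables $\alpha_1, \ldots, \alpha_{m'}$ (with $\beta$ held fixed). If we had a per-coordinate Lipschitz bound of $\ell = \polylog n$, then McDiarmid with deviation $b = n/\polylog n$ would yield failure probability $\exp(-\Omega(b^2/(m\ell^2))) = \exp(-\Omega(n/\polylog n))$, which is exactly the w.s.h.p.\ bound we want.

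First I would bound the effect of flipping $\alpha_i$ from some bin $v$ to another bin $v'$. The only $X_j$'s that can possibly change are those with $\alpha_j \in \{v, v'\}$ (since only these bins see a change in fill at times when $a_i$ is present), plus $X_i$ itself. Thus the change in $F$ is at most $1 + |\{j : \alpha_j = v\}| + |\{j : \alpha_j = v'\}|$, the combined $\alpha$-load of bins $v$ and $v'$. This quantity can in principle be as large as $m'$, which is exactly why, as the paper notes, the Lipschitz condition only almost holds.

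To salvage the argument, I would pass to a modified function $\tilde{F}$ that labels $a_i$ as exposed whenever either the original rule fires \emph{or} the total $\alpha$-load of bin $\alpha_i$ exceeds a threshold $L = \polylog n$. By construction $F \leq \tilde{F}$. The key point is that $\tilde{F}$ now enjoys a deterministic Lipschitz bound of $O(L)$: any bin whose $\alpha$-load stays $\geq L+2$ through the flip has all its balls auto-labeled exposed in both worlds and contributes zero flips; only bins whose load is in $[0, L+1]$ can contribute flips, and each such bin contributes at most $L+1$ of them. I expect the main obstacle to be working out this case analysis carefully, since the original indicator $X_j$ also depends subtly on whether $a_i$ happened to be present at $t_j$, and one must check that these ``timing'' flips are subsumed by the $O(L)$ bound in each regime.

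Finally, McDiarmid applied to $\tilde{F}$ with $\ell = O(L) = \polylog n$ and $k = m' \leq m$ yields $\tilde{F} \leq \E[\tilde{F}] + n/\polylog n$ w.s.h.p. The gap $\E[\tilde{F}] - \E[F]$ is at most $m'$ times the probability that a fixed ball's bin has $\alpha$-load exceeding $L$; a standard Chernoff bound on a $\binomial(m'-1, 1/n)$ random variable (with mean at most $h \leq \polylog n$) makes this $o(1)$ for $L$ a sufficiently large polylog. Chaining the inequalities, $F \leq \tilde{F} \leq \E[F] + n/\polylog n + o(1) = \E[X \mid \beta] + n/\polylog n$ w.s.h.p., which is the claim.
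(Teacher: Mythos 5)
Your argument is correct and uses the same core strategy as the paper's proof --- replace $X \mid \beta$ by a nearby function with a $\polylog$ Lipschitz bound, apply McDiarmid, then account for the discrepancy --- but the two truncations point in opposite directions. The paper truncates \emph{downward}: it sets $X' = \sum_i \min(Y_i, \log^q m)$ where $Y_i$ is the number of exposed $A$-balls in bin $i$, so $X' \le X$ (hence $\E[X'\mid\beta] \le \E[X\mid\beta]$ comes for free), and then bounds the random gap $X - X' \le \sum_i W_i$ to be at most $n/\polylog n$ w.s.h.p.\ via a second McDiarmid application. You instead modify \emph{upward}: $\tilde F \ge F$ comes for free, and what remains is the gap $\E[\tilde F] - \E[F]$, which is a single number that you handle by linearity plus a per-bin Chernoff bound. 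Your Lipschitz case analysis for $\tilde F$ does go through as you anticipated: changing $\alpha_i$ from $v$ to $v'$ only affects indicators $\tilde X_j$ with $\alpha_j \in \{v,v'\}$ or $j = i$; in either of the bins $v, v'$, if its $\alpha$-load is at least $L+2$ on both sides of the flip then every ball there is auto-exposed in both worlds and contributes nothing, and otherwise the bin contains at most $L+1$ of the $a_j$'s, so the total change is $O(L)$. The ``timing'' subtlety you flag (whether $a_i$ was present at time $t_j$) is harmless precisely because it can only affect balls in those two bins. One small note: you do not actually need $\E[\tilde F] - \E[F] = o(1)$; any $O(n/\polylog n)$ bound suffices, so the Chernoff step has plenty of slack. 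On balance your route uses one fewer concentration inequality (a deterministic expectation gap replaces the paper's second McDiarmid) and is a valid, arguably slightly cleaner, alternative.
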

\begin{proof}

How much is the value of $X$ affected when a single $\alpha_i$ changes? The answer is at most the number of balls present at time $t$ that chose either the old or the new value of $\alpha_i$.\footnote{At first glance, the effect of changing a single $\alpha_i$, that is, changing the bin to which ball $a_i$ hashes, would seem to only change $X$ by at most $1$.  But in fact, the effect can be much larger.  Suppose that the new choice of $\alpha_i$ is a bin that has $h+\tau_h -1$ balls in it.  Placing ball $a_i$ in that bin now fills the bin.  Suppose now that there is a sequence of interleaved insertions of new balls and deletions of unexposed balls from this bin.  With ball $a_i$ in this bin, all the new balls are exposed, and had $a_i$ not been in the bin, none of the new balls would have been exposed (because the alternating insertions and deletion would have kept the bin just under capacity).  So the worst-case effect on $X$ of changing $\alpha_i$ could be as large as $m$.}
Unfortunately, there could be as many as $m$ such balls, meaning we cannot directly apply McDiarmid's inequality.

Define $Y_i = \sum_{\alpha_j = i} X_j$ to be the number of exposed balls in bin $i$ at time $t$, so $X = \sum_i Y_i$. Take $q$ to be a sufficiently large constant and consider 
\[X' = \sum_i \min\left(Y_i, \log^q m\right),\]
that is, a truncated version of $X$ where each bin can contribute at most $\polylog m$ balls. The variable $X'\mid\beta$ is a function of $\alpha$ with Lipschitz bound $\ell = \log^q m$. Hence, by McDiarmid's inequality, $X'\mid\beta \leq \expect{X'\mid\beta} + n / \polylog n$, w.s.h.p..

To complete the proof, we show that, w.s.h.p.,
\begin{equation}
    X \mid \beta \le X' \mid \beta + n / \polylog n.
    \label{eq:W}
\end{equation}
In particular, this would mean that, w.s.h.p.,
\begin{align*}
 X \mid \beta & \le X'\mid\beta  + n / \polylog n \\
    &\leq \expect{X'\mid\beta} + n / \polylog n \\
    & \leq \expect{X\mid\beta} + n / \polylog n.
\end{align*}
We now prove \eqref{eq:W}.
For each bin $ i $, define $W_i$ to be 
$$W_i = \max\left(0, |\{j \mid \alpha_j = i\}| - \log^q m\right),$$
that is, if more than $\log^q m$ balls $\{a_j\}$ land in bin $i$, then $W_i$ counts the number of excess balls. By design, $X - X' \le \sum_i W_i$ deterministically. 
Notice, however, that $\sum_i W_i$ is a function of the $m$ independent random variables $\alpha = \{\alpha_i\}$ with Lipschitz bound $\ell = 1$, and that $\E[\sum_i W_i] = o(1)$ (since by a Chernoff bound each $W_i = 0$ w.h.p.). Thus we can apply McDiarmid's inequality 
to deduce that $\Pr[\sum_i W_i \mid \beta > n / \polylog n] \le 2^{-n / \polylog n}$, completing the proof.
\end{proof}

We next turn to the second phase of the analysis, which is to prove a concentration bound on the random variable $\E[X \mid \beta]$ (whose outcome depends only on the randomness in $\beta$). 
Say that a bin is \defn{heavy} (at a given point in time) if it contains at least $h + \tau_h$ balls. 
Say that a time step is \defn{bad} if there are more than $2n / h^c$ heavy bins at that step, and otherwise we say it's \defn{good}. The reason for these names is that, if a ball is inserted during a good step, then its probability of being exposed is at most $\frac{2n / h^c}{n} = 2 / h^{c}$.

Let $B_i$ be the indicator random variable that is $1$ if and only if $t_i$ is a bad step, and let $B = \sum_i B_i$. 

\begin{clm}
\label{clm:expect-bad-steps}
For any choice of $\beta$, we deterministically have that
$$\expect{X \mid \beta} \leq 2n/h^{c - 1} + \expect{B \mid \beta}.$$
\end{clm}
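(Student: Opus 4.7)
The plan is to split each $X_i$ according to whether the insertion time $t_i$ is good or bad, and then sum over $i$. The first thing I would set up is this basic observation: conditioned on $\beta$ together with the bin choices $\alpha_{-i} = (\alpha_1, \dots, \alpha_{i-1}, \alpha_{i+1}, \dots, \alpha_{m'})$, the set $H_i$ of heavy bins just before $t_i$ is completely determined. This is because the adversary is oblivious, so the set of balls present just before $t_i$ is a deterministic function of the operation sequence; and $a_i$ itself is absent just before $t_i$ (since $t_i$ is defined as the most recent insertion of $a_i$), so the bin assignments of all balls present at that moment are entirely specified by $\beta$ and $\alpha_{-i}$. In particular, the indicator $B_i$ is a function of $(\beta, \alpha_{-i})$ alone and does not depend on $\alpha_i$.

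Next, since $\alpha_i$ is chosen uniformly from $[n]$ independently of $(\beta, \alpha_{-i})$, and since a ball is exposed only if it lands in a bin that is heavy at its arrival, I would write
\[\Pr[X_i = 1 \mid \beta, \alpha_{-i}] \le \frac{|H_i|}{n}.\]
When $t_i$ is good (so $B_i = 0$), the right-hand side is at most $2/h^c$ by definition of a good step; when $t_i$ is bad (so $B_i = 1$), I fall back on the trivial bound $X_i \le 1$. Combining these two cases into a single inequality gives
\[\E[X_i \mid \beta, \alpha_{-i}] \le \frac{2}{h^c} + B_i.\]

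Finally, I would take expectations over $\alpha_{-i}$ given $\beta$ and sum over $i = 1, \dots, m'$. Using $m' \le m = hn$, the first term aggregates to $m' \cdot 2/h^c \le 2n/h^{c-1}$, while the second sums to $\E[B \mid \beta]$, giving the claimed inequality. The main obstacle is really just the first step: one must confirm that the heavy-bin set at $t_i$, and hence $B_i$, is measurable with respect to $(\beta, \alpha_{-i})$ alone, because otherwise the uniformity of $\alpha_i$ cannot be invoked cleanly. Once that independence is established, the rest is a direct linearity-of-expectation computation, and notably no concentration inequality is needed here (unlike in Claim~\ref{clm:exposed-mcdiarmid}), since the claim is a deterministic inequality between conditional expectations.
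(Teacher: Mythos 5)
Your proof is correct and follows essentially the same route as the paper: decompose each $X_i$ according to whether $t_i$ is good or bad, bound the conditional exposure probability on good steps by $2/h^c$, and sum by linearity. The one thing you do more explicitly than the paper is condition on $\alpha_{-i}$ to justify that $B_i$ and $H_i$ are independent of $\alpha_i$---a helpful elaboration of the step the paper compresses into the one-line observation that ``if a ball is inserted during a good step, then its probability of being exposed is at most $2/h^c$.''
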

\begin{proof}
Let $\calF_i$ be the event that $a_i$ is labeled exposed at $t_i$. By linearity of expectation, we have

$$\expect{X \mid \beta}  = \sum_i \expect{X_i \mid \beta} = \sum_i \prob{\calF_i \mid \beta}.$$
By considering whether each step is good or bad, we can decompose this as
    \begin{align*}
 &\sum_i \paren{\prob{\text{$t_i$ good} \mid \beta} \cdot \prob{\calF_i \mid \beta, \text{$t_i$ good}} + \prob{\text{$t_i$ bad} \mid \beta} \cdot \prob{\calF_i \mid \beta, \text{$t_i$ bad}}}\\
\leq & \sum_i \paren{\prob{\calF_i \mid \beta, \text{$t_i$ good}} + \prob{\text{$t_i$ bad} \mid \beta}}\\
\leq & \sum_i \paren{2/h^c + \prob{\text{$t_i$ bad} \mid \beta}} \\
= & 2n/h^{c - 1} + \expect{B \mid \beta}. 
\end{align*}
\end{proof}

The following claim shows that w.s.h.p. there are no bad steps.

\begin{clm}
\label{clm:good-step}
Any fixed time step is good w.s.h.p. (with probability taken over both $\alpha$ and $\beta$).
\end{clm}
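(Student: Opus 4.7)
The plan is to fix a time step $s$ and show directly that at step $s$ at most $2n/h^c$ bins are heavy, with super-high probability, by reducing the problem to a static balls-into-bins analysis. Since the adversary is oblivious, the set $S$ of balls present at step $s$ is a deterministic function of the operation sequence, with $|S| \le m = hn$, and the balls in $S$ are distinct, so their bin choices (the corresponding entries of $\alpha$ and $\beta$) are independent and uniform over $[n]$. Hence the load vector at $s$ has the same distribution as the loads when $|S|$ balls are thrown independently and uniformly into $n$ bins.

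First I would bound the expected number of heavy bins. For any fixed bin $i$, the load $L_i$ at step $s$ is $\text{Binomial}(|S|, 1/n)$ with mean $\mu \le h$. A Bernstein-form Chernoff bound applied to the deviation $t = h + \tau_h - \mu \ge \tau_h = k(h\log h)^{1/2}$ gives
\[
\Pr[L_i \ge h + \tau_h] \;\le\; \exp\!\left(-\frac{t^2}{2\mu + 2t/3}\right) \;\le\; h^{-\Omega(k^2)},
\]
where the implicit constant approaches $1/2$ as $h \to \infty$ (since $\tau_h = o(h)$). Choosing the free constant $k$ sufficiently large (recall $c = k^2/3$), this single-bin probability becomes at most $1/h^{c+1}$, and by linearity of expectation $\E[F] \le n/h^{c+1}$, where $F$ denotes the number of heavy bins at step $s$.

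Next I would concentrate $F$ around its mean via McDiarmid's inequality (Theorem~\ref{thm:mc}), viewing $F$ as a function of the $|S| \le hn$ independent bin-choice variables. If one ball's bin is changed from $i$ to $j$, only $L_i$ and $L_j$ change (each by $\pm 1$), so $F$ changes by at most $\ell = 2$. Taking deviation $b = n/h^c$, McDiarmid's inequality yields
\[
\Pr\!\left[F \ge \E[F] + n/h^c\right] \;\le\; \exp\!\left(-\frac{2(n/h^c)^2}{|S|\cdot 4}\right) \;\le\; \exp\!\left(-\Omega(n/h^{2c+1})\right).
\]
Since $h \le \polylog m$, we have $h^{2c+1} \le \polylog n$, so this bound is $2^{-n/\polylog n}$. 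Combining the two estimates, w.s.h.p.\ $F \le \E[F] + n/h^c \le n/h^{c+1} + n/h^c \le 2n/h^c$, so step $s$ is good.

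The main subtle point is the Lipschitz constant: one might naively take $\ell = 1$, but changing a single hash can simultaneously demote one bin from heavy and promote another, yielding $\ell = 2$; this only affects constants, but must be accounted for. A secondary care point is keeping the per-bin Chernoff bound strictly tighter than $1/h^c$ (to leave headroom for the McDiarmid deviation), which is why $k$ is taken slightly larger than the minimum; since $k$ is a free constant fixed at the outset, this is harmless.
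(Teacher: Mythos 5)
Your proof is correct, but it takes a genuinely different concentration route from the paper. The paper bounds the per-bin heaviness probability by $h^{-c}$ with a multiplicative Chernoff bound (using $\varepsilon = (3c\log h)^{1/2}/h^{1/2}$), and then concentrates the count $Z$ of heavy bins around its mean via a Chernoff bound applied to the \emph{negatively associated} indicator variables $Z_j$. You instead concentrate $F$ via McDiarmid's bounded-differences inequality (viewing $F$ as a function of the $|S| \le hn$ independent bin-choice variables, with Lipschitz constant $\ell = 2$), which is the same tool the paper already invokes elsewhere and therefore keeps the argument more self-contained, at the cost of not being a multiplicative bound. Your observation that $\ell = 2$ (not $1$) because one re-hash can simultaneously promote one bin and demote another is a good catch, though it only affects constants. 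One small over-caution: you sharpen the per-bin bound to $h^{-(c+1)}$ to ``leave headroom,'' but this is unnecessary---even the paper's bound $\E[F] \le n/h^c$ combined with your additive deviation $b = n/h^c$ would yield $F \le 2n/h^c$ w.s.h.p., which is exactly the definition of a good step. Both concentration tools give the same $2^{-n/\polylog n}$ failure probability since $h^{O(1)} \le \polylog n$.
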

\begin{proof}
Let $L_j$ be the load of bin $j$ at the fixed step. This is a binomial random variable with mean at most $h$. Let $\eps = (3c \log h)^{1/2} / h^{1/2}$. Then,
\begin{align*}
    \prob{L_j \geq h + \tau_h} &= \prob{L_j \geq h + (3c \log h)^{1/2} h^{1/2}} \tag{by the choice of $\tau_h$}\\
    &= \prob{L_j \geq (1 + \eps)h}\\
    &\leq \prob{L_j \geq (1 + \eps)\expect{L_j}} \tag{as $\expect{L_j} \leq h$}\\
    &\leq \exp\paren{-\frac{\eps^2 h}{3}} \tag{by a Chernoff bound, and since $\expect{L_j} \leq h$}\\
    &= \exp(-c \log h)\\
    &=h^{-c} \numberthis \label{eq:prob-heavy}.
\end{align*}

Let $Z_i$ be the indicator variable that is $1$ if and only if when $L_j \geq h + \tau_h$. Then, $Z = \sum_j Z_j$ is the number of heavy bins. By linearity of expectation and \Cref{eq:prob-heavy}, $\expect{Z} \leq n/h^c$. Since $n/h^c \ge n / \polylog n$, and since the $Z_i$s are negatively associated, a Chernoff bound implies that $Z \leq 2n/h^c$ w.s.h.p.
\end{proof}

We now use \Cref{clm:good-step} to bound $\expect{B \mid \beta}$, as follows.

\begin{clm}
\label{clm:bad-steps}
$\expect{B \mid \beta} = 0$ w.s.h.p. (with randomness taken over $\beta$).
\end{clm}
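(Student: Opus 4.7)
The plan is to start from Claim \ref{clm:good-step} (each fixed step is good w.s.h.p.\ over the joint randomness of $\alpha$ and $\beta$), union-bound over the time steps in $T$ to get a w.s.h.p.\ bound on $B$ itself, and then convert this joint statement into a statement about the conditional expectation $\expect{B \mid \beta}$ via a Markov-type argument.

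First, I would observe that $|T| = m' \le m = hn \le n\polylog n$, so a union bound over all $m'$ indices $i$ applied to Claim \ref{clm:good-step} gives
\[
  \prob{B \ge 1} \;\le\; m \cdot 2^{-n/\polylog n} \;=\; 2^{-n/\polylog n},
\]
where the probability is over both $\alpha$ and $\beta$. Since $B$ is deterministically bounded by $m \le n \polylog n$, this yields
\[
  \expect{B} \;\le\; m \cdot 2^{-n/\polylog n} \;=\; 2^{-n/\polylog n}.
\]

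Next, to pass from the unconditional bound on $\expect{B}$ to a bound that holds w.s.h.p.\ over $\beta$, I would use the tower law $\expect{B} = \E_\beta\!\left[\expect{B \mid \beta}\right]$ together with Markov's inequality applied to the nonnegative random variable $\expect{B \mid \beta}$: for any threshold $s > 0$,
\[
  \Pr_\beta\!\left[\,\expect{B \mid \beta} > s\,\right] \;\le\; \expect{B}/s.
\]
Choosing $s = 2^{-n/\polylog n}$ with a polylog exponent about half as large as the one in the bound on $\expect{B}$ yields both $\Pr_\beta[\expect{B \mid \beta} > s] \le 2^{-n/\polylog n}$ and $s \le 2^{-n/\polylog n}$, so w.s.h.p.\ over $\beta$ the value of $\expect{B \mid \beta}$ is negligibly small (well below any $n/\polylog n$ scale), which is what the claim means by ``$=0$ w.s.h.p.''~in the context of the subsequent use in Claim \ref{clm:expect-bad-steps}, where it is added to the already dominant $2n/h^{c-1}$ term.

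The main subtlety, rather than an obstacle, is interpretive: $\expect{B\mid\beta}$ is a real-valued function of $\beta$ that need not vanish exactly, but the Markov argument ensures it is super-polynomially small in $n$ for all but a super-polynomially small fraction of $\beta$'s; this is all that is needed for the downstream bound $\expect{X \mid \beta} \le 2n/h^{c-1} + \expect{B \mid \beta}$ from Claim \ref{clm:expect-bad-steps} to collapse to $\expect{X \mid \beta} \le 3n/h^{c-1}$ w.s.h.p., which then combines with Claim \ref{clm:exposed-mcdiarmid} to finish the proof of the Iceberg Lemma.
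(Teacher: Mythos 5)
Your proof is correct and follows essentially the same route as the paper: bound $\expect{B}$ using Claim~\ref{clm:good-step}, then pass to $\expect{B\mid\beta}$ via the tower law and Markov's inequality. The only differences are cosmetic. First, the paper bounds $\expect{B}$ directly by linearity of expectation ($\expect{B}=\sum_i \Pr[B_i=1]$), while you take a small detour through $\Pr[B\geq 1]$ (union bound) combined with the deterministic bound $B\leq m$; this actually gives you $\expect{B}\leq m^2\cdot 2^{-n/\polylog n}$ rather than the $m\cdot 2^{-n/\polylog n}$ you wrote, but since $m\leq \poly n$ this makes no difference. Second, the paper applies Markov at threshold $1$, which together with the fact that $B$ is integer-valued is what licenses the slightly loose phrasing ``$\expect{B\mid\beta}=0$'' in the claim statement; you apply Markov at a sub-polynomial threshold $s$, which gives a quantitatively stronger bound but requires the interpretive remark you make at the end. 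Your observation that ``$=0$'' should really be read as ``negligibly small, which is all the downstream use needs'' is sound and, if anything, a bit more careful than the paper's own wording.
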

\begin{proof}
We have that
\begin{align*}
    \expect{\expect{B \mid \beta}} &= \expect{B} \tag{by the tower rule}\\
    &= \sum_i \expect{B_i} \tag{by linearity}\\
    &\leq \sum_i 1/ 2^{n / \polylog n} \tag{by \Cref{clm:good-step}}\\
    &= 1/2^{n / \polylog n}.
\end{align*}
Thus, by Markov's inequality, $\prob{\expect{B \mid \beta} \geq 1} \leq \expect{\expect{B \mid \beta}} \leq 1 / 2^{n / \polylog n}$.
\end{proof}

Finally we put the two phases together to complete the proof.
\begin{proof}[Proof of Lemma \ref{lem:iceberg-lemma}]
We have, w.s.h.p.,
\begin{align*}
    X &\leq \expect{X \mid \beta} + n/\polylog n \tag{by \Cref{clm:exposed-mcdiarmid}}\\
    &\leq 2n/h^{c - 1} + \expect{B \mid \beta} + n/\polylog n \tag{by \Cref{clm:expect-bad-steps}}\\
    &\leq 2n/h^{c - 1} + n/\polylog n \tag{by \Cref{clm:bad-steps}}\\
    &\leq n / \poly h.
\end{align*}
\end{proof}

 \section{Basic Iceberg Hashing}
\label{sec:iceberg-hashing}

In this section, we consider the problem of constructing a space-efficient hash table
with constant-time operations, referential stablility, and nearly optimal cache
behavior. Our solution is the most basic version of an Iceberg hash table; in subsequent sections, we will show how to modify the table to achieve stronger guarantees.

The lemmas in this section will assume access to constant-time fully random hash
functions; we discuss the use of explicit families of hash functions at the end of the section.

\paragraph{The structure of an Iceberg hash table.}
Let $N$ be an upper bound on the current number of keys $n$ in the table, let $U$ be the
universe of keys, and let $h$ be a parameter satisfying $h \le O(\log
N / \log \log N)$; we call $h$ the \defn{average-bin-fill parameter}.  Let
$\calT$ be an arbitrary hash table implementation that supports
constant-time operations (w.h.p.) and load factor at least
$\frac{1}{\poly h}$ (i.e, the table can store $n$ records in space $n \poly h$). We can further assume without loss of generality that $\mathcal{T}$ is stable, since any hash table can be made stable by adding an extra level of indirection to the records (at the cost of a constant-factor loss in load factor and an extra cache miss per operation).  For now, the specifics of $\calT$'s implementation will be unimportant---this will change later on, in Section \ref{sec:prob}, when we show how to achieve w.s.h.p.~guarantees. 

The Iceberg hash table consists of a front yard and a backyard.  The front yard consists of $N /
h$ bins, each of which has capacity $h + \tau_h$ (recall from Section \ref{sec:iceberg-lemma} that $\tau_h = k \cdot (h\log{h})^{1/2}$ for some constant $k$). The backyard, which will
store only a small number of records (roughly $N / \poly h$) is implemented
using $\mathcal{T}$.

The front yard uses two hash functions: the function $\g: U
\rightarrow [N / h]$ maps keys to bins, and the function $\f: U \rightarrow
[\poly h]$ maps keys to random $\Theta(\log h)$-bit \defn{fingerprints}.

When a new key $x$ is placed into the table, we first try to place it
into its front-yard $\g(x)$. If the $\g(x)$ contains fewer than
$h + \tau_h$ records, and all of the records $y$ in the bin satisfy
$\f(x) \neq \f(y)$, then $x$ is placed into the bin.\footnote{For
  convenience of notation, we will often treat a record as a key
  (rather than a key-value pair), allowing for us to, for example,
  talk about the fingerprint $\f(x)$ for a record $x$.} Otherwise, $x$
is placed into the backyard $\mathcal{T}$.

The insertion procedure ensures that, within each bin, the records all have
distinct fingerprints. This enables a simple space-efficient scheme for
performing queries within the bin. Define a \defn{routing table} to be a
dictionary that maps up to $h + \tau_h$ different fingerprints to indices $i
\in [h + \tau_h]$ within a bin. As we shall discuss shortly, as long as $h \le
O(\log n / \log \log n)$, a routing table can be encoded in $O(1)$ machine
words (and $O(h \log h)$ bits) with constant-time query/insert/delete
operations (and with no cache misses beyond those needed to load the $O(1)$ machine words). The routing table is used within each
bin to map the fingerprint $\f(x)$ of each key $x$ to the corresponding position
of $x$ in the bin.

Each bin $b$ also maintains several other pieces of metadata: a \defn{fill
counter} keeping track of the number of records in the bin, a \defn{vacancy
bitmap} keeping track of which slots are vacant in the bin, and a
\defn{floating counter} keeping track of how many keys $x$ in the backyard
satisfy $\g(x) = b$.

The fill counter and vacancy bitmaps are used to implement insertions in
constant time. The floating counter, on the other hand, is used to make queries
more cache efficient. If a query for a key $x$ goes to a bin $b$ whose floating
counter is $0$, then the query need not search for $x$ in the backyard. As
long as the bin fits in a single cache line, then the query to $x$ incurs only
a single cache miss.

In summary, insertions, queries and deletions work as follows:
\begin{itemize}
    \item An insertion of a record $x$ examines the routing table in $\g(x)$ to determine whether $\f(x) = \f(y)$ for some $y$ in the bin. It then examines the fill counter to assess the number of free slots in the bin. The record $x$ goes to the frontyard if $\g(x)$ contains fewer than $h + \tau_h$ records, and all of the records $y$ in the bin satisfy $\f(x) \neq \f(y)$; and it goes to the backyard otherwise. If $x$ goes to the frontyard, then the vacancy bitmap is used to find a slot in which to place it, and then the metadata (vacancy bitmap, fill counter, routing table) is updated. If $x$ goes to the backyard, then the floating counter is updated, and the backyard hash table $\calT$ is used. 
    \item A query to a record $x$ examines the routing table in $\g(x)$ to determine whether $\f(x) = \f(y)$ for some record $y$ in the bin. If so, the query checks if that record is $x$. Otherwise (if either there is no such $y$ or if $y \neq x$), then the query examines the floating counter for the bin. If the floating counter is $0$, the query terminates (with a negative result). Otherwise, the query uses the backyard table $\calT$.
\item A deletion of a record $x$ first performs a query to find the record. If the record is in the backyard, it is deleted from $\calT$ and the floating counter of the appropriate frontyard bin is updated. If the record is in the frontyard, it is deleted, and the metadata (vacancy bitmap, fill counter, routing table) is updated.
\end{itemize}
\noindent
In \Cref{fig:iceberg-basic} we present a diagram with the basic structure of an Iceberg hash table.

\begin{figure}[H]
\centering
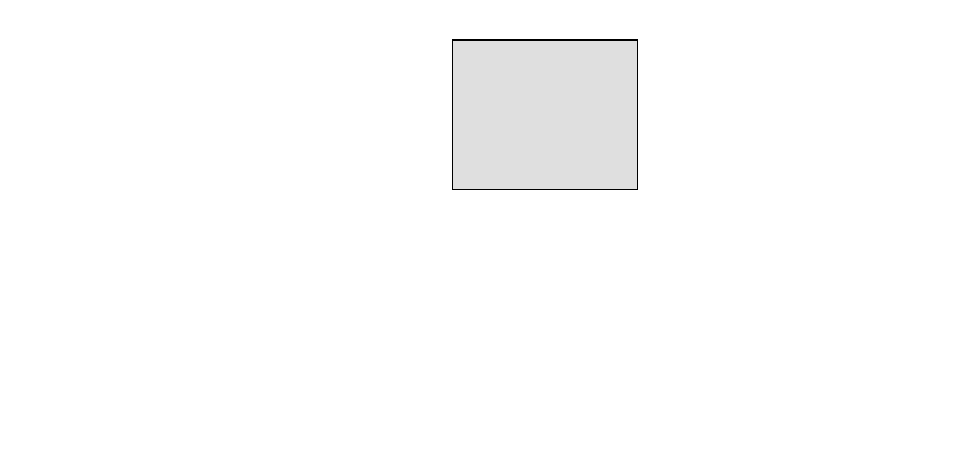
\caption{The structure of the basic Iceberg hash table. All operations on a key $x$ are divided into three steps: First, select the appropriate front yard $\bin(x)$. Second, select a fingerprint slot in the bin, using the bin metadata. And third, run the operation on the backyard table $\mathcal{T}$ if necessary.}
\label{fig:iceberg-basic}
\end{figure}

In order to complete the description and analysis of Iceberg hash tables, we have two main tasks: to show that routing tables can be implemented in $O(1)$ machine words with $O(1)$-time operations, which we do via standard bit techniques; and to show that the backyard remains small, even though records are never moved from the backyard to the front yard, which we do via the Iceberg lemma.

\paragraph{Implementing a routing table with \boldmath $O(1)$ machine words.}
Let $a_1, a_2, \ldots, a_r$ be a set of distinct fingerprints stored in a routing table, and let $b_1, b_2, \ldots, b_r$ be the corresponding indices of the fingerprints within the bin -- the routing table maps $a_i$ to $b_i$. Let $A$ be an array storing $a_1, a_2, \ldots, a_r$, and let $B$ be an array storing $b_1, b_2, \ldots, b_r$. Note that $A$ and $B$ can be stored in $O(1)$ machine words using $O(h \log h)$ bits, since $r = O(h)$ and each element in each array is $\Theta(\log h)$ bits. Since $h = O(\log n / \log \log n)$, we have $|A| = |B| = O(h \log h) = O(\log n) = O(\word)$, where $\word$ is the machine word size. The routing table simply stores $A$ and $B$, for a total of $O(1)$ words.

Queries to the routing table face the challenge of determining whether a fingerprint $a$ is in the array $A$, and if so, then the query must also return $b_i$ for the index $i$ such that $a_i = a$. Fortunately, these operations can be implemented in constant time using standard bit techniques.

In the following, we will exploit the fact that several useful word operations can be performed in constant time. In all our word operations, we will operate on small integers stored in a single word.

When thinking of an integer $x$ as a bit string, we will treat it as being right justified, meaning that $x$'s least significant bit is the final bit of the bit string. We denote the concatenation of two bit strings $x$ and $y$ by $x\circ y = x2^{|y|+1} + y$. That is $x\circ y$ is the bits of $x$, followed by a \defn{padding bit} $0$, followed by the bits of $y$ (reading from most to least significant bit).  We say that $a_1, \ldots, a_k$ are \defn{packed into a word} $A$, if $A = a_1 \circ \cdots \circ a_k$,  and we call the bit before $a_i$ the \defn{$i$th padding bit}.

The proofs of Lemmas~\ref{lem:membership}, \ref{lem:findgi}, and \ref{lem:min}  use the following standard set of tools:
\begin{enumerate}[noitemsep]
    \item Given a word, the position of the \defn{least significant $1$-bit} can be computed in $O(1)$ time~\cite{knuth2011art4}.
    \item Given a word, the position of the \defn{most significant $1$-bit} can be computed in $O(1)$ time~\cite{DBLP:conf/stoc/FredmanW90}.
    \item Given a bit string $a$ of at most $(\word/k) - 1$ bits, the word $A  = a \circ \cdots \circ a$ consisting of $k$ copies of $a$ can be computed in $O(1)$ time~\cite{DBLP:conf/stoc/FredmanW90}.
    \item Given two sets of bit strings $\{x_1,\ldots,x_k\}$ and $\{y_1,\ldots,y_k\}$, where $|x_i| = |y_j| \le (\word/k) - 1$ for all $i,j\in [k]$, let $X = x_1 \circ \cdots \circ x_k$ and $Y = y_1 \circ\cdots\circ y_k$, and let $p_i$ be the location of the $i$th padding bit in both $X$ and $Y$.  Then in $O(1)$ time~\cite{DBLP:conf/stoc/FredmanW90}, we can compute a word $Z$ where for $i\in[k]$, the $p_i$th bit of $Z$ is $1$ if $x_i \geq y_i$ and $0$ otherwise (and the bits not corresponding to padding-bit locations $p_i$ are 0). That is, we can compare every $x_i$ and $y_i$ in $O(1)$ time.
\end{enumerate}

\begin{lem}
\label{lem:membership}
 
    Let $x_1, x_2, \ldots, x_r$ be bit strings of length $s\leq (\word/r) - 1$, let $X = x_1\circ\cdots\circ x_r$, and let $y$ be an $s$-bit number.  In constant time, one can determine whether $y \in \{x_1,\ldots,x_r\}$, and for what index $i$ we have $x_i = y$ (if such an $i$ exists).
  
\end{lem}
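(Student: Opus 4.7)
}
The plan is to reduce the membership question to a small number of the word-level primitives 1--4 listed above the lemma. First, using primitive~3, I would build the word $Y = y \circ y \circ \cdots \circ y$ consisting of $r$ copies of $y$, so that $Y$ has the same padding-bit layout as $X$ and its $i$th block equals $y$. Because $s \leq (\word/r) - 1$, this fits within a single machine word, and primitive~3 constructs it in $O(1)$ time.

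Next, I would detect, in parallel across all blocks, which blocks of $X$ agree with the corresponding block of $Y$. Primitive~4 gives one-sided comparisons, so to get equality I would apply it twice: once to produce a word $G_1$ in which padding bit $p_i$ is $1$ iff $x_i \geq y$, and once (with the roles of the operands swapped) to produce $G_2$ in which padding bit $p_i$ is $1$ iff $y \geq x_i$. The bitwise AND $E = G_1 \wedge G_2$ then has padding bit $p_i$ equal to $1$ exactly when $x_i = y$, and is $0$ in all non-padding positions. Consequently, $y \in \{x_1,\dots,x_r\}$ iff $E \neq 0$, which is a single comparison.

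To recover the matching index $i$ (assuming one exists), I would apply primitive~1 to $E$ to find the position of its least significant $1$-bit. Since the padding bits of $X$ (and hence of $E$) sit at the predetermined positions $p_1 < p_2 < \cdots < p_r$ separated by $s+1$ bits, converting a bit position to an index is a single arithmetic operation: $i = \lfloor (\text{bit position}) / (s+1) \rfloor + 1$ (or a similar closed form depending on the convention for right-justification). The entire procedure therefore uses $O(1)$ word operations.

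I do not expect a genuinely hard step here; the main subtlety is just bookkeeping around the padding bit. One must check that the two applications of primitive~4 really do yield a word whose non-padding bits are all $0$ (so that the least-significant-$1$-bit query in the final step is not fooled by junk inside the blocks themselves), and that $s+1$ bits per block is enough room for the comparisons not to overflow into neighbouring blocks. Both facts follow directly from the statement of primitive~4, so the proof reduces to explicitly writing out the sequence of word operations above and verifying the layout.
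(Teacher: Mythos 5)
Your proposal matches the paper's proof: pack $r$ copies of $y$ into a word, apply primitive~4 in both directions and AND the results to get an equality-indicator word, then locate the match via the least-significant-$1$-bit primitive. The paper's version is just slightly terser; your added detail about converting bit position to index is a correct elaboration, not a departure.
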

\begin{proof}  We use the standard techniques outlined above.  We pack $r$ copies of $y$ in a new word $Y$.  Compare $X$ with $Y$, compare $Y$ with $X$, and AND together the resulting comparison-indicator words.  The result yields an equality-indicator word $Z$ (that is, the $i$th padding bit of $Z$ indicates whether $x_i = y$). We find which $x_i = y$, if any exists, by finding the least significant $1$-bit of $Z$.
\end{proof}

The simplicity of the routing table's encoding makes insertions and deletions of fingerprints easy to implement in constant time. In particular, insertions and deletions simply need to update a single entry in each of the arrays $A$ and $B$.

\paragraph{Analysis of Iceberg hashing.}
The challenge in analyzing the Iceberg hash table is to bound the number of records
in the backyard. There are two types of keys $x$ in the backyard: keys
$x$ that were placed in the backyard due to lack of space in $\g(x)$,
and keys $x$ that were placed in the backyard due to a fingerprint
collision with another key $y$ in $\g(x)$. We refer to keys of the
former type as \defn{capacity floaters} and keys of the latter type as
\defn{fingerprint floaters}. As we shall see, the number of capacity floaters
can be bounded by the Iceberg Lemma, and the number of fingerprint floaters can
be bounded by an analysis using McDiarmid's inequality. 

Although for now we are only interested in $h \le O(\log N / \log \log N)$, later in the paper we will also consider even more space efficient variants of Iceberg hashing in which $h$ is larger. To simplify discussion later, we state several of the lemmas in this section for arbitrary $h \le \polylog N$.

We begin by bounding the number of fingerprint floaters.
We remark that the proof of the next lemma requires a bit of care to avoid any potential
subtle circular dependencies between the random variables being analyzed.
\begin{lem}
  Suppose $h \le \polylog N$. Then w.s.h.p.\  in $N$, there are at
  most $N / \poly h$ fingerprint floaters. Moreover, for a given
  key $x$, the probability that there is a fingerprint floater $y$
  such that $\g(x) = \g(y)$ is at most $1 / \poly h$.
  \label{lem:fingerprints}
\end{lem}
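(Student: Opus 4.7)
The plan is to establish three things in sequence: an expectation bound $\E[|\calF|] \le N/\poly h$ on the current set $\calF$ of fingerprint floaters; a w.s.h.p.\ concentration bound around that mean; and the per-key ``moreover'' clause via symmetry and Markov's inequality.

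For the expectation, I would first condition on the bin-assignment hash $\g(\cdot)$. By the Iceberg Lemma applied to the underlying front-yard occupancy process, every front-yard bin contains at most $h + \tau_h = O(h)$ keys at every time step. Now fix a key $y$ currently in the system with most recent insertion time $t_y$: the key $y$ is a fingerprint floater iff some $z \ne y$ occupying bin $\g(y)$ at time $t_y$ has $\f(z) = \f(y)$. Writing $S(t_y)$ for the front-yard contents of bin $\g(y)$ just before $t_y$, I would bound
\[\prob{y \in \calF \mid \g} \le \sum_{z \ne y} \prob{z \in S(t_y) \wedge \f(z) = \f(y) \mid \g},\]
and then argue that each summand is at most $\prob{z \in S(t_y) \mid \g} / \poly h$. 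Since $\sum_z \prob{z \in S(t_y) \mid \g} = \expect{|S(t_y)| \mid \g} \le h + \tau_h$, this yields $\prob{y \in \calF \mid \g} \le 1/\poly h$; summing over the $\le N$ live keys then gives $\E[|\calF|] \le N/\poly h$.

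The delicate step --- the ``subtle circular dependency'' the paper flags --- is the inequality $\prob{z \in S(t_y) \wedge \f(z) = \f(y) \mid \g} \le \prob{z \in S(t_y) \mid \g} / \poly h$. The issue is that the event $\{z \in S(t_y)\}$ may itself depend on $\f(y)$, because $y$ could have been resident in the bin during an earlier insertion whose fate was decided by $\f(y)$. My intuition --- which I would formalize via a coupling that pairs two runs of the system differing only in the value of $\f(y)$ --- is that conditioning on $z \in S(t_y)$ can only make the collision $\f(z) = \f(y)$ \emph{less} likely: had $y$ been in the bin at $z$'s most recent insertion and $\f(z) = \f(y)$, then $z$ would have been routed to the backyard rather than the front yard. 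Combined with the unconditional independence of $\f(y)$ and $\f(z)$, this monotonicity should yield the needed inequality; carrying out this coupling cleanly is, I expect, the main technical obstacle.

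For the w.s.h.p.\ bound, I would mirror the two-phase McDiarmid template used in \Cref{clm:exposed-mcdiarmid}: condition on $\g$, view $|\calF|$ as a function of the independent fingerprints $\f(\cdot)$, truncate each bin's contribution at $\polylog N$ to obtain a polylogarithmic Lipschitz constant, apply \Cref{thm:mc} to obtain $|\calF| \le \expect{|\calF| \mid \g} + N/\polylog N$ w.s.h.p., and absorb the truncation error with a separate Chernoff argument on the rare event that a bin exceeds its cap. The secondary obstacle here is verifying the Lipschitz bound carefully, since a single fingerprint flip could in principle cascade through the bin's history, although the per-bin truncation limits the cascade. Finally, the ``moreover'' clause drops out of the expectation bound by symmetry over bins: $\expect{|\calF \cap \{y : \g(y) = \g(x)\}|} = \E[|\calF|]/(N/h) \le 1/\poly h$, and Markov's inequality converts this to the claimed per-key probability.
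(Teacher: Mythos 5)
The core difficulty you correctly flag---that $\{z \in S(t_y)\}$ can depend on $\f(y)$---is real, but your proposed resolution (a coupling/monotonicity argument showing $\Pr[z \in S(t_y) \wedge \f(z)=\f(y)] \le \Pr[z \in S(t_y)]/\poly h$) is left unproven and is in fact considerably harder than you suggest. The monotonicity intuition only addresses the \emph{direct} interaction between $y$ and $z$; it does not account for cascades, where flipping $\f(y)$ changes whether some third key $w$ is admitted to the bin, which in turn changes whether $z$ is admitted, and so on through the operation history. Since $y$ may have been deleted and reinserted multiple times, and $z$'s most recent insertion may postdate a period when $y$ was absent, the sign of the correlation is not obviously controlled by the one direct mechanism you cite. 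This is exactly the kind of circular-dependency minefield that the paper's proof is engineered to avoid.

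The paper's move is to never work with the actual front-yard contents $S(t_y)$ at all. Instead it overcounts: for each present key $x$ it defines $A_x$ (some \emph{currently present} $y$ has $(\g(y),\f(y))=(\g(x),\f(x))$) and $B_x$ (some $y$ present at $x$'s insertion but since deleted has $(\g(y),\f(y))=(\g(x),\f(x))$), and bounds the number of fingerprint floaters by $\sum_x A_x + \sum_x B_x$. These events depend only on the static hash values of the relevant keys, with zero reference to who was in the front yard versus the backyard, so the circular dependency vanishes. $A=\sum A_x$ is then a function of the $\le N$ independent pairs $(\g(x),\f(x))$ with constant Lipschitz bound, and McDiarmid applies directly; $B=\sum B_x$ becomes a sum of \emph{conditionally independent} indicators after fixing the hashes of the absent keys, and Chernoff applies. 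Your truncation-based McDiarmid argument is trying to tame precisely the history-dependence that this overcount eliminates.

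Your treatment of the ``moreover'' clause also has a gap: the symmetry identity $\E[|\calF \cap \{y : \g(y)=\g(x)\}|] = \E[|\calF|]/(N/h)$ presumes that $\calF$ is independent of $\g(x)$, which is false---$x$'s own bin choice influences which of its bin-mates become floaters. The paper handles this by splitting into three cases ($x$ itself floats; some $y$ collides directly with $(\g(x),\f(x))$; some floater $y$ in $\g(x)$'s bin collided with a third key $z\ne x$) and union-bounding the third case against the w.s.h.p.\ bound on the total floater count. You would need an argument of that shape rather than a one-line symmetry appeal.

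One small note: you invoke the Iceberg Lemma to argue each front-yard bin holds at most $h+\tau_h$ keys, but this holds deterministically by the bin capacity built into the data structure; the Iceberg Lemma is about the \emph{number of overflowing keys}, not a per-bin cap.
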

\begin{proof}
  Let $t$ denote the current time, and let $X$ denote the set of keys
  present at time $t$. For each key $x \in X$, let $A_x$ denote
  the event that $(\g(x), \f(x)) = (\g(y), \f(y))$ for some $y \in X\setminus \{x\}$, and
  let $B_x$ denote the event that $(\g(x), \f(x)) = (\g(y), \f(y))$ for
  some $y \not\in X$ such that $y$ was present in the table when $x$
  was inserted. The total number of fingerprint floaters is upper bounded by
  $$\sum_{x \in X} A_x + \sum_{x \in X} B_x,$$
  where $A_x$ and $B_x$ are treated as indicator random variables.

  The function $A = \sum_{x \in X} A_x$ is determined by the
  $|X| \le N$ independent random variables
  $\{(\g(x), \f(x)) \mid x \in X\}$, and $A$ has Lipschitz
  bound $\ell = 1$. By McDiarmid's
  inequality, it follows that $A \le \expect{A} + N / \polylog N$,
  w.s.h.p in $N$. On the other hand, since every $x, x' \in X$ collide in their bin-choice/fingerprint 
  with probability $\frac{h}{N} \cdot \frac{1}{\poly h}$, each event $A_x$
  occurs with probability at most $1 / \poly h$, which means
  that $\E[A] \le N / \poly h$. Thus, w.s.h.p.\ in $N$, we have
  $A \le N / \poly h + N / \polylog N \le N / \poly h$.

  Next we analyze $B = \sum_{x \in X} B_x$. Let $Z_y$ be the outcome of $(\g(y),\f(y))$ and let $Z = \{Z_y \mid y\not\in X\}$.   If we condition on any
  fixed $Z$, then
  the random variables $\{B_x \mid x \in X\}$ become independent. It
  follows by a Chernoff bound that
  $B\mid Z \le \expect{B \mid Z} + N / \polylog N$, w.s.h.p in
  $N$. Define $T_x$ to be the set of elements present when a given $x$ is inserted.  No matter what the outcome of $Z$ is, we have by a union bound that
  \begin{align*}
      \E[B \mid Z] & \le \sum_{x \in X} \sum_{\substack{y \not\in X\\y \in T_x}} \Pr[(\g(x), \f(x)) = (\g(y), \f(y)) \mid Z] \\
      & =  \sum_{x \in X} \sum_{\substack{y \not\in X\\y \in T_x}} \frac{1}{N \poly h} \le N / \poly h. 
  \end{align*}
  Thus we have $B \le N / \poly h + N / \polylog N=  N / \poly h$, w.s.h.p. in $N$.

  So far we have shown that, w.s.h.p.\ in $N$, there are
  at most $N / \poly h$ fingerprint floaters at time $t$. It
  remains to show that, for a given key $x$, the probability that
  there is a fingerprint floater $y$ at time $t$ such that
  $\g(x) = \g(y)$ is at most $1 / \poly h$.

  The probability that $x$ itself is
  a fingerprint floater is at most $1 / \poly h$, and similarly
  the probability that there is any $y$ at time $t$ such that
  $(\g(y), \f(y)) = (\g(x), \f(x))$ is at most $1 / \poly h$. To
  complete the proof, we must bound the probability that there exists
  a fingerprint floater $y$ at time $t$ such that $\g(x) = \g(y)$ and
  such that, when $y$ was inserted, there was a record $z \neq x$
  present such that $(\g(y), \f(y)) = (\g(z), \f(z))$. We know that, w.s.h.p.\ 
 in $N$, there are at most $N / \poly h$ keys
  $y$ such that when $y$ was inserted, there was a record $z \neq x$
  present such that $(\g(y), \f(y)) = (\g(z), \f(z))$. Record $x$ has
  probability $h / N$ of satisfying $\g(x) = \g(y)$ for each of
  these $y$'s. By a union bound, the probability of $x$ satisfying
  $\g(x) = \g(y)$ for any such $y$ is at most
  $O\left(\frac{h}{N} \cdot \frac{N}{\poly h}\right) = 1 /
  \poly h$, which completes the proof. 
\end{proof}

Say a record $x$ is a \defn{capacity exposer} if, when $x$
was inserted, there were already at least $h + \tau_h$ records $y$
present (including the records in the backyard) such that $\g(x) =
\g(y)$. Rather than analyzing the number of capacity floaters directly,
we instead analyze the number of capacity exposers (although this distinction is not important now, it will be later
in our analysis in Section \ref{sec:dynamic}).
\begin{lem}
  Suppose $h \le \polylog N$. Then, w.s.h.p.\ in $N$, there are at
  most $N / \poly h$ capacity exposers in the table.
  Moreover, for a given key $x$, the probability that there is a
  capacity exposer $y$ such that $\g(y) = \g(x)$ is at most
  $1 / \poly h$.
 
  \label{lem:capacity_floaters}
\end{lem}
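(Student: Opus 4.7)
The plan is to derive both halves of the lemma from the Iceberg Lemma of Section~\ref{sec:iceberg-lemma}, essentially as a corollary. First I would observe that ``capacity exposer'' is exactly the Iceberg Game notion of ``exposed ball'' translated into the hash-table setting: in both definitions, a key/ball is labeled at insertion time based on the number of already-present keys/balls sharing its target bin, counting all keys (including those in the backyard) toward the bin's fill, and the label persists regardless of subsequent deletions. With this identification, I would instantiate the Iceberg Lemma with the $N/h$ front-yard bins playing the role of its $n$ bins and $m = N$ as the maximum ball count (so the average bin fill is $h$), which immediately yields that the number of capacity exposers is at most $(N/h)/\poly h = N/\poly h$ w.s.h.p.\ in $N$, establishing the first claim.

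For the per-key statement, let $C_b$ denote the number of capacity exposers $y$ with $\g(y) = b$, so that $\sum_b C_b$ is the total number of capacity exposers. I would combine the w.s.h.p.\ bound $\sum_b C_b \leq N/\poly h$ with the deterministic bound $\sum_b C_b \leq N$ to conclude that $\E[\sum_b C_b] \leq N/\poly h + N \cdot 2^{-N/\polylog N} \leq N/\poly h$. Because $\g$ is fully random, the bin labels are interchangeable, so the marginal distribution of $C_b$ is identical across $b$, giving $\E[C_b] \leq \E[\sum_b C_b]/(N/h) \leq h/\poly h \leq 1/\poly h$ for every individual bin (absorbing the constant by increasing the exponent on $h$, which we can do by taking the constant $k$ in $\tau_h$ large enough). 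Markov's inequality then gives $\Pr[C_b \geq 1] \leq 1/\poly h$ for every fixed bin $b$, and averaging against the uniform-random choice of $\g(x)$ yields the desired bound.

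The only mildly delicate point, and the one I expect to spend the most words on, is that when $x$ is already a key in the table, $\g(x)$ is not independent of the $C_b$'s, so the final ``averaging over $\g(x)$'' is not a literal application of independence. I would handle this by bin-label exchangeability: the joint distribution of all hash values is invariant under any permutation of bin labels, so conditioning on $\g(x) = b$ leaves the distribution of $C_b$ the same for every $b$. In particular, $\Pr[C_{\g(x)} \geq 1]$ equals $\Pr[C_b \geq 1]$ for any fixed $b$, which we have already bounded by $1/\poly h$. This also handles the case $y = x$ uniformly with the case $y \neq x$, so no additional argument is required.
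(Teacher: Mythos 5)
Your first half is exactly the paper's argument: a capacity exposer is, by definition, an exposed ball in the \historyless, and the Iceberg Lemma applied with $N/h$ bins and $m=N$ balls immediately gives the $N/\poly h$ bound on the total. That part is correct.

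The per-key part has a genuine gap, and it is precisely the dependency that your exchangeability step glosses over. Exchangeability under bin relabeling does give you that $\Pr[C_{\g(x)} \geq 1] = \Pr[C_b \geq 1 \mid \g(x) = b]$ for any fixed $b$; but this is the \emph{conditional} probability, not the unconditional $\Pr[C_b \geq 1]$ that you bound via Markov on $\E[C_b]$. These are not equal, because $C_b$ genuinely depends on $\g(x)$: placing $x$ in bin $b$ adds one more ball to $b$'s fill, which can push other keys in bin $b$ over the $h+\tau_h$ threshold and make them capacity exposers even though they would not have been otherwise. (The footnote in the proof of \Cref{clm:exposed-mcdiarmid} illustrates exactly this: changing a single ball's bin choice can change the number of exposures in that bin by a great deal, as a cascade of insertions and deletions can each be tipped over the threshold by $x$'s presence.) Bounding $\E[C_b]$ and invoking Markov controls a typical bin, but tells you nothing about the bin $x$ itself lands in, which is exactly the bin you care about. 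Your remark that exchangeability ``also handles $y = x$'' misdiagnoses where the dependence lives — the issue is $x$'s effect on \emph{other} keys $y$ in the same bin, not the $y = x$ edge case.

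The paper's proof handles this by decoupling the count from $\g(x)$. It defines $A$ to be the set of keys $y$ present at time $t$ such that, when $y$ was inserted, there were at least $h + \tau_h - 1$ keys $z \neq x$ in $\g(y)$. Excluding $x$ from the count makes $A$ a function only of the hash values of keys other than $x$, hence independent of $\g(x)$; and every capacity exposer $y \neq x$ in bin $\g(x)$ must belong to $A$, since at most one of the $\geq h + \tau_h$ colliding keys at $y$'s insertion can be $x$. Applying the Iceberg Lemma with the threshold lowered to some $\tau_h' < \tau_h - 1$ still gives $|A| \leq N/\poly h$ w.s.h.p.\ in $N$, and now the event $\g(x) \in \{\g(y) : y \in A\}$ is a genuine independence computation: each of the $\leq N/\poly h$ elements of $A$ is in $\g(x)$'s bin with probability $h/N$, so a union bound gives $O\bigl(\frac{h}{N}\cdot\frac{N}{\poly h}\bigr) = 1/\poly h$. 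Adding the probability that $x$ itself is a capacity exposer (another Chernoff/Iceberg bound of $1/\poly h$) finishes the proof. You could retrofit this into your $C_b$ framework by replacing $C_b$ with the $x$-excluded count and noting it dominates $C_b$ for $y \neq x$; the key ingredient you need, either way, is a quantity that upper-bounds the capacity exposers and is independent of $\g(x)$.
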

\begin{proof}
  By the Iceberg Lemma, the number of
  capacity exposers at any given moment is at most $N / \poly h$
  w.s.h.p.\ in $N$.

  It remains to show that, for a given key $x$, the probability of 
  $\g(x)$ having a capacity exposer is at most
  $1 / \poly h$. Let $A$ denote the set of keys $y$ present at
  time $t$ such that, when $y$ was inserted there were at least
  $h + \tau_h - 1$ other keys $z$ in $\g(y)$ satisfying $z \neq
  x$. By the Iceberg Lemma (applied using $\tau_h' < \tau_h - 1$),
  $|A|$ is at most $N / \poly h$ w.s.h.p.\ in $N$. The probability that $\g(x)$ contains any
  elements from $A$ is therefore $1 / \poly h$. On the
  other hand, in order for $\g(x)$ to have a capacity floater
  $y \neq x$, we must have $y \in A$. Thus the probability of  
  $\g(x)$ having a capacity floater is at most
  $1 / \poly h$, completing the proof.
\end{proof}

Combining the preceding lemmas, we analyze the backyard.
\begin{lem}
  Suppose $h \le \polylog N$. Then, w.s.h.p.\  in $N$, there are at
  most $N / \poly h$ records in the backyard.  Moreover, for a
  given key $x$, the probability that $\g(x)$ has a non-zero
  floating counter is at most $1 / \poly h$.
  \label{lem:second_level}
\end{lem}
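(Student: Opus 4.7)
The plan is to observe that every record in the backyard must be either a capacity floater or a fingerprint floater (by the definition of the insertion procedure: a key is sent to the backyard exactly when its target front-yard bin $\g(x)$ is already full, or when there is a fingerprint collision in that bin). Moreover, every capacity floater is in particular a capacity exposer, since if $x$ was unable to fit in $\g(x)$ at insertion time then $\g(x)$ already had at least $h + \tau_h$ records hashing to it at that moment (counting the front yard alone, and hence certainly including the backyard). So the set of backyard records is contained in the union of the capacity exposers and the fingerprint floaters.

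For the first claim, I would simply add the two high-probability bounds. By Lemma \ref{lem:fingerprints}, the number of fingerprint floaters is at most $N / \poly h$ w.s.h.p.\ in $N$, and by Lemma \ref{lem:capacity_floaters}, the number of capacity exposers is at most $N / \poly h$ w.s.h.p.\ in $N$. Taking a union bound over these two events, the total backyard size is at most $N / \poly h$ w.s.h.p.\ in $N$.

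For the second claim, note that the floating counter of $\g(x)$ is non-zero if and only if there exists a backyard record $y$ with $\g(y) = \g(x)$ (possibly $y = x$ itself). Split by the type of $y$: either $y$ is a fingerprint floater, or $y$ is a capacity floater (hence a capacity exposer). The ``moreover'' clauses of Lemma \ref{lem:fingerprints} and Lemma \ref{lem:capacity_floaters} state exactly that each of these events individually has probability at most $1 / \poly h$. A union bound then gives a total probability of $1 / \poly h$, as required.

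The only step that requires care is the inclusion of capacity floaters into capacity exposers, which must be verified against the precise definitions in the preceding lemma; everything else reduces to a mechanical combination of Lemmas \ref{lem:fingerprints} and \ref{lem:capacity_floaters}, so I do not anticipate any serious obstacle.
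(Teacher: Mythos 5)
Your proposal is correct and follows essentially the same approach as the paper: partition the backyard into fingerprint floaters and capacity floaters, observe that every capacity floater is a capacity exposer, and then combine the w.s.h.p.\ count bounds and the per-key probability bounds from Lemmas~\ref{lem:fingerprints} and~\ref{lem:capacity_floaters} by a union bound. The inclusion of capacity floaters into capacity exposers that you flag as the one step needing care is indeed the same observation the paper uses, and your justification of it is sound.
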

\begin{proof}
  Let $t$ be the current time. By Lemma \ref{lem:fingerprints}, w.s.h.p.\ 
  in $N$, there are at most $N / \poly h$
  fingerprint floaters at time $t$. Also by Lemma
  \ref{lem:fingerprints}, the probability that a given record $x$
  hashes to a $\g(x)$ for which there is at least one fingerprint
  floater is at most $1 / \poly h$. 
  
  Since every capacity floater is a capacity exposer, we can use Lemma \ref{lem:capacity_floaters}
  to deduce that, w.s.h.p., there are at most $N / \poly h$
  capacity floaters at time $t$. Also by Lemma
  \ref{lem:capacity_floaters}, the probability that a given record $x$
  hashes to a $\g(x)$ for which there is at least one capacity
  floater is at most $1 / \poly h$. This completes the proof.
\end{proof}

The previous lemmas all assume access to fully random hash functions.
In Appendix \ref{sec:hashing}, we describe how to modify Iceberg hashing (both the simple version described in this section and the 
stronger variants in subsequent sections)
to be compatible with an explicit family of hash functions (the transformation
is essentially the same as the one used in past works \cite{arbitman2010backyard, liu2020succinct}).
The transformation preserves all of the properties of Iceberg hashing that we care about (time efficiency,
cache efficiency, space efficiency, and stability) but, as in previous work, this introduces an additional $1/\poly N$ failure
probability due to the hash functions themselves. Thus, in order so that our analysis of Iceberg hashing
is compatible with an explicit family of hash functions, we state Theorem \ref{thm:static} (as well as the other main theorems of the paper) in terms
of w.h.p.\ guarantees rather than in terms of w.s.h.p.\ guarantees. The only exception to this will be in Section \ref{sec:prob} where
we prove w.s.h.p.\ guarantees assuming fully random hash functions, building on past work \cite{goodrich2011fully, goodrich2012cache} which
has also assumed full randomness for the same reasons.

We now present the full analysis of the Iceberg hash table. 
\begin{thm}
Consider an Iceberg hash table that never contains
more than $N$ elements and suppose that the average-bin-fill parameter satisfies  $h = O(\log N / \log \log N)$. Suppose that the
    backyard table supports
    constant-time operations (w.h.p.\ in $n$), supports load factor
    at least $1 / \poly h$, and is stable. 

  Consider a sequence of operations in which the number of records in
  the table never exceeds $N$, and consider a query, insert, or delete
  that is performed on some key $x$. Then, the
  following guarantees hold.
  \begin{itemize}
  \item \textbf{Time Efficiency.} The
    operation on $x$ takes constant time in the RAM model, w.h.p.\  in $N$,.
  \item \textbf{Cache Efficiency.} Consider the EM model using a cache line of size $B \geq 2h$ and a cache of size $M = \Omega(B)$, and suppose that each bin in the front yard is memory aligned, that is, each bin is stored in a single cache line. Finally, suppose that the description bits of the hash functions are cached.
    Then the operation on $x$ has probability at least $1 - 1 / \poly B$ of
    incurring only a single cache miss.
  \item \textbf{Space Efficiency.} The total
    space in machine words consumed by the table is, w.h.p.\ in $N$,
    $$\left(1 + O\left(\frac{\sqrt{\log h}}{\sqrt{h}}\right)\right) N = (1 + o(1))N.$$ 
  \item \textbf{Stability.} The hash table is stable.
  \end{itemize}
  \label{thm:static}
\end{thm}
\begin{proof}
  The claims of time efficiency and stability follow directly from the
  construction of the Iceberg hash table.
  By Lemma \ref{lem:second_level}, the space consumed by the backyard is $N / \poly h$ machine words w.h.p.\ in $N$. The space in machine words consumed by the front yard is deterministically
  $$\frac{N}{h} (h + \tau_h + O(1)) \le \left(1 + O\left(\frac{\sqrt{\log h}}{\sqrt{h}}\right)\right) N,$$
  which completes the proof of space efficiency.

  Finally, we prove the claim of cache efficiency. If the operation on
  key $x$ is a query, then the probability of incurring more than one cache
  miss is equal to the probability that $\g(x)$ has a non-zero
  floating counter. By Lemma \ref{lem:second_level}, this probability
  is at most $1 / \poly h = 1 / \poly B$. By the same analysis, the probability that a deletion incurs multiple cache misses is also $1 / \poly B$. 

  If the operation is an insertion, then the probability of incurring
  more than one cache miss is equal to the probability that $x$ is
  placed into the backyard. This, in turn, is the probability that
  either (1) there is another record $y$ in $\g(x)$ such that
  $\f(y) = \f(x)$; or (2) there are already $h + \tau_h$ records in 
  $\g(x)$. The probability of (1) is at most $N \cdot \frac{h}{N} \cdot 1 / \poly h = 1 / \poly h$ by a union bound, 
  and the probability of (2) is also $1 / \poly h$ by
  a Chernoff bound.

  Thus, for any operation, the probability of incurring more than one
  cache miss in the EM model is $1 / \poly h = 1 / \poly B$.
\end{proof}

We conclude the section with several remarks on how the theoretical techniques discussed in this section relate to practical implementations of Iceberg hashing.

\begin{rmk}
    It is worth taking a moment to comment on what $h$ and $B$ look like in practice. \texttt{IcebergHT}, which is a high-performance implementation of Iceberg hashing by Pandey et al.~\cite{PandeyBeJo123}, sets $h = 128$ in order to consistently achieve a space efficiency greater than $85\%$. Of course, $h = 128$ means that the analysis also requires $B = 128$, which means that for some of caches on a modern machine (e.g., the L1, L2, and L3 caches, which use 64-byte cache lines), the block-size constraint in Theorem \ref{thm:static} is not met. (For these caches, the cache-miss behavior of \texttt{IcebergHT} is closer to that of Cuckoo hashing, which incurs two misses per query). On the other hand, for caches with large line lengths (including the translation look-aside buffer, which uses lines of size 4KB), the $(1 + o(1))$-cache-optimality guarantee in Theorem \ref{thm:static} does apply. This is one of the sources (but not the only one) of \texttt{IcebergHT}'s performance.
\end{rmk}

\begin{rmk}
    It is tempting to assume that the routing-table techniques in this section would be hopelessly impractical in a real system. However, modern CPUs support vector instructions (such as \texttt{AVX-512}) that, in practice, significantly increase the effective size of a machine word. Making use of this observation, Pandey et al.~\cite{PandeyBeJo123} were able to implement a high-performance version of Iceberg hashing with a routing-table structure remarkably similar to the one described here.
\end{rmk}

 \section{Dynamic Resizing with Waterfall Addressing}\label{sec:dynamic}

In this section, we show how to transform Iceberg hashing into a
dynamically resizable hash table, while preserving the space efficiency,
time efficiency, and cache efficiency of the original data structure
(and also preserving stability during time windows in which the table is not resized).

The core challenges that one encounters when trying to make Iceberg hashing space- and time-efficiently dynamically
resizable are the same as those that arise for any other direct-mapped hash table. To capture this set of challenges formally, this section defines the
 \defn{dynamic bin addressing problem}. We then give an efficient solution to the
this problem, which we call \defn{waterfall addressing},
 and we show how to use waterfall addressing to construct a
dynamically resizable version of Iceberg hashing.

\paragraph{Resizing through partial expansions.}
If one does not care about space efficiency, then the classic approach to
dynamically resizing a hash table is to simply rebuild it whenever its size changes by a constant factor.
On the other hand, if space efficiency is a concern, then the hash table must be resized in smaller increments.  

Suppose that we wish to maintain a hash table at a load factor of
$1 - O(1/s)$ for some power-of-two parameter $s=\omega(1)$.
Perhaps the most natural approach is to grow the table through small \defn{partial expansions}.
Each time that the hash table doubles in size, a total of $s$
partial expansions are performed. We call $s$ the \defn{resize granularity}.  If the hash table initially consists of
$2^a$ bins, then each of the partial expansions increases the number of bins by $2^a / s$,
so that after $s$ partial expansions the number of bins becomes
$2^{a + 1}$. The partial expansions are spread out over time so that
the average load on each bin never changes by a factor of more than $1 \pm O(1 / s) = 1\pm o(1)$. 

\paragraph{The problem: dynamically mapping elements to bins.}
How should we map elements to bins after each partial expansion?
If a hash table consists of $m$ bins, and we have a fully random hash function
$g: U \rightarrow [2^w]$ (for some $w$ satisfying $2^w \gg m$), then the classic approach to mapping
elements $ x\in U$ to bins $[m]$ is to simply use the bin assignment function
\begin{equation}
\text{Bin}_m(x) = g(x) \pmod m.
\label{eq:bintriv}
\end{equation}

The problem with this bin assignment function is that, whenever a partial expansion is performed,
almost all of the records in the hash table will have their bin assignments changed.
This means that, if a partial expansion is performed on a hash table with $n$ elements,
then the expansion will require $\Omega(n)$ time. In contrast,
if we wish to have $O(1)$-time operations, then each partial expansion must take time at most
$O(n / s)$. 

\paragraph{The dynamic bin-addressing problem.}
Let $ U $ be the universe 
and $g:U \rightarrow [2^w]$
a fully random hash function, where $w$ is the number of bits in a machine word and $2^w$ is an upper bound on the number of bins that will ever be in our hash table. 

Define $m_{a, j} = 2^a + j \cdot 2^a / s$ to be the number of bins after the
$j$-th partial expansion in the process of doubling a table from $2^a$ to $2^{a + 1}$ bins.
Define a \defn{bin assignment function} $\bin(a, j, x): [\log s, w - 1] \times [s] \times U \rightarrow [m_{a, j}]$
to be the function that assigns keys $x$ to bins after the $j$-th partial expansion in the process of
doubling a table from $2^a$ to $2^{a + 1}$ bins. As an abuse of notation, 
we also define $\bin(a, 0, x) = \bin(a - 1, s, x)$.

A bin assignment function is a solution to the \defn{dynamic bin-addressing problem} if it satisfies the following three properties.
\begin{itemize}
\item \textbf{The Clean Promotion Property.} If $\bin(a, j, x) \neq \bin(a, j + 1, x)$, then
$$\bin(a, j + 1, x) \in (m_{a, j}, m_{a, j + 1}].$$
In other words, whenever a partial expansion is performed, the only keys that move
are the keys that are assigned to the newly added bins.
\item \textbf{Independence. } For any given $ a, j $, the function $\bin(a, j, x)$
is mutually independent across all $ x \in U $.
\item \textbf{Near Uniformity. } For every $a, j, x$ and for every $\ell \in [m_{a, j}]$,
$$\Pr[\bin(a, j, x) = \ell] = (1 + O(1 / s)) \cdot \frac{1}{m_{a, j}}.$$ 
\end{itemize} 

Whereas independence and near uniformity are necessary for any addressing scheme (even in a fixed-size hash table), the clean promotion property is what glues together the outcomes of $\bin(a, j, x)$ for different values of $a$ and $j$. It ensures that only roughly a $1/s$-fraction of elements will have their address changed by any given partial expansion.

A consequence of the clean promotion property is that the functions $\bin(a, j, x)$ and $\bin(a, j + 1, x)$
must be closely related to one another. Thus a natural approach is to define the function $\bin(a, j, x)$ 
recursively, so that $\bin(a, j, x)$ depends on $\bin(a', j', x)$ for $a' \le a$ and $j' \le j$. 
In 1980, Larson gave an elegant construction \cite{larson1980linear} showing that such a recursive approach
 is indeed possible; Larson's scheme can be used to construct a solution $\bin(a, j, x)$ 
to the dynamic bin addressing problem
that can be evaluated in logarithmic expected time. Larson's scheme has found many applications to
external-memory problems, but the $\Omega(\log n)$ evaluation time has prevented it from being useful for 
internal memory hash tables.

This section shows that, somewhat remarkably, it is possible to achieve the clean promotion property
without recursion, and it is even possible to construct a solution to the dynamic bin addressing problem
that can be evaluated in $O(1)$ \emph{worst-case} time. We call our solution, which we present in 
Subsection \ref{subwaterfall}, 
\defn{waterfall addressing}. 

\paragraph{Efficiently finding which records need to be moved.}
So far we have focused on how to map records to a dynamically changing set of bins, but this alone does not fully solve the problem of how to dynamically 
resize a hash table. The clean promotion property ensures that each partial expansion moves only $O(n/s)$ records.
But how do we efficiently locate those records without performing a full scan
through the table?\footnote{Note that this is a problem that only arises for partial \emph{expansions} and not for the reverse operation which is a partial \emph{contraction}. In particular, when performing a partial contraction, the elements that need to be moved are precisely the ones that reside in the part of the table being eliminated.}

In Subsection \ref{subwhichrecords}, we show that it is possible to incorporate waterfall addressing into  
Iceberg hashing in a way that solves this problem. In particular, by adding $O(\log s)$ bits of 
overhead to each element in the hash table, we make it possible to locate in time $O(n / s)$ which records need 
to be moved. Our solution is not specific to Iceberg hashing; it can just 
as well be used with any hash table that
stores the majority of its elements in an array organized into bins.

\paragraph{Incorporating waterfall addressing into Iceberg hashing.}
Finally, when applying waterfall addressing to Iceberg hashing, there 
are several additional technical challenges that arise. These challenges are specific to
Iceberg hashing, and in particular, to how the probabilistic guarantees on the size of the 
backyard of the hash table interact with the dynamic resizing.
We show how to solve these issues in Subsection \ref{sec:resizing_iceberg}.
In doing so, we obtain a version of Iceberg hashing that is fully dynamic.

\subsection{Waterfall Addressing}\label{subwaterfall}
In this subsection, we describe a constant-time solution to the dynamic bin-addressing problem. 

To simplify discussion, we will think of the bins as being broken into \defn{chunks} of $E = 2^a / s$ bins during the $2^a$ doubling  (i.e., from $2^a$ bins to $2^{a+1}$ bins). 
At the beginning of this doubling there are $s$ chunks and at the end there are $2s$ chunks. Furthermore, when we refer to the \defn{size} of a chunk (or of the table as a whole), we shall be referring to the number of bins.

\subsubsection{A starting place: Larson's recursive scheme.}
The clean promotion property was introduced in 1980 by
Larson~\cite{larson1980linear}, who gave an elegant technique for achieving the property and applied the approach to cache-efficient linear hashing.

In Larson's scheme, each key $x$ has an (infinite) sequence of \defn{chunk hash
functions} that are used during the $2^a$ doubling (i.e., from $2^a$ bins to $2^{a+1}$ bins):
  $$g^{(a)}_1(x), g^{(a)}_2(x), \ldots, $$
where each $g^{(a)}_i:U \rightarrow [2s]$ maps elements uniformly to chunks. 
For each $r \in [s + 1, 2s]$, define $G^{(a)}(x,r)$ to be $g^{(a)}_i(x)$ 
for the smallest $i$ such that $g^{(a)}_i(x) \le r$. The definition of 
$G^{(a)}(x, r)$ satisfies two elegant properties: that (a) $G^{(a)}(x, r)$ is uniformly random in $[r]$; and that (b) either $G^{(a)}(x, r+1) = G^{(a)}(x, r)$ or $G^{(a)}(x, r) = r+1$. 

Larson's scheme computes 
$\bin(a, j, x)$ as follows. First, recursively compute $p = \bin(a, 0, x)$ to be the position that $x$ would reside
  in if the table had $2^a$ bins.  Then set 
$$
\bin(a, j, x) = 
\begin{cases}
p &\text{ if } G^{(a)}(x, j) \le s\\
G^{(a)}(x, j) \cdot E + \left(p \pmod E\right) & \text{ otherwise. } 
\end{cases}
$$
That is, if  $G^{(a)}(x, j)$ returns a non-expansion chunk, then we do not move the item; otherwise, we use the chunk hash function $G^{(a)}(x, j)$ to determine the high-order bits of the new address and the old address $p$ to determine the low-order bits.

The pseudocode for recursively computing the bin address for a record $x$ is given by Algorithm \ref{alg:Larson}.

  \begin{algorithm}
  \caption{Larson's Address Computation: Computing Bin$(a, j, x)$ \label{alg:Larson}}
  \begin{algorithmic}[1]
    \Require{Suppose we are doubling from size $2^a$ to $2^{a + 1}$ in
      chunks of size $E = 2^a / s$ bins. This function
      computes record $x$'s bin number after the $j$-th partial
      expansion.}

    \Statex

    \IIf{$2^{a + 1} = s$ and $j = s$} \Return{$g^{(a)}_1(x)$} \EndIIf \Comment{Base case}

    \Let{$p$}{Bin$(a - 1, s, x)$} \Comment{$p$ is the bin assignment for $x$ when the table was size exactly $2^a$}

    \Let{$i$}{$1$}

    \While{$g^{(a)}_i(x) > s + j$}
    \Let{$i$}{$i + 1$}
    \EndWhile

    \If{$g^{(a)}_i(x) \le s$}
    \Return{$p$}
    \Else{}
    \Return{$g^{(a)}_i(x) \cdot E + \left(p \pmod E\right)$}
    \EndIf
 \end{algorithmic}
\end{algorithm}

\begin{lem}[\cite{larson1980linear}]
  Larson's address computation function $\bin(a, j, x)$ satisfies the
  clean promotion property and maps each record $x$ uniformly at
  random to a bin in $[2^{a} + jE]$. 
\end{lem}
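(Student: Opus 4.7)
The plan is to induct on the doubling level $a$, with the base case handled directly by the top line of the algorithm. Throughout, let $G := G^{(a)}(x, s+j)$ denote the first term in the sequence $g_1^{(a)}(x), g_2^{(a)}(x), \dots$ that lies in $[s+j]$, and let $p := \bin(a-1, s, x)$. Two structural facts drive the analysis: (i) by the standard rejection-sampling argument, $G$ is uniform on $[s+j]$; and (ii) $G$ is independent of $p$, since the chunk hash functions for level $a$ are drawn independently of all functions used at lower doubling levels.

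For the clean promotion property, I would fix $a$ and compare the executions of $\bin(a, j, x)$ and $\bin(a, j+1, x)$. The recursive call produces the same $p$ in both, and the only difference between the two executions is that the while-loop cutoff widens from $s+j$ to $s+j+1$. Either the loop exits at the same index $i$, in which case the output is unchanged, or it exits at a strictly smaller index for $j+1$; the latter can happen only when some earlier $g_{i'}^{(a)}(x)$ equals $s+j+1$ exactly. In that case $G^{(a)}(x, s+j+1) = s+j+1 > s$, so the new output is $(s+j+1)E + (p \bmod E)$, which lies in $(m_{a,j}, m_{a,j+1}]$, as required.

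For uniformity, I would use the inductive hypothesis that $p$ is uniform on $[2^a] = [sE]$, so that $p \bmod E$ is uniform on $[E]$ and independent of $\lfloor p / E \rfloor$. A two-case calculation then suffices. For $y \in [sE]$, the output equals $y$ exactly when $G \le s$ and $p = y$, an event of probability $\tfrac{s}{s+j} \cdot \tfrac{1}{sE}$. For $y = qE + r$ with $q \in (s, s+j]$ and $r \in [E]$, the output equals $y$ exactly when $G = q$ and $p \bmod E = r$, an event of probability $\tfrac{1}{s+j} \cdot \tfrac{1}{E}$. Both cases collapse to $\tfrac{1}{(s+j)E} = \tfrac{1}{m_{a,j}}$. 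Mutual independence across distinct $x$'s is immediate from the per-key independence of the random hash functions, combined with the inductive hypothesis.

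The main obstacle --- modest, but worth being careful about --- is establishing the independence of $G$ and $p$ cleanly. This relies on the convention that each doubling level uses its own fresh family of chunk hash functions, so the randomness that determines $p$ (inherited from lower levels) is disjoint from the randomness that determines $G$. Once this is pinned down, the rest is bookkeeping against the recursive structure of the algorithm and the two elegant properties of $G^{(a)}(x, \cdot)$ already highlighted in the paper.
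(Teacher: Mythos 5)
The paper does not actually prove this lemma—it cites \cite{larson1980linear} and moves on—so there is no in-paper argument to compare against. Evaluated on its own terms, your proof is correct and captures exactly the structure that makes Larson's scheme work: the rejection-sampling uniformity of $G^{(a)}(x, s+j)$ on $[s+j]$, the independence of $G$ from $p$ (because level-$a$ chunk hash functions are fresh randomness relative to lower levels), and the clean case split between ``stay put'' (probability $s/(s+j)$, lands uniformly in $[sE]$ via the inductive hypothesis on $p$) and ``promote'' (probability $j/(s+j)$, lands uniformly in the new chunks via $G$ and $p \bmod E$). The clean-promotion argument via comparing the two while-loop exits, and noting that an earlier exit for $j+1$ forces $g_{i^*}^{(a)}(x) = s+j+1$, is also right. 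One nit you inherit from the paper rather than introduce yourself: the offset formula $g_i^{(a)}(x)\cdot E + (p \bmod E)$ as written is off by one (e.g., with $g_i^{(a)}(x)=s+1$ and $p \bmod E = E-1$ it exits $[m_{a,1}]$, and bins $sE+1,\dots,(s+1)E-1$ are never hit); the intended formula is something like $(g_i^{(a)}(x)-1)E + (p \bmod E) + 1$ under $1$-indexing. Your probability calculation is correct under the intended indexing, and your two-case decomposition is the standard way to prove this, so the argument is sound.
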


The recursive structure of Larson's scheme causes it to take time $\Omega(a)$ (and expected time $\Theta(a)$). The main contribution of this section is \defn{waterfall addressing}, a technique that improves this running time to $O(1)$ worst case. We begin by showing how to reduce the expected time to $O(1)$.

\subsubsection{Waterfall addresses in constant expected time.}
We modify Larson's scheme by introducing  a \defn{master hash function}  $m(x): U \rightarrow [2^w]$.
Whenever a key $x$ is moved into a new chunk by a partial expansion, we use $m(x)$ to determine the low-order bits of $x$'s address, rather than the recursively computed $p$. That is, we simply set the offset to be $m(x) \pmod
E$.  
Algorithm \ref{alg:waterfall} gives the pseudocode for this addressing
scheme (compared to Algorithm \ref{alg:Larson}, line 2 gets removed and lines 7 and 8 get modified). 

  \begin{algorithm}
  \caption{Waterfall Address Computation: Computing Bin$(a, j, x)$ \label{alg:waterfall}}
  \begin{algorithmic}[1]
    \Require{Suppose we are doubling from size $2^a$ to $2^{a + 1}$ in
      chunks of size $E = 2^a / s$ bins. Let $m(x)$ be the
      master hash function. This function computes record $x$'s bin
      number after the $j$-th partial expansion. The smallest allowable table size is $s$.}

    \Statex

    \IIf{$2^{a + 1} = s$ and $j = s$} \Return{$g^{(a)}_1(x)$} \EndIIf \Comment{Base case}

    \State{\xout{\color{red}$p  \leftarrow  \bin(a - 1, s, x)$}} \Comment{We no longer need to recursively compute $p$.}

    \Let{$i$}{$1$}

    \While{$g^{(a)}_i(x) > s + j$}
    \Let{$i$}{$i + 1$}
    \EndWhile

    \If{$g^{(a)}_i(x) \le s$} 
    \Return{\color{red} Bin$(a - 1, s, x)$}  \Comment{This is the only case where we recurse.}
    \Else{}
    \Return{\color{red} $g^{(a)}_i(x) \cdot E + \left(m(x) \pmod E\right)$} \Comment{Use the master hash instead of recursing.}
    \EndIf
 \end{algorithmic}
\end{algorithm}

\begin{figure}
    \centering
    \includegraphics[width=0.5\textwidth]{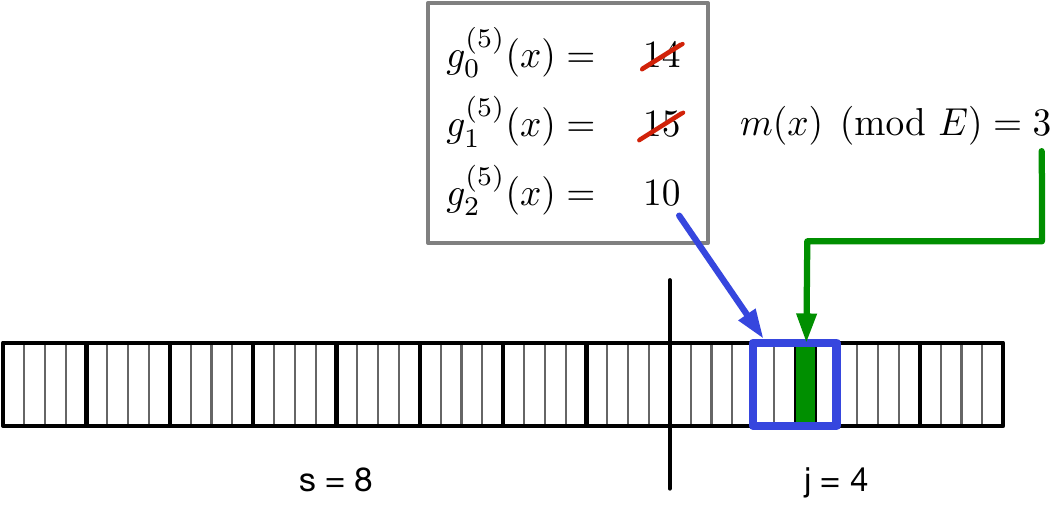}
    \caption{An example of the bin-selection algorithm in waterfall addressing as in Algorithm~\ref{alg:waterfall}. In this example $\textrm{Bin}(5, 4, x)$ is computed. The chunk that $x$ hashes to is determined by successively computing $g_0^{(5)}(x), g_1^{(5)}(x),\ldots$, until a value at most $s+j=12$ is found. This value determines the chunk that contains $x$, and then the bin within the chunk is determined by $m(x) \pmod{E}$.}\label{fig:waterfall}
\end{figure}

We call the addressing scheme \defn{waterfall addressing} because, as the table grows, more and more of the bits of $x$'s address are determined by $m(x)$.
Different records $x$ converge towards matching
$m(x)$ at different rates, together forming a sort of
``waterfall''. In a single table there will simultaneously be records
$x$ that agree with $m(x)$ in almost all of their bits, and (far
fewer) records that agree with $m(x)$ in only a few bits (these
records are at the ``top'' of the waterfall). An example of waterfall addressing is shown in Figure~\ref{fig:waterfall}.

When analyzing waterfall addressing, it will be useful to note that every recursively called subproblem of Algorithm \ref{alg:waterfall} has a power-of-$2$ number of bins (that is, $j = s$). In this case, the pseudocode for Algorithm \ref{alg:waterfall} simplifies considerably. Since we always use $g_1^{(a)}(x)$, rather than having to find some $g_i^{(a)}(x)$, we are able to skip lines 3--6 of Algorithm \ref{alg:waterfall}, resulting in Algorithm \ref{alg:waterfall2}.

\begin{algorithm}
  \caption{Computing Bin$(a, j, x)$ when $j = s$ \label{alg:waterfall2}}
  \begin{algorithmic}[1]
    \Require{We compute record $x$'s bin
      number after the $s$-th partial expansion. }

    \Statex

    \IIf{$2^{a + 1} = s$} \Return{$g^{(a)}_1(x)$} \EndIIf \Comment{Base case}

    \If{$g^{(a)}_1(x) \le s$}
    \Return{Bin$(a - 1, s, x)$}
    \Else{}
    \Return{$g^{(a)}_1(x) \cdot E + \left(m(x) \pmod E\right)$}
    \EndIf
 \end{algorithmic}
\end{algorithm}

We now give an analysis of (this basic version of) waterfall addressing.
\begin{lem}
  Waterfall addresses can be evaluated in $O(1)$ expected time (Algorithm \ref{alg:waterfall}). Moreover, for a given record $x$,
  Bin$(a, j, x)$ is uniformly distributed across $[2^a + jE]$.
\end{lem}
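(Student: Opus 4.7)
The plan is to separately bound the expected time and prove uniformity by induction on $a$.

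For the running time, the while loop on lines 3--6 terminates at the first $i$ with $g_i^{(a)}(x) \le s+j$. Since the $g_i^{(a)}$ are independent uniform samples on $[2s]$, each iteration succeeds with probability $(s+j)/(2s) \ge 1/2$, so the expected number of iterations is at most $2$. If the algorithm does not return on line~7, it recurses; by the remark preceding Algorithm~\ref{alg:waterfall2}, every recursive call has $j=s$ and hence reduces to that simpler body, which further recurses iff $g_1^{(a)}(x) \le s$---a Bernoulli$(1/2)$ event, independent across levels since each level uses a fresh hash function $g_1^{(a)}, g_1^{(a-1)}, \ldots$. The recursion depth is therefore geometric with mean at most $2$, each level does $O(1)$ work, and the total expected time is $O(1)$.

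For uniformity, I would induct on $a$, proving $\bin(a,j,x)$ is uniform on $[m_{a,j}] = [2^a + jE]$. The base case ($2^{a+1} = s$, $j=s$) is direct: the algorithm returns $g_1^{(a)}(x)$, which is uniform on $[2s] = [m_{a,s}]$. For the inductive step, let $v = g_i^{(a)}(x)$ be the value at loop exit. By a rejection-sampling argument (the same one underlying Larson's lemma), $v$ is uniform over the $s+j$ currently active chunk-indices. Split into two cases:
\begin{itemize}
\item If $v$ names an old chunk (probability $s/(s+j)$), the algorithm returns $\bin(a-1,s,x)$, which by induction is uniform on $[2^a] = [sE]$, contributing probability mass $\tfrac{s}{s+j}\cdot\tfrac{1}{sE} = \tfrac{1}{m_{a,j}}$ to each old bin.
\item If $v$ names a new chunk (probability $j/(s+j)$), the algorithm returns $v\cdot E + (m(x) \bmod E)$. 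Conditional on this case, $v$ is uniform over the $j$ new chunk-indices and, independently, $m(x) \bmod E$ is uniform on $[E]$; the output is uniform over the $jE$ newly added bin slots, again contributing probability mass $\tfrac{1}{m_{a,j}}$ per bin.
\end{itemize}
Summing the two cases, each bin in $[m_{a,j}]$ is hit with probability exactly $1/m_{a,j}$, completing the induction.

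The main subtlety is justifying the distribution of $v$ at loop exit together with its independence from $m(x)$. Independence is immediate since $m$ and the $g_i^{(a)}$ are distinct hash functions. Uniformity of $v$ follows from the standard fact that if iid uniform samples from $[2s]$ are truncated to the first one lying in a subset of size $s+j$, the result is uniform on that subset. With these two observations in hand, the case split cleanly yields both the inductive step and the stated uniform distribution.
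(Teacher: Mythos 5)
Your proposal is correct and takes essentially the same approach as the paper's proof: constant expected time by noting that each while-loop iteration and each level of the recursion terminates with probability at least $1/2$, and uniformity by induction on $a$ with the same two-case split on whether the loop exits to an old chunk (probability $s/(s+j)$, handled by the inductive hypothesis) or a new chunk (probability $j/(s+j)$, handled by the master hash). If anything your write-up is slightly more careful than the paper's, making explicit the rejection-sampling justification for the uniformity of $g_i^{(a)}(x)$ at loop exit and the independence from $m(x)$, and correctly writing the second-case probability as $j/(s+j)$ where the paper has a small typo.
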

\begin{proof}
We begin by analyzing the running time. Lines 3--6 of Algorithm \ref{alg:waterfall} take constant expected time since each iteration of the while loop has at least a $1/2$ probability of terminating. In subsequent levels of recursion, the algorithm reduces to Algorithm \ref{alg:waterfall2}, which takes constant time per layer of recursion. Algorithm \ref{alg:waterfall2} has exactly a $1/2$ probability of terminating in each level of recursion (because $\Pr[g^{(a)}_1(x) \le s] = 1/2$). Thus the expected time to evaluate Algorithm \ref{alg:waterfall2} is also constant.

Next, we argue that bin assignments are performed uniformly at random. Suppose by induction that this is true for all $a'<a$. The value of $g_i^{(a)}(x)$ used in lines 6--7 of Algorithm \ref{alg:waterfall} is uniformly random in $[s + j]$. Thus, with probability $s/(s + j)$, $x$ is assigned to the first $s$  chunks using the recursively computed value of $\text{Bin}(a - 1, s, x)$, which we know by induction to be uniform in $[2^a]$. On the other hand, with probability $j/s$, $x$ is assigned to a random one of the final $j$ chunks and is then given a random offset into the chunk using the master hash. This process assigns $x$ uniformly at random in $[2^{a} + 1, 2^{a} + jE]$. Combining the two cases, $x$ is assigned to a random position in $[2^a + jE]$.
\end{proof}

\subsubsection{Waterfall addresses in worst-case constant time.}

For Iceberg hashing, we want worst-case constant-time operations. To
this end, we now define \defn{truncated waterfall addressing}.

One of the main issues with waterfall addressing (and Larson's scheme before it) is that we may need to evaluate an arbitrarily long sequence of chunk hash functions $g^{(a)}_i$.  
Truncated waterfall addressing truncates the sequence $\{g^{(a)}_i(x)\}$ to end at $i = \log s$. A priori, this does not necessarily seem like progress, since (a) it introduces an issue of what to do on a \defn{truncation overflow}, that is, when $g^{(a)}_i(x) > s + j$ for all of $i \in \{1, 2, \ldots, \log s\}$, so that the search for an address does not terminate within the first $\log s$ $g^{(a)}_i$s; and (b) it does not appear to get us any closer to a worst-case constant-time waterfall addressing scheme.  We will address these issues one after another, first showing how to fix truncated waterfall addressing in the case where there is no valid $g^{(a)}_i,$ and then showing how to compute truncated waterfall addresses in constant time.

\paragraph{What to do on truncation overflow.} When a truncation overflow occurs, we fall back to assigning $x$ to reside among the first $s$
chunks using the recursively computed Bin$(a - 1, s,
x)$. That is, if none of the $g^{(a)}_i(x)$'s are usable, then $x$ is assigned to the same position to which it would have been assigned at the end of the previous doubling, when there were exactly $2^a$ bins. Pseudocode is given in Algorithm \ref{alg:waterfalltruncated}; the changes from Algorithm \ref{alg:waterfall} are in red.

  \begin{algorithm}
  \caption{Truncated Waterfall Address Computation: Computing Bin$(a, j, x)$ \label{alg:waterfalltruncated}}
  \begin{algorithmic}[1]
    \Require{Suppose we are doubling from size $2^a$ to $2^{a + 1}$ in
      chunks of size $E = 2^a / s$ bins. Let $m(x)$ be the
      master hash function. This function computes record $x$'s bin
      number after the $j$-th partial expansion. The smallest allowable table size is $s$.}

    \Statex

    \IIf{$2^{a + 1} = s$ and $j = s$} \Return{$g^{(a)}_1(x)$} \EndIIf \Comment{Base case}

    \Let{$i$}{$1$}

    \While{$g^{(a)}_i(x) > s + j$ {\color{red} \text{ and } $i \le \log s$}}\Comment{Truncation condition in red}
    \Let{$i$}{$i + 1$}
    \EndWhile

    \If{$g^{(a)}_i(x) \le s$ {\color{red} \text{ or } $i > \log s$}}
    \Return{Bin$(a - 1, s, x)$}\Comment{Truncation condition in red}
    \Else{}
    \Return{$g^{(a)}_i(x) \cdot E + \left(m(x) \pmod E\right)$}
    \EndIf
 \end{algorithmic}
\end{algorithm}

The next lemma establishes that the bin assignments performed by truncated waterfall addressing are nearly uniform, as required in the dynamic bin addressing problem.

\begin{lem} \label{lem:nearlyuniform}
  If there are $k$ bins, then for each record $x$ and each bin $b$, the probability that truncated waterfall addressing maps $x$ to $b$ is $\frac{1}{k} \cdot (1 + O(1/s))$.
\end{lem}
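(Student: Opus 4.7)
I would prove this by induction on the doubling level $a$. The base case $2^{a+1}=s$, $j=s$ is immediate because the algorithm returns $g_1^{(a)}(x)$, which is uniform on $[2s]$.

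For the inductive step, fix $(a,j)$ and set $k=2^a+jE$. The central parameter is $\alpha:=\Pr[g_i^{(a)}(x)>s+j]=(s-j)/(2s)\le 1/2$, so the truncation-overflow probability is at most $\alpha^{\log s}\le 1/s$. Condition on the first index $i^*$ at which $g_{i^*}^{(a)}(x)\le s+j$ (taking $i^*=\infty$ on overflow). For $i^*\le\log s$, the value $g_{i^*}^{(a)}(x)$ is uniform on $[s+j]$ independently of $i^*$: with probability $s/(s+j)$ it lies in $[s]$ and the algorithm recurses to $\text{Bin}(a-1,s,x)$, and with probability $j/(s+j)$ it lies in $(s,s+j]$ and the algorithm places $x$ in that chunk with offset $m(x)\bmod E$, uniform in $[E]$. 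On the overflow event the algorithm also falls through to $\text{Bin}(a-1,s,x)$.

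From here the argument splits into two routine computations. For an upper-chunk bin $b\in(2^a,2^a+jE]$, summing $\alpha^{i-1}\cdot\tfrac{1}{2sE}$ over $i=1,\dots,\log s$ and using $2s(1-\alpha)=s+j$ and $E(s+j)=k$ gives
\[
\Pr[\text{Bin}(a,j,x)=b]=\frac{1-\alpha^{\log s}}{k}=\frac{1}{k}(1+O(1/s)).
\]
For a lower-chunk bin $b\le 2^a$, the total probability of reaching the recursive branch (via $g_{i^*}^{(a)}(x)\le s$ or via overflow) equals $(s+j\alpha^{\log s})/(s+j)$, which differs from the non-truncated value $s/(s+j)$ by a $(1+O(1/s))$ factor since $j\alpha^{\log s}/s\le 1/s$. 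Multiplying by the inductive hypothesis $\Pr[\text{Bin}(a-1,s,x)=b]=(1/2^a)(1+O(1/s))$ and using $s/((s+j)\cdot 2^a)=1/k$ yields $\Pr[\text{Bin}(a,j,x)=b]=(1/k)(1+O(1/s))$.

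The main subtlety is confirming that the $O(1/s)$ slack does not accumulate as we recurse: at each level the slack is expressed relative to that level's uniform target $1/2^a$, and these targets chain via $s/((s+j)\cdot 2^a)=1/k$ into the next level's target, so the multiplicative error stays $(1+O(1/s))$ throughout. A useful consistency check is to sum the two cases over all $b$ and verify the total is $1$, which it is, since $\frac{jE}{k}(1-\alpha^{\log s})+\frac{s+j\alpha^{\log s}}{s+j}=1$.
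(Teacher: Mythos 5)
Your direct-computation-by-induction approach is genuinely different from the paper's, which instead compares truncated to non-truncated waterfall addressing: the two agree outside the truncation-overflow event (probability at most $1/s$), non-truncated addressing is exactly uniform on $[k]$, and on overflow the truncated algorithm falls through to a distribution that is exactly uniform on $[2^a]$. Your algebra is correct, and the exact expressions $\frac{1-\alpha^{\log s}}{k}$ and $\frac{s+j\alpha^{\log s}}{s+j}\cdot\Pr[\text{Bin}(a-1,s,x)=b]$ are a nice, concrete refinement of the paper's argument.

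The problem is how you dismiss the accumulation of slack. With the inductive hypothesis as you state it --- $\Pr[\text{Bin}(a-1,s,x)=b]=(1/2^a)\bigl(1+O(1/s)\bigr)$ --- each level multiplies a fresh $(1+O(1/s))$ into an inherited $(1+O(1/s))$, and over $\Theta(\log n)$ recursion levels this compounds to $(1+O(1/s))^{\Theta(\log n)}$, which is \emph{not} $1+O(1/s)$ when $s$ is as small as $\log n/\sqrt{\log\log n}$ (the regime the paper actually uses). Your closing remark that ``the targets chain via $s/((s+j)\cdot 2^a)=1/k$, so the multiplicative error stays $(1+O(1/s))$'' conflates the consistency of the mean targets with boundedness of the relative error; it does not stop the compounding. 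What actually saves the argument --- and is already latent in your own formula --- is that the recursion \emph{always} descends to $j=s$, where $\alpha=(s-j)/(2s)=0$: the truncation condition $g^{(a)}_i(x)>2s$ is vacuous since $g^{(a)}_i$ has range $[2s]$, so the slack term $j\alpha^{\log s}/s$ vanishes identically at every recursive level. Hence $\Pr[\text{Bin}(a',s,x)=b]=1/2^{a'+1}$ holds \emph{exactly} for every $a'$, and the $(1+O(1/s))$ factor enters at most once, at the topmost non-power-of-two level. You should strengthen the inductive hypothesis to exact uniformity at power-of-two sizes; this is precisely the observation the paper makes when it notes that truncated and non-truncated addressing coincide whenever the table size is an exact power of two.
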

\begin{proof}
  Suppose that $2^a < k \leq 2^{a+1}$ and let $j$ be the number of partial expansions that have occurred, that is, $k = 2^a + jE$. 

  The probability that $x$ experiences a truncation overflow is 
  $$
      \prod_{i = 1}^{\log s} \Pr[ g^{(a)}_i(x) > s + j] 
   \le \prod_{i = 1}^{\log s} \Pr[ g^{(a)}_i(x) > s] 
       = \frac{1}{2^{\log s}} 
       = 1/s.
  $$
    On the other hand, if a truncation overflow occurs then $x$ is assigned to bin Bin$(a-1, s, x)$. Importantly, Bin$(a-1, s, x)$ is uniformly random in the first $2^a$ bins, since truncated waterfall addressing and (non-truncated) waterfall addressing are equivalent in the case where the table size is an exact power of $2$ (in this case, both algorithms reduce to Algorithm \ref{alg:waterfall2}). 
  
  In summary, each key $x$ has only a $O(1/s)$ probability of being addressed differently by the two algorithms, and if $x$ is addressed differently, then truncated addressing assigns $x$ uniformly at random among $2^a = \Theta(k)$ bins. This implies the lemma.
\end{proof}

\paragraph{Truncated waterfall addressing in worst-case constant time. } 
Our final task is to compute truncated waterfall addressing in constant time.  This will require dealing with two issues: how to efficiently find the first $g^{(a)}_i(x)\leq s+j$ and how to eliminate the recursion.  As we shall see, the first problem can be dealt with by standard bit-manipulation techniques, whereas the second problem requires a more interesting algorithmic solution.

Notice that all of $g^{(a)}_1(x), g^{(a)}_2(x), \ldots, g^{(a)}_{\log s}(x)$ consume $O(\log^2 s)$ bits. As long as $s$ is not too large (i.e., $\log^2 s = O(w)$) it follows that the entire sequence $g^{(a)}_1(x), g^{(a)}_2(x), \ldots, g^{(a)}_{\log s}(x)$ can be packed into a single machine word (using the definition of packing given in Section \ref{sec:iceberg-hashing}), which we will denote by $\phi^{(a)}(x)$. Moreover, we can compute the entire sequence in constant time by computing $\phi^{(a)}(x)$ as a single $O(\log^2 s)$-bit hash of $x$, and then zeroing out every $\log s + 1$st bit in order to add appropriate padding.

By performing bit manipulation on $\phi^{(a)}(x)$, we can perform Lines 3--6 of Algorithm
\ref{alg:waterfalltruncated} in constant time.

\begin{lem} \label{lem:findgi}
  
  Let $r$ and $b$ be integers so that $r(b+1) \leq w$.  Let $\phi_1, \phi_2, \ldots, \phi_r$ be $b$-bit numbers packed into word $\phi$, and let $q$ be a $b$-bit number.  In constant time, one can determine the minimum $i$ such that $\phi_i \leq q$, or return $i=-1$ if no such $i$ exists.
 
\end{lem}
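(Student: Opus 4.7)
The plan is to reduce the problem to one application of the word-level comparison tool (Tool 4) listed earlier, followed by a single most-significant-bit query (Tool 2), both of which run in constant time.

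First, I would construct a word $Q$ that packs $r$ copies of $q$ in the same layout as $\phi$, i.e. $Q = q \circ q \circ \cdots \circ q$. Since $q$ is a $b$-bit number and $r(b+1) \le w$, Tool 3 produces $Q$ in constant time. Next, I would apply Tool 4 to the pair $(Q, \phi)$ to obtain an indicator word $Z$ whose $i$-th padding bit is $1$ iff $q \ge \phi_i$, i.e.\ iff $\phi_i \le q$, and whose non-padding bits are $0$. This takes $O(1)$ time.

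At this point, finding the minimum $i$ with $\phi_i \le q$ is the same as finding the \emph{leftmost} (most significant) set padding bit in $Z$, because under the convention $\phi = \phi_1 \circ \phi_2 \circ \cdots \circ \phi_r$ the chunk $\phi_1$ occupies the highest-order bits and $\phi_r$ the lowest. So I would test whether $Z = 0$ (returning $-1$ in that case), and otherwise use Tool 2 to locate the position $p$ of the most significant $1$-bit of $Z$. From $p$, the index $i$ is recovered by the fixed arithmetic $i = r - \lfloor p / (b+1) \rfloor$, which is $O(1)$ arithmetic on words.

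The only subtlety — and thus the main thing to be careful about — is matching the index orientation of the packing convention with the bit-position returned by the least/most-significant-bit primitive: since Tool 4 places comparison results into the exact padding positions of $\phi$, and since $\phi_1$ lives in the high-order region, one must query the \emph{most} significant $1$-bit (Tool 2) rather than the least significant one used in Lemma~\ref{lem:membership}. No other probabilistic or combinatorial argument is needed; everything else is routine word-RAM bookkeeping.
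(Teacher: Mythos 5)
Your proof is correct and takes essentially the same approach as the paper: pack $r$ copies of $q$, use the word-parallel comparison primitive to get an indicator word, and return the most significant set bit (or $-1$ if none). The paper's own proof is just a terser version of this; the extra care you take about index orientation and the $Z=0$ check is consistent with what the paper leaves implicit.
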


\begin{proof} 
The proof follows the same approach as Lemma \ref{lem:membership}. Pack $k$ copies of $q$ in a new word $Q$.  Compare $Q$ with $\phi$ and return the most significant $1$-bit of the resulting indicator word.
\end{proof}

Although lines $3$--$6$ of algorithm \ref{alg:waterfalltruncated} can be evaluated in constant time, there is still the issue of the recursion on line $7$ causing a potentially superconstant running time. Since the recursive subproblems are always on a power-of-two number of bins, the challenge becomes to evaluate Algorithm \ref{alg:waterfall2} in constant time.

Define the \defn{promotion sequence} $P(x)$ for $x$ to be the indicator word where
\[P_i(x) = \begin{cases} 1 & \text{ if } g^{(i)}_1(x) > s \\ 0 & \text{ otherwise}. \end{cases} \]
Another way to view Algorithm \ref{alg:waterfall2} is that we are finding the largest $a' \leq a$ such that $P_{a'}(x) = 1$, and we are then returning 
\[g^{(a')}_1(x) \cdot E' + \left(m(x) \pmod {E'}\right)\]
where $E' = 2^{a'} / s$. Thus, the task of computing Algorithm \ref{alg:waterfall2} reduces to the task of computing $a'$.

If we were given the promotion sequence $P(x)$, then we could determine $a'$ in constant time by standard
bit manipulation (see the discussion of bit manipulation in Section \ref{sec:iceberg-hashing}). The problem is that $P(x)$ consists of one bit from each of $\phi^{(1)}(x), \phi^{(2)}(x), \ldots$, each of which individually takes constant time to compute.

To fix this problem, we introduce one final algorithmic idea, reversing the relationship between $P(x)$ and $\phi^{(1)}(x), \phi^{(2)}(x), \ldots$. Let $P(x)$ be the output of a random hash function, and let $\psi^{(1)}(x), \psi^{(2)}(x), \ldots$ be hash functions each of which maps $x$ to a word that packs $\log s$ hashes of $\log s$ bits each (for a total of $\log s (1+ \log s)$ bits). Then we define $\phi^{(i)}(x)$ to equal $\psi^{(i)}(x)$ except with its most significant bit (i.e., the most significant bit of $g^{(i)}_1(x)$) overwritten by $P_i(x)$. That is, rather than using one bit from each $\phi^{(i)}(x)$ to determine $P(x)$, we use $P(x)$ to determine one bit in each $\phi^{(i)}(x)$. Importantly, the construction of $\phi^{(i)}(x)$ is overwriting the most significant bit of $\psi^{(i)}(x)$ with a \emph{random} bit, so $\phi^{(i)}(x)$ is still random. On the other hand, the construction makes it so that $P(x)$ is just a hash of $x$ and can be computed in constant time. Using $P(x)$, we can then determine $a'$ in constant time using standard bit manipulation, as desired. 

Putting the pieces together we arrive at the following theorem which establishes that truncated waterfall addressing is a constant time solution to the dynamic bin addressing problem.

\begin{thm}
Suppose that $\log s (1 + \log s) \le \word$ where $\word$ is the machine word size. Then, truncated waterfall addressing can be computed in constant time, satisfies the clean promotion property, and selects each of $k$ bins with probability $1/k \cdot (1 + O(1/s))$.
\end{thm}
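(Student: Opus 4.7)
The plan is to verify the three required properties---constant-time evaluation, clean promotion, and near-uniform bin selection---with most of the effort going to the first. Near-uniformity is already handled by Lemma~\ref{lem:nearlyuniform}, so it suffices to invoke it. Clean promotion will follow directly from inspecting Algorithm~\ref{alg:waterfalltruncated}: the only dependence on $j$ lies in the while-loop guard $g^{(a)}_i(x) > s+j$, so increasing $j$ to $j+1$ can only change the output by allowing an earlier $i^{*}$ with $g^{(a)}_{i^{*}}(x) = s+j+1$ to become usable; when this happens the returned address is $g^{(a)}_{i^{*}}(x)\cdot E + (m(x)\bmod E)$, which lands in chunk $s+j+1 \in (m_{a,j}, m_{a,j+1}]$. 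The truncation-overflow case is handled identically: either the overflow persists and the output is unchanged, or some $g^{(a)}_i(x)$ with $i\le \log s$ newly becomes $s+j+1$, sending $x$ into the new expansion region.

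For the constant-time bound, I would separate the two potential sources of non-constant work in Algorithm~\ref{alg:waterfalltruncated}: the while loop on lines~3--6, and the recursion on line~7. For the while loop, the word-size hypothesis $\log s(1+\log s)\le \word$ is exactly what is needed to pack $g^{(a)}_1(x),\ldots,g^{(a)}_{\log s}(x)$ into a single word $\phi^{(a)}(x)$, which is produced in $O(1)$ time as a single hash of $x$ followed by zeroing out padding bits. Applying Lemma~\ref{lem:findgi} to $\phi^{(a)}(x)$ with query $q=s+j$ finds the minimum qualifying $i$ (or detects truncation overflow) in $O(1)$ time.

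The main obstacle is eliminating the recursion on line~7, which only fires when either $g^{(a)}_1(x)\le s$ or a truncation overflow occurs. I would follow the reversal trick suggested in the excerpt: introduce an independent master promotion hash $P(x)$ and, for each level $i$, let $\phi^{(i)}(x)$ be built from an independent hash $\psi^{(i)}(x)$ with the most significant bit of $g^{(i)}_1(x)$ overwritten by $P_i(x)$. Because $P_i(x)$ is itself a uniform random bit, $\phi^{(i)}(x)$ remains uniformly distributed, so none of the analysis (including Lemma~\ref{lem:nearlyuniform}) is affected. The key gain is that $P(x)$ is now available as a single word in $O(1)$ time, independently of any $\phi^{(i)}(x)$. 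Unrolling the chain of recursive calls in line~7 of Algorithm~\ref{alg:waterfalltruncated}, the recursion at level $i$ terminates exactly when $g^{(i)}_1(x) > s$, equivalently when $P_i(x)=1$; thus we only need the largest $a' \le a$ with $P_{a'}(x)=1$. This $a'$ is found by a most-significant-1-bit query on $P(x)$ masked to its top $a+1$ bits, which is $O(1)$ by the bit-manipulation toolkit of Section~\ref{sec:iceberg-hashing}. Once $a'$ is known, we compute $\phi^{(a')}(x)$ in $O(1)$ time, read off $g^{(a')}_1(x)$, and return $g^{(a')}_1(x)\cdot E' + (m(x)\bmod E')$ with $E'=2^{a'}/s$ (handling the base case $2^{a'+1}=s$ separately).

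Putting these three pieces together yields the theorem: near-uniformity is Lemma~\ref{lem:nearlyuniform}, clean promotion is the structural observation about the loop guard, and the total running time is a constant number of word-RAM operations (one hash evaluation for $\phi^{(a)}(x)$, one for $P(x)$, one call to Lemma~\ref{lem:findgi}, one most-significant-bit query, plus arithmetic), with no remaining recursion.
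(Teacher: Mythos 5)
Your proposal is correct and follows the paper's proof essentially verbatim: near-uniformity via Lemma~\ref{lem:nearlyuniform}, clean promotion by inspecting how raising $j$ can only unlock an earlier $g^{(a)}_{i^*}(x)=s+j+1$ landing in the new chunk, and constant time by packing the $g^{(a)}_i$'s into a word for Lemma~\ref{lem:findgi} plus the $P(x)$-reversal trick (defining $\phi^{(i)}$ from an independent $\psi^{(i)}$ with its leading bit overwritten by $P_i(x)$) to eliminate the recursion by a single most-significant-bit query on the promotion sequence. The paper states this same argument more tersely; you have just made the clean-promotion and recursion-unrolling steps explicit.
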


\begin{proof}
  When evaluating Algorithm \ref{alg:waterfall2}, we can use Lemma \ref{lem:findgi} to perform the first level of recursion in constant time. The next level of recursion is guaranteed to be in the case where there are a power-of-$2$ bins. This case can be evaluated in constant time using the promotion sequence.
  
  The fact that truncated waterfall addressing satisfies the clean promotion property follows from the definition. The fact that bin assignment is nearly uniform follows from Lemma \ref{lem:nearlyuniform}.
\end{proof}

\subsection{Determining Which Records to Move}\label{subwhichrecords}
The clean promotion property ensures that, when a partial expansion occurs, the number of records whose address changes will be roughly a $1/s$ fraction of all records. 
If there are $n$ total records, then we wish to be able to identify which records to move in time $O(n/s)$. This means that we cannot simply traverse the table to find the records.

In this subsection, we show how to add a small amount of metadata to each bin so that we can efficiently detect which records to move during a partial expansion of truncated waterfall addressing. For simplicity, we will restrict ourselves to the case of $s \le \polylog n$ since it is the case that we will care about for Iceberg hashing.

\paragraph{Linked lists in each bin.} In this subsection, we will assume there are $O(n/s)$ bins and that the contents of each bin are stored contiguously in an array (note that this is not quite true for Iceberg hashing, since Iceberg hashing stores some elements in a backyard, but we will handle this issue later). Within each bin $b$, we maintain $s$ linked lists $L_1(b), L_2(b), \ldots L_s(b)$, where $L_\ell(b)$ consists of the records whose next address change will occur on the $\ell$th partial expansion of either the current doubling or some future doubling.

In more detail, for a record $x$ in bin $b$, the value $\ell$ can be computed as follows. Suppose we are currently doubling from $2^a$ to $2^{a+1}$ bins and that we have completed $j$ partial expansions, so there are $2^a + jE$ bins. If $g^{(a)}_1(x) > s + j$, then $x$'s current bin assignment must be determined by $g^{(a)}_i(x)$ for some $i > 1$. Then for all $q \in [1,i)$, we have $g_q^{(a)} > s+j$ and thus
\begin{equation}
    \ell = \min_{q \in [1,i)} g^{(a)}_q(x).
    \label{eq:l1}
\end{equation}
On the other hand, if $g^{(a)}_1(x) \le s + j$, then $x$ is in its final position for the current doubling. Suppose that $x$'s next promotion is during the $2^{a'}$ doubling, that is, 
$a' = \mathrm{argmin}_{a''>a} \{P_{a''}(x) = 1\}$.\footnote{If no such $a'$ exists, then we can feel free to not place $x$ in any linked list.}
Let $i = \mathrm{argmin}_{i'}\{g^{(a')}_{i'}(x) \le s\}$ (or $i = \log s + 1$ if no such $i$ exists).
In this case, 
\begin{equation}
 \ell = \min_{q \in [1,i)} g^{(a')}_q(x).
 \label{eq:aprime}
\end{equation}
We denote a record $x$'s choice of $\ell$ by $\ell^{(a)}_j(x)$, where $j$ is the number of partial expansions we have performed so far in the $2^{a}$ doubling.

\paragraph{Why the linked lists help.}
When we are performing the $j$th partial expansion, we need only examine the linked list $L_j(b)$ for each bin $b$. Not all of the elements of $L_j(b)$ will necessarily move during the partial expansion (some of them will move during future doublings), but all of the elements that we wish to move will be in a linked list $L_j(b)$ for some $b$. The next lemma bounds the total number of elements that are examined during a partial expansion. 

\begin{lem}
  Let $n$ be the number of records, let $k$ be the number of bins, suppose $s \leq \polylog n$, and let $j \in [s]$. Then w.s.h.p. in $n$,
  \begin{equation}
  \sum_{b = 1}^{k} |L_j(b)| = O(n/s).
  \label{eq:listsize}
  \end{equation}
  \label{lem:listsizes}
\end{lem}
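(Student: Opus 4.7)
The plan is to express $\sum_b |L_j(b)|$ as a sum of $n$ independent $\{0,1\}$-indicators and apply a Chernoff bound. Each record $x$ is stored in at most one linked list, namely $L_{\ell_j^{(a)}(x)}(\g(x))$, so
\[
\sum_b |L_j(b)| \;=\; \sum_x \mathbf{1}\bigl[\ell_j^{(a)}(x) = j\bigr].
\]
The random variable $\ell_j^{(a)}(x)$ is determined entirely by the hash values $g_i^{(a'')}(x)$ and promotion bits $P_{a''}(x)$ of $x$ alone, which are independent of the hashes of every other record. Hence the $n$ indicators above are mutually independent.

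The crux is to bound $\Pr[\ell_j^{(a)}(x) = j] = O(1/s)$ uniformly in $x$ and in the current state of the table. I would unify the two cases \eqref{eq:l1} and \eqref{eq:aprime} as follows: regardless of which applies, $\ell = j$ means that $x$'s next move places it into the single newly-added chunk (of $E^{*} = 2^{a^{*}}/s$ bins) during partial expansion $j$ of some doubling $a^{*} \geq a$. Conditioning on which $a^{*}$ contains $x$'s next move---these events partition the probability space apart from a negligible ``$x$ never moves again'' event, so their probabilities sum to at most $1$---the conditional event becomes exactly ``$x$'s bin immediately after partial expansion $j$ of doubling $a^{*}$ lies in the $j$th new chunk,'' which by Lemma~\ref{lem:nearlyuniform} has probability $E^{*}/(2^{a^{*}} + j E^{*}) \cdot (1 + O(1/s)) = O(1/s)$. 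Aggregating against the summing-to-at-most-$1$ probabilities of the events ``$a^{*}$ is the next-move doubling of $x$'' yields the claimed $O(1/s)$ overall bound; the truncation-overflow possibility of Algorithm~\ref{alg:waterfalltruncated} in any future doubling only reduces the probability further.

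By linearity of expectation, $\E\bigl[\sum_b |L_j(b)|\bigr] = O(n/s)$. Because $s \leq \polylog n$, this expectation is at least $n / \polylog n$, and independence lets a standard multiplicative Chernoff bound deliver
\[
\Pr\left[\sum_b |L_j(b)| \geq c\, n/s\right] \;\leq\; \exp(-\Omega(n/s)) \;\leq\; 2^{-n/\polylog n}
\]
for a large enough constant $c$, which is the desired w.s.h.p.\ bound. The main technical obstacle is the per-record probability bound in the middle paragraph: cleanly matching the combinatorial event $\ell_j^{(a)}(x) = j$ with a near-uniformity statement about bin assignment, and verifying that summing across possible future doublings (with a geometric tail on how far into the future $x$'s next promotion occurs) does not cause the $O(1/s)$ bound to degrade.
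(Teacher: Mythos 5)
Your proposal follows essentially the same strategy as the paper's proof: bound $\E\bigl[\sum_b|L_j(b)|\bigr]$ by $O(n/s)$ via a per-record probability bound derived from near-uniformity, then conclude with a Chernoff bound (the threshold $s\le\polylog n$ ensures $n/s\ge n/\polylog n$, which gives the w.s.h.p.\ tail). Your unification of the two cases via conditioning on the doubling $a^*$ that contains $x$'s next move is a cleaner packaging than the paper's explicit case split between the current doubling and future doublings, and the observation that the events ``$a^*$ is the next-move doubling'' partition the space so their probabilities sum to at most $1$ is the right way to aggregate.

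There is, however, a gap in the conditional probability step, and it is precisely the spot you flagged as the main technical obstacle. You write that, conditioned on $a^*$ being the next-move doubling, the relevant event has probability $O(1/s)$ ``by Lemma~\ref{lem:nearlyuniform}.'' But Lemma~\ref{lem:nearlyuniform} is an \emph{unconditional} statement about the bin distribution, and the conditioning changes the distribution: conditioning on $a^*>a$ being the next-move doubling amounts (up to independence across doublings) to conditioning on $g^{(a^*)}_1(x)>s$, which shifts the chunk-assignment probabilities. The paper closes this gap with the elementary bound
\[
\Pr\bigl[\,\ell^{(a^*)}_{j-1}(x)=j \,\bigm|\, g^{(a^*)}_1(x)>s\,\bigr]\;\le\;\frac{\Pr\bigl[\ell^{(a^*)}_{j-1}(x)=j\bigr]}{\Pr\bigl[g^{(a^*)}_1(x)>s\bigr]}\;=\;2\,\Pr\bigl[\ell^{(a^*)}_{j-1}(x)=j\bigr],
\]
and only then invokes near-uniformity on the unconditional probability to get $O(1/s)$. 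Your argument needs this (or an equivalent) bridge from the unconditional near-uniformity statement to the conditional setting; once it is inserted, the sum over $a^*$ (weighted by probabilities summing to at most $1$) goes through exactly as you wrote.

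One minor notational caution: you use $\ell_j^{(a)}(x)=j$ for the indicator event, but in the paper's notation $\ell^{(a)}_{j_0}(x)$ carries the current number $j_0$ of completed partial expansions as its subscript, which is distinct from the fixed index $j$ of the list $L_j$ in the lemma. The event you want is $\ell^{(a)}_{j_0}(x)=j$ for the current state $j_0$.
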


\begin{proof}
  It suffices to bound the expected value of \eqref{eq:listsize}, since the lemma then follows by a Chernoff bound. Let $L = \bigcup_{b = 1}^{k} L_j(b)$. There are two cases for an  element $x \in L$: 
  \begin{itemize}
      \item Case 1: $x$'s address changes during the $j$th partial expansion, meaning that $x$ gets moved into the $(s + j)$th chunk. By Lemma \ref{lem:nearlyuniform}, the expected number of records $x$ in this case is $O(n/s)$.
      \item Case 2: $x$'s address does not change again a future $2^{a'}$ doubling, $a'>a$. That is, $a'$ is the smallest $a' > a$ such that $P_{a'}(x) = 1$ (or, equivalently, $g^{(a')}_1(x) > s$). In this case, the probability that $x \in L$ is at most 
      $$\Pr[\ell^{(a')}_{\ell - 1}(x) = \ell \mid g^{(a')}_1(x) > s].$$
      Since $\Pr[g^{(a')}_1(x) > s] = 1/2$, the above probability is at most
      $$2\Pr[\ell^{(a')}_{\ell - 1}(x) = \ell].$$
      On the other hand, by Lemma \ref{lem:nearlyuniform}, each record has a $O(1/s)$ chance that $\ell^{(a')}_{\ell - 1}(x) = \ell$. Thus, the expected number of records in this case is $O(n/s)$.
  \end{itemize}
\end{proof}

\paragraph{Maintaining the lists.} When maintaining the linked lists, there are two concerns: the space consumed by the linked lists, and the time needed to update the linked lists per hash table operation.

Because each linked list is confined to a single bin, it can be implemented using pointers consisting of $\Theta(\log h)$ bits. Indeed, because each bin has capacity $\Theta(h)$, pointers within the bin can be implemented as numbers between $1$ and $\Theta(h)$, thereby requiring only $\Theta(\log h)$ bits each. Assuming that $h \le \polylog n$, the linked lists introduce at most $O(\log h) = O(\log \log n)$ bits of space overhead per key, or $O(n \log \log n)$ bits of space overhead in total.

The larger issue is how to compute $\ell^{(a)}_j(x)$ in constant time for a given record $x$. Here, we make use of the following remarkable fact.

\begin{lem}
    Let $s_1, s_r, \ldots s_k$ be $b$-bit numbers packed into word $S$ so that the numbers and padding bits take no more that $\sqrt{\word}/3$ bits.  Then $\min_{i = 1}^{k} S_i$ can be computed in constant time.  
  
  \label{lem:min}
\end{lem}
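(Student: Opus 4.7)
The plan is to use word-level parallelism to compute all $k^2$ pairwise comparisons between the $s_i$'s in a single word, and then read off the index of the minimum in constant time. The hypothesis $k(b+1) \le \sqrt{\word}/3$ is exactly what gives us the room to do this: it lets us pack $k$ copies of $S$ into a single machine word, since $k \cdot k(b+1) \le \word/9 \le \word$.

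First, I would construct two auxiliary words $T$ and $U$, partitioned into $k$ \emph{macro blocks} of width $k(b+1)$. In $T$, every macro block is a copy of $S$; this can be built in $O(1)$ time by multiplying $S$ by the replication constant $\sum_{i=0}^{k-1} 2^{i \cdot k(b+1)}$, exactly as in tool~3 of the bullet list from Section~\ref{sec:iceberg-hashing}. In $U$, the $i$-th macro block consists of $k$ copies of $s_i$; this can be built by multiplying $S$ by a constant that shifts each $s_i$ into the first position of the $i$-th macro block (and then masking away the tail), and then applying tool~3 within each macro block. The padding bits between individual entries prevent carries from crossing between adjacent $b$-bit fields, so these products do the right thing.

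Next, I would apply tool~4 of Section~\ref{sec:iceberg-hashing} to compare $T$ and $U$ entrywise, producing an indicator word $Z$ whose $((i-1)k+j)$-th padding bit is $1$ iff $s_i \le s_j$. The minimum value among the $s_i$'s is characterized by the property that its entire macro block in $Z$ consists of all $1$'s (in the padding-bit positions), while every other macro block is missing at least one such $1$. I would then detect an all-ones macro block by the standard additive-carry trick: add $Z$ to a carefully chosen mask that has a $1$ placed one bit above each macro block, so that a macro block of all $1$'s propagates a carry into its header bit and any other macro block does not. The position of the least significant set header bit is then found by the least-significant-bit operation, yielding the index $i^\star$ of a minimum in $O(1)$ time, from which $s_{i^\star}$ is extracted from $S$ by a final mask and shift.

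The main obstacle is the bit-level bookkeeping for $U$: we want $k$ in-place copies of each $s_i$ without cross-contamination, and the standard replication trick via multiplication naturally replicates only one field at a time, so we need to interleave multiplications with masks in a way that respects the padding pattern. The slack between $k(b+1)$ and $\sqrt{\word}/3$ is precisely what ensures that all intermediate words ($T$, $U$, $Z$, and the carry mask) fit within a single machine word, so every step is a constant-time word operation.
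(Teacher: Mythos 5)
Your proposal follows the same approach as the paper's proof: construct a word $T$ ($A$ in the paper) of $k$ copies of the full sequence and a word $U$ ($D$ in the paper) whose $i$-th macro block is $k$ copies of $s_i$, perform an all-pairs comparison to get an indicator word, and then use the additive-carry trick to locate the first macro block that is all ones, which identifies the minimum. The one step you flag as "the main obstacle" — producing $U$ — is precisely what the paper resolves with its two-stage replication: replicate $S$ at a period of $(k-1)$ fields (word $B$) so that $s_i$ drifts into the first slot of the $i$-th macro block, mask to leave one survivor per block ($C$), and then replicate $C$ with successive $(b+1)$-bit shifts to fill each block; your "multiply by a constant, then mask, then replicate within blocks" is exactly this construction, so the plan is sound.
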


\begin{proof}
The idea behind this proof is to construct two words $A$ and $D$, where $A$ contains $k$ copies of $s_1,\ldots,s_k$ and $D$ consists of $k$ copies of $s_1$ followed by $k$ copies of $s_2$ and so on. Then  $O(1)$ word operations on $A$ and $D$ can be used to compared every pair of $s_i$, $s_j$, yielding a comparison indicator word $E$ from which we compute the minimum.

Let $A$ be a word containing $k$ copies of $s_1, \ldots, s_k$, with a copy of the sequence appearing every $3(b+1)k$ bits.  Let $B$ be a word containing $k$ copies $s_1, \ldots, s_k$, with a copy of the sequence appearing every $3(b+1)(k-1)$ bits.  Mask out all but the numbers stored at multiples of $3(b+1)k$ from $B$ and call this $C$.  So $C$ has $s_1$ right justified in the last $3(b+1)k$ bits, $s_2$ in the preceding $3(b+1)k$ bits, etc.  Let $D$ consist of $k$ copies of $C$, with each copy shifted left by $b+1$ positions.  Now $D$ consists of $k$ copies of $s_1$, then $k$ copies of $S_2$, etc.  Compare $D$ with $A$ to perform an all pairwise comparison between the $s_i$ (recall that we discuss how to perform comparisons of packed machine words in Section \ref{sec:iceberg-hashing}).  

Let $E$ be the resulting comparison-indicator word.  Consecutive indicator bits are always separated by $b$ bits, and we set these $b$ bit separations to consist of all $1$s.  Now we are looking for a run of $(b+1)k$ 1s in a row, indicating that some $s_i$ is no greater than all the other $s_j$s.  We find this by adding one to the least significant position of each putative run, to see if the summation carries along the entire length of the run. We then identify the first such run by masking out all but the potential ``carry'' bits (one bit after each potential run) and computing the minimum $s_i$ from the position of the least significant carry bit.
\end{proof}

We now consider the task of computing $\ell^{(a)}_j(x)$. 
Assuming that we have $a'$ and $i$, then \eqref{eq:l1} and \eqref{eq:aprime} can be evaluated in constant time using Lemma \ref{lem:min}. The value of $a'$ in \eqref{eq:aprime} can be found in constant time by  using standard bit tricks on the promotion sequence. The value of $i$ (in either \eqref{eq:l1} or \eqref{eq:aprime}) can then be found using Lemma \ref{lem:findgi}. Thus we can obtain $\ell^{(a)}_j(x)$ in constant time. 

Putting the pieces together, we arrive at the following theorem.

\begin{thm}
  Let $n$ be the number of records. Assume $h = \Omega(s)$ and $h \le \polylog n$, where $\Theta(h)$ is the maximum bin size. Then, the linked lists $L_j(b)$ can be maintained in constant time per operation and induce at most $O(\log h)$ bits of overhead per key. Additionally, w.s.h.p.\ in $n$, the set of records that move during the next partial expansion can be identified in time $O(n/s)$. Finally, in the EM model using a cache line of some size $B = \Theta(h)$ and a cache of some size $M = \omega(B)$, the set of records that move during the next partial expansion can be identified with $O(n / B)$ cache misses.
    \label{thm:findrecords}
\end{thm}

It is worth taking a moment to better understand the $O(n/B)$ cache-miss bound. This bound follows simply from the fact that one can upper-bound the number of cache misses by the cost of performing a linear scan through the hash table (i.e., loading each bin into cache once). When we apply waterfall addressing to Iceberg hashing (Theorem \ref{thm:dynamic}), we will see that identifying the items that need to be moved is not the only source of cache misses (we must also actually move them; and we must also do some special-case handling for the backyard). However, the $O(n/B)$ cache misses spent identifying the items will continue to be the dominant term in the cost. Since $B = \Theta(h)$, and since Iceberg hashing performs rebuilds at most once every $O(n\log h / \sqrt{h})$ operations, the cache-miss cost of rebuilds per operation will end up being 
$$O\left(\frac{n/B}{n\log h / \sqrt{h}}\right) = O\left(\frac{\log h}{\sqrt{h}}\right).$$

We remark that partial contractions (that is, when a chunk is removed rather than added) are much simpler than partial expansions because the set of records that must be moved is readily apparent (they are the records in the chunk being removed).

\subsection{Implementing (Truncated) Waterfall Addressing in an Iceberg Hash Table}\label{sec:resizing_iceberg}

In this subsection, we describe how to implement waterfall addressing in
an Iceberg hash table in order to achieve efficient dynamic
resizing.

Because the backyard in an Iceberg hash table is so small, it can
be maintained using any (deamortized) resizing scheme. Thus our focus
will be on resizing the number of bins in the front yard of the
table. We use truncated waterfall addressing with partial expansions
(and contractions) to resize the table.

We use the standard Allocate Free Model of memory \cite{liu2020succinct}. If we are performing $s$
partial expansions per doubling, then the total number of memory
allocations for a table of size $n$ is $O(s \log n) \le \polylog n$. We will
assume that we have a large enough cache that pointers to the
allocated memory chunks can be cached at all times.

\paragraph{The main challenge: Maintaining the Iceberg analysis.}
Recall from the analysis of (static-size) Iceberg hashing that the Iceberg Lemma is used to upper bound the number of items in the backyard.  When we move items, if we are not careful, we may end up pushing extra items into the backyard and arriving at a state that cannot be analyzed by the Iceberg Lemma.  Thus our use of Waterfall addressing in Iceberg hashing, and its deamortization, ends up with some  complications in order to guarantee something fairly straightforward: that the state of the system (including who is in the front yard and who is the backyard) is consistent with an instantaneous expansion or contraction that  can be analyzed by the Iceberg Lemma.

More specifically, the issue that we must be careful about is the following.
Whenever we
move a record $r$ into a new bin $b$ during an expansion or contraction,
one can think of that move as representing a new insertion into the
bin $b$. But the timing of the insertion will be dependent on the bin
number $b$ (and on where the record was before the move), which means we cannot simply analyze the insertion as
being into a random bin. That is, we must analyze records $b$ that are
moving around due to a partial expansion or contraction differently than
we would treat records that are being inserted by the user.

\paragraph{Implementing partial expansions.} Suppose we are adding a
new chunk $C$, and let $t_0$ be the time at which we begin the
partial expansion.

At time $t_0$, we allocate memory to the chunk $C$. For each record
$x$, let $\g_{\text{old}}(x)$ denote the bin that $x$ would be assigned
to without chunk $C$ and let $\g_{\text{new}}(x)$ denote the bin that
$x$ would be assigned to with chunk $C$ present (i.e. after the partial expansion). For now let us assume
that, until the partial expansion is complete, queries will treat the
chunk $C$ as being \defn{semi-present}, meaning that a query for a
record $x$ will check both $\g_{\text{old}}(x)$ and
$\g_{\text{new}}(x)$. Of course, most records $x$ will satisfy
$\g_{\text{old}}(x) = \g_{\text{new}}(x)$, in which case the query is
unaffected.

Once $C$ has been allocated, the partial expansion is performed in
three parts.
\begin{itemize}
\item \textbf{The Preprocessing Phase:} In this phase, we construct a
  new counter in each bin that we call the \defn{demand counter}. The
  demand counter in bin $b$ keeps track of how many records $x$ in the
  table (including in the backyard and in other bins) satisfy
  either $\g_{\text{old}}(x) = b$ or $\g_{\text{new}}(x) = b$. (Importantly, this means that a
  single record could contribute to two different demand counters.)

  Later we will describe how to deamortize the phase. As
  the phase is performed, any concurrent operations also update the
  demand counters of the bins that they modify.

\item \textbf{The Time Freeze:} Let $t_1$ be the moment in time immediately after the Preprocessing Phase completes. We refer to $t_1$ as the time freeze point. Roughly speaking, we will try to simulate the partial
  expansion as having occurred instantaneously at time $t_1$.

  At time $t_1$, every bin reserves some of its slots for records that
  are currently in the table.\footnote{These reservations are performed logically at $t_1$ but do not require any physical action at $t_1$.} If a bin has demand counter $d$, and the
  capacity of the bin is $r = h + \tau_h$, then the bin reserves
  $\min(d, r)$ slots for records currently in the table. Any records
  that are currently in the bin are immediately given reserved slots.

\item \textbf{The Reshuffling Phase:} Call a record
  \defn{grandfathered} if it was in the table at time $t_1$ and has
  remained in the table since. The reshuffling phase identifies which records
  in the table (including both in the first and backyards) 
  are grandfathered\footnote{Since
    space efficiency in the backyard is not important, we can simply
    have separate tables for the grandfathered and non-grandfathered
    records.}, and attempts to move each grandfathered record $x$ to a
  reserved slot in $\g_{\text{new}}(x)$. If there is a free reserved
  slot in $\g_{\text{new}}(x)$, then $x$ is given that slot, and
  otherwise $x$ is sent (possibly back) to the backyard. If $x$ is
  being moved from $\g_{\text{old}}(x)$ in which it was taking up
  a reserved slot, then the number of reserved slots in that bin is
  decremented by $1$ (because $x$ is no longer present in that bin).
 
  Later we will describe how to deamortize the phase. During the phase, concurrent operations may take
  place. If a grandfathered record $x$ is deleted,
  then for each of the bins
  $b \in \{\g_{\text{old}}(x), \g_{\text{new}}(x)\}$, if $x$ was either
  residing in bin $b$ or if there is a free reserved slot in bin $b$
  (think of this slot as being reserved for $x$), then the operation
  that removes $x$ also decrements the number of reserved slots in bin
  $b$. If a new record $x$ is inserted during the phase (note that $x$
  is therefore \emph{not} grandfathered), and the only free slots in
  the $\g_{\text{new}}(x)$ are reserved, then $x$ is sent to the
  backyard despite there being free slots in the 
  $\g_{\text{new}}(x)$.

  Once the Reshuffling Phase is complete, the partial expansion is
  also complete. Call this time $t_2$.
\end{itemize}

One minor technical issue that we must be careful about during the Reshuffling Phase
 is that Iceberg hashing requires that no two records in a given bin have the same fingerprint. Thus, when placing a grandfathered record into a free reserved slot in a bin,
 we must handle the following additional two cases: if there 
 is another grandfathered record in the bin with the same
  fingerprint as $x$, then $x$ is sent to the backyard and the
  number of reserved slots in the bin is decremented by $1$ (i.e., the
  reserved slot given to $x$ is removed); if there is another
  \emph{non-grandfathered} record $y$ in the bin such that $y$ has the
  same fingerprint as $x$, then $y$ is sent to the backyard and
  $x$ is given the reserved slot.

Recall that each bin must keep a floating counter that tracks the number
of items that hash to the bin but are in the backyard. During the Prepossessing and
Reshuffling phases, we must be careful to keep the floating counters in
consistent states, as follows. During the Preprocessing Phase, the floating counters for
each bin $b \in C$ are initialized to be the number of records $x$ in
the backyard such that $\g_{\text{new}}(x) = b$. Then, during the
Reshuffling Phase, whenever the traversal visits a grandfathered
record $x$ in the backyard such that $\g_{\text{new}}(x) \in C$,
the floating counter for $\g_{\text{old}}(x)$ is decremented (in
essence, $\g_{\text{new}}(x)$ is now declared to be
responsible for record $x$, even if $x$ remains in the backyard).

\paragraph{Putting the pieces together.}
As described above, the purpose of the three phases in each partial expansion is to simulate
the expansion as having occurred at a single point in time $t_1$. In Appendix \ref{app:expansion},
we prove that partial expansions (implemented in this way) do not interfere with any of the properties
of Iceberg hashing (i.e., the backyard remains small, and elements individually have good probability of being in the front yard). The appendix also describes how to carefully implement the partial expansion
such that it is deamortized and I/O efficient, and describes how to analogously handle partial contractions.
The result is the following theorem: 

  \begin{thm}
    Consider a dynamic Iceberg hash table with average-bin-fill parameter $h$. 
    Suppose that the number $n$ of elements stays in the
    range such that $\log n / \log \log n \ge \Omega(h)$. Suppose that the
    table used in the backyard supports
    constant-time operations (w.h.p.\ in $n$), has load factor
    at least $1 / \poly(h)$, and is stable. Finally,
    set the resize granularity $s = \sqrt{h}$.
    
    Consider an operation on a key $x$. The following guarantees hold.
    \begin{itemize}
    \item \textbf{Time Efficiency.} The
      operation on $x$ takes constant time in the RAM model, w.h.p.\ in $n$.
    \item \textbf{Cache Efficiency.} Consider the EM model using a cache line of size $B \geq 2h$ and a cache of size $M \ge ch^{1.5}B + s \log n$ for some sufficiently
      large constant $c$, and suppose that each bin in the front yard is memory aligned, that is, each bin is stored in a single cache line.  Finally, suppose that the description bits of the hash functions are cached.
    Then the expected number of cache misses
      incurred by the operation on $x$ is $1 + O(1 / \sqrt{B}) = 1+o(1)$.
    \item \textbf{Space Efficiency.} The total space in machine words consumed by the table, w.h.p.\ in $n$, is
      $$\left(1 + O\left(\frac{\sqrt{\log h}}{\sqrt{h}}\right)\right) n = (1+o(1))n.$$ 
    \item \textbf{Stability.} 
    If a partial resize has not been triggered in the past $O(n/s)$ operations, then the table is stable.
    \end{itemize}
  \label{thm:dynamic}
\end{thm}

  \begin{rmk}
    In the case where the cache line size is $B = \Theta(h)$, the
    guarantees in Theorem \ref{thm:dynamic} come close to matching the
    best known bounds for external-memory hashing
    \cite{jensen2008optimality}. In particular,
    \cite{jensen2008optimality} achieves load factor
    $1 - O(1 / \sqrt{B})$ with an average of $1 + O(1/\sqrt{B})$ cache misses per
    operation.
  \end{rmk}

  \begin{rmk}
    Theorem \ref{thm:dynamic} requires
    $h \le O(\log n / \log \log n)$. Note, however, that whenever
    $\log n$ changes by more than a factor of two, we can simply
    rebuild our table (with a new parameter $h$ of our choice). These
    rebuilds can be performed space efficiently and are rare enough that they
    do not hurt the expected cache behavior of operations. In this
    sense, the assumption that $\log n / \log \log n \ge \Omega(h)$ is
    without loss of generality.
    
    In more detail, the rebuilds can be implemented as follows. Break
    the table's lifetime into \defn{doubling windows}, consisting of
    time windows in which the table's size either doubles or halves.
    Then place the doubling windows into \defn{window runs}, where each
    window run is determined as follows: if at the beginning of the
    window run the table size is $n$, then the window run lasts for a
    random number $k \in [1, \log n / 2]$ of doubling windows, after
    which the next window run begins. During the final window of each
    window run, we rebuild the hash table from scratch using the then appropriate
    value of $h$.
    
    Each rebuild can be performed space efficiently by storing both the new
    and old versions of the hash table as dynamically resized Iceberg
    hash tables during the rebuild. During a given rebuild, operations
    may incur multiple cache misses, but the probability of a given
    operation being contained in a window where a rebuild occurs is at
    most $O(1 / \log n)$ (where $n$ is the current table size), so the
    expected number of cache misses per operation remains
    $1 + O(1 / \sqrt{h})$.
    
    \label{rem:window-run-rebuilds}
  \end{rmk}

\begin{rmk}
Note that partial rebuilds occur at most once every $\Omega(n/s)$ operations. Thus if a hash table is changing size rapidly, it may forego stability for some (arbitrarily small) constant fraction of its operation. This is fundamental since, if a hash table is shrinking rapidly, then every time that its size halves (i.e., every $O(n)$ operations), the hash table must rearrange the remaining elements to occupy less total memory (at least, if the hash table wishes to be space efficient).\footnote{\emph{A priori}, chained hashing might seem to be an exception to this rule, since many implementations do achieve full stability. However, this is only achieved by assigning each element its own dynamically allocated portion of memory, and then using a $\Omega(\log n)$-bit pointer in order to reference that element. Even if we allow for memory allocations at arbitrary granularity, the pointer overheads preclude space efficiency in any fully stable hash table.} If, furthermore, the hash table wishes to incur $O(1)$ worst-case time per operation, then it follows that the hash table must be unstable for a constant fraction of its operations. On the other hand, in time periods where a hash table's size stays within a narrow band, Theorem \ref{thm:dynamic} guarantees full stability.

Of course, many hash tables take the simpler approach of implementing each partial resize to occur during a single operation (which, for Iceberg hashing, would take $O(n/s)$ time). This would mean that operations that \emph{do not} trigger a rebuild are $O(1)$ time w.h.p., and that operations that do trigger a rebuild are amortized $O(1)$ time. An advantage of this approach is that stability is easier to think about: every operation is stable, unless it triggers a rebuild. Moreover, this approach lends itself to simple locking schemes which have been shown to improve concurrency \cite{PandeyBeJo123}.
\end{rmk}

 \section{Reducing the Wasted Bits Per Key to $\mathbf{O(\log \log n)}$}\label{sec:space}

In this section we consider the problem of further optimizing the
space efficiency of Iceberg hashing.  We achieve a load factor of $1-O(\log\log n/\log n)$.  This improves on the previous best known bound~\cite{liu2020succinct} of $1-O(1/\sqrt{\log n})$.  In the case where keys and values have $\Theta(\log n)$ bits, our
table wastes only $O(\log \log n)$ bits per key (in comparison with
the previous state of the art of $O(\sqrt{\log n})$ bits per key).

So far, we have been limited by the
fact that the routing table in each bin can only support
$O(\log n / \log \log n)$ records. This, in turn, has limited the average-bin-fill parameter $h$ to
$O(\log n / \log \log n)$ and has limited our best achievable
load factor to $1 -  O(\sqrt{\log \log n}/\sqrt{\log n})$.

\paragraph{Supporting large bin sizes.}
Throughout the rest of the section, we consider average-bin-fill parameters $h$ such that
$$h \in \left[\frac{\log n}{\log \log n},\frac{\log^2 n}{\log \log n}\right].$$
Rather than having a single routing table per bin, we now have
$$k = \left\lceil \frac{h}{\log n / \log \log n} \right\rceil$$
routing tables $R_1, R_2, \ldots, R_k$ per bin, each of which can
route up to $2 \log n / \log \log n$ fingerprints.

Each key $x$ selects a routing table using a new hash function
$\rr: U \rightarrow [k]$ where $U$ is the universe of keys. Operations
on key $x$ use routing table $R_{\rr(x)}$ in $\g(x)$.

Although the key $x$ hashes to a specific routing table $R_{\rr(x)}$,
the key can still be placed anywhere within the $\g(x)$. That is, the
routing table $R_{\rr(x)}$ maps fingerprints to arbitrary positions in
$[h + \tau_h]$.

The assignment of keys to routing tables introduces a problem: some
routing tables $R_i$ may be assigned more than
$2 \log n / \log \log n$ keys to route. When this happens, the
routing table sends overflow keys to the backyard of the Iceberg
hash table. Keys sent to the backyard by an overflowed routing
table are called \defn{routing floaters}.

We once again use the Iceberg Lemma, which tells us that, w.s.h.p., there are very
few total routing floaters.
\begin{lem}
  Let $h\in[\log n / \log \log n , \polylog (n)]$, and
  define $N$ as in Section \ref{sec:iceberg-hashing}.  There are $O(N / \polylog(n))$ routing
  floaters in the table, w.s.h.p.\ in $N$.  Moreover, for a given key $x$, the
  probability that there is a routing floater $y$ such that
  $(\g(y), \rr(y)) = (\g(x), \rr(x))$ is at most $O(1 / \poly(h))$.
  \label{lem:routing_floaters}
\end{lem}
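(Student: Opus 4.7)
The plan is to reduce to the Iceberg Lemma by treating each pair $(b, i)$---where $b \in [N/h]$ indexes a front-yard bin and $i \in [k]$ indexes a routing table inside $b$---as a single \emph{virtual bin}. Letting $h' := \log n / \log \log n$, there are $(N/h) \cdot k = \Theta(N/h')$ virtual bins, and under the fully-random-hash assumption each key $x$ is assigned to its virtual bin $(\g(x), \rr(x))$ uniformly at random and independently of all other keys. Since at most $N$ keys are present at any time, the average load of a virtual bin is at most $h'$, and a key is a routing floater exactly when its virtual bin contained $\geq 2h'$ keys at the moment it was inserted. Because $2h' \ge h' + \tau_{h'}$ (we have $\tau_{h'} = O(\sqrt{h' \log h'}) \ll h'$), routing floaters coincide with the exposed balls of the \historyless applied to the virtual-bin system with average-fill parameter $h'$.

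The first step is to invoke the Iceberg Lemma on this virtual-bin system. The hypothesis $h' \le \polylog N$ is satisfied, so the lemma yields that the number of exposed balls---hence the number of routing floaters---is at most $\Theta(N/h') / \poly h' = O(N / \polylog n)$ w.s.h.p.\ in $N$; the $\poly h'$ factor can be made as large as we like by tuning the constant inside $\tau_h$.

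For the per-key bound I would follow the template used in \Cref{lem:capacity_floaters}. Define $A$ to be the set of keys $y$ present at the current time such that, when $y$ was inserted, its virtual bin already held at least $2h' - 1$ keys distinct from $x$. Applying the Iceberg Lemma with threshold $2h' - 1$ (still well above $h' + \tau_{h'}$) yields $|A| = O(N / \poly h')$ w.s.h.p. Any routing floater $y \ne x$ satisfying $(\g(y), \rr(y)) = (\g(x), \rr(x))$ must belong to $A$, so by a union bound over the $|A|$ candidates and $x$'s independent uniform virtual-bin choice, the probability that such a $y$ exists is at most $|A| / \Theta(N/h') = O(1 / \poly h')$. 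The probability that $x$ itself is a routing floater is likewise $O(1 / \poly h')$ by the good-step analysis used inside the proof of the Iceberg Lemma. Finally, $h \le \polylog n$ implies $h \le (h')^{O(1)}$, so we can convert $1/\poly h'$ into the desired $O(1/\poly h)$ by choosing the output polynomial of the Iceberg Lemma large enough.

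The only delicate point is verifying that the virtual-bin reduction really does produce a legitimate instance of the \historyless: one must check that each key's virtual-bin assignment $(\g(x), \rr(x))$ is selected once and reused across any reinsertions, and that the adversary-driven sequence of Iceberg operations induces an oblivious insertion/deletion sequence on virtual bins. Both properties follow immediately from the assumption that $\g$ and $\rr$ are fully random hash functions whose outputs are fixed per key rather than per insertion, so this step is a bookkeeping verification rather than a real technical obstacle.
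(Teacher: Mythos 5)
Your proof is correct and takes the same approach as the paper: the paper's proof of this lemma is the one-liner ``follows exactly as for Lemma~\ref{lem:capacity_floaters}, except that now the `bins' in the Iceberg Lemma are the routing tables,'' and your virtual-bin formulation is precisely a careful unpacking of that reduction, including the point (also noted by the paper after the lemma) that the capacity $2\log n/\log\log n$ is generous relative to the threshold $h' + \tau_{h'}$ actually required.
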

\begin{proof}
  The proof follows exactly as for Lemma \ref{lem:capacity_floaters},
  except that now the ``bins'' in the Iceberg Lemma are the routing
  tables rather than the actual bins in the Iceberg hash table. \end{proof}
  
  Note
  that the maximum capacity per routing table of
  $2 \log n / \log \log n$ is much larger than necessary for the
  analysis, since a capacity of
  $\log n / \log \log n + \tau_{\log n / \log \log n}$ would suffice for the proof of Lemma~\ref{lem:routing_floaters}.  We are able to apply this much slack to the routing tables because they are a low-order term in the space consumption of the Iceberg hash table.

\paragraph{Changes to the metadata.}
To accommodate the large value of $h$, the bookkeeping in each bin
also changes slightly. Each routing table maintains its own floating
counter, and queries on a record $x$ need only go to the backyard if the floating counter for $R_{\rr(x)}$ in $\g(x)$ has a
non-zero floating counter. Additionally, since $h$ may be much larger
than $\log n$, we can no longer keep track of the free slots in the
bin with a bitmap. Thus the vacancy bitmap is replaced with a free
list, which is a linked list of the free slots in the bin.

\paragraph{The non-resizing case.} We can now extend Theorem \ref{thm:static} to hold for $h \le \log^2 N / \log \log N$.

\begin{thm}
Consider an Iceberg hash table that never contains
more than $N$ elements and suppose that the average-bin-fill parameter satisfies  $h = O(\log^2 N / \log \log N)$. Suppose that the
    backyard table supports
    constant-time operations (w.h.p.\ in $n$), supports load factor
    at least $1 / \poly(h)$, and is stable. 

  Consider a sequence of operations in which the number of records in
  the table never exceeds $N$, and consider a query, insert, or delete
  that is performed on some key $x$. Then, the
  following guarantees hold.
  \begin{itemize}
  \item \textbf{Time Efficiency.} The
    operation on $x$ takes constant time in the RAM model, w.h.p.\  in $N$.
  \item \textbf{Cache Efficiency.} Consider the EM model using a cache line of size $B \geq 2h$ and a cache of size $M = \Omega(B)$, and suppose that each bin in the front yard is memory aligned, that is, each bin is stored in a single cache line. Finally, suppose that the description bits of the hash functions are cached.
    Then the operation on $x$ has probability at least $1 - 1 / \poly(B)$ of
    incurring only a single cache miss.
  \item \textbf{Space Efficiency.} The total
    space in machine words consumed by the table, w.h.p.\ in $N$, is
    $$\left(1 + O\left(\frac{\sqrt{\log h}}{\sqrt{h}}\right)\right) N = (1+o(1)) N.$$ 
  \item \textbf{Stability.} The hash table is stable.
  \end{itemize}
  \label{thm:static_succinct}
\end{thm}

\begin{proof}
  The proof is the same as for Theorem \ref{thm:static}, except with
  two changes for the case of
  $h \in [\log N / \log \log N, O(\log^2 N / \log \log N)]$.

  First, we must account for the space consumed by the routing tables
  $R_1, \ldots, R_k$ in each bin. Fortunately, these tables consume
  only $O(h \log h)$ bits per bin, in comparison to the
  $\Theta(h \log N)$ bits otherwise needed for the bin. Thus the
  routing tables only increase the total space consumption by a factor
  of at most
  $$1 + O\left(\frac{\log h}{\log N}\right) \le 1 + O\left(\frac{\log \log N}{\log N}\right) \le 1 + O\left(\frac{\sqrt{\log h}}{\sqrt{h}}\right),$$
  where the inequalities use that $h \in[\log N / \log \log N,O(\log^2 N / \log \log N)]$.

  Second, we must account for the presence of routing floaters in the
  backyard. This is handled by simply applying Lemma
  \ref{lem:routing_floaters}.
\end{proof}

\begin{cor}
In Theorem~\ref{thm:static_succinct}, when $h=\log^2 N/\log\log N$, the total space in machine words consumed becomes 
$$\left(1+ O\left(\frac{\log\log N}{\log N}\right)\right) N.$$
\label{cor:space_efficient_static}
\end{cor}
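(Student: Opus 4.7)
The plan is simply to substitute $h = \log^2 N / \log \log N$ into the space bound of Theorem~\ref{thm:static_succinct} and check that this value of $h$ lies in the allowed range. First I would verify that $h = \log^2 N / \log \log N$ satisfies the hypothesis $h = O(\log^2 N / \log \log N)$ of Theorem~\ref{thm:static_succinct}, which is immediate, and that it also satisfies $h \ge \log N / \log \log N$ so that the extended routing table machinery (rather than the single routing table of Section~\ref{sec:iceberg-hashing}) applies.

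Next I would evaluate the overhead factor $\sqrt{\log h}/\sqrt{h}$ at this value of $h$. Since
\[
\log h \;=\; \log\!\left(\frac{\log^2 N}{\log \log N}\right) \;=\; 2\log\log N - \log\log\log N \;=\; \Theta(\log\log N),
\]
and $\sqrt{h} = \log N / \sqrt{\log\log N}$, we obtain
\[
\frac{\sqrt{\log h}}{\sqrt{h}} \;=\; \Theta\!\left(\frac{\sqrt{\log\log N}\cdot \sqrt{\log\log N}}{\log N}\right) \;=\; \Theta\!\left(\frac{\log\log N}{\log N}\right).
\]

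Plugging this into the bound from Theorem~\ref{thm:static_succinct} gives the claimed space of $\bigl(1 + O(\log\log N / \log N)\bigr) N$ machine words. Since every step is a direct substitution and asymptotic simplification, there is no real obstacle in the proof; the only point to double-check is that at the upper endpoint $h = \log^2 N / \log\log N$ all the probabilistic guarantees on routing floaters and capacity floaters (Lemmas~\ref{lem:capacity_floaters} and \ref{lem:routing_floaters}) continue to apply, which they do because $h \le \polylog N$ in the regime required by the Iceberg Lemma.
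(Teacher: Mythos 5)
Your proposal is correct and follows the same route the paper implicitly takes: the corollary is a direct substitution of $h = \log^2 N / \log\log N$ into the $\bigl(1 + O(\sqrt{\log h}/\sqrt{h})\bigr)N$ space bound of Theorem~\ref{thm:static_succinct}, and your algebraic simplification $\sqrt{\log h}/\sqrt{h} = \Theta(\log\log N / \log N)$ is right. The sanity checks you flag (that $h$ is in the allowed range and that $h \le \polylog N$) are also the relevant ones.
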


\paragraph{Supporting dynamic resizing.}
One can support dynamic resizing using essentially the same approach
as in Theorem \ref{thm:dynamic}.  The process of performing a partial
expansion or contraction must be slightly modified to accommodate the
routing-table structure of each bin, however. In particular, we now
maintain demand counters $d_{b, i}$ for each routing table $R_i$ in
each bin $b$. When a time freeze occurs, the bin reserves
\begin{equation}
  \min\left(\left(\sum_{i = 1}^k \min(d_{b, i}, 2 \log n / \log \log n)\right), h + \tau_h \right)
  \label{eq:reserve_amount}
\end{equation}
slots for records currently in the bin. That is, the bin reserves
$d_{b, i}$ slots per routing table, subject to the capacity
constraints of the routing tables and the bin. If
\eqref{eq:reserve_amount} is $h + \tau_h$, then the bin can determine
arbitrarily how many slots are reserved for each routing table, as
long as the $i$-th routing table has at most
$\min(d_{b, i}, 2 \log n / \log \log n)$ slots reserved and the total
number of reserved slots is $h + \tau_h$.\footnote{To simplify
  accounting, the number of slots reserved for each routing table can
  be determined lazily during the Reshuffling Phase. That is, only
  when the routing table $R_i$ is next accessed, do we decide how many
  slots were reserved for it at the time freeze.}

The proofs of lemmas analogous to Lemma
\ref{lem:dynamic_correctness1}, Lemma \ref{lem:dynamic_correctness2},
and Lemma \ref{lem:second_level_current_size} follow exactly as in
Section \ref{sec:resizing_iceberg} and Appendix \ref{app:expansion}, except that now routing floaters
are accounted for in addition to capacity floaters and fingerprint
floaters.\footnote{Since there are now multiple demand counters per bin, we must be
careful that the time (in the RAM model) to perform a partial
expansion or contraction is still $O(n / s)$, where $s$ is the resize 
granularity used by waterfall addressing. Fortunately, the current values for the
demand counter of each routing table, and the value of
$\sum_{i = 1}^k \min(d_{b, i}, 2 \log n / \log \log n)$ for each bin
$b$, are straightforward to keep track of at all times (rather than
just during the Preprocessing Phase) while adding only $O(1)$ overhead
per operation. Thus, in the case of a partial expansion, the
Preprocessing Phase needs only to instantiate these values in the new
bins, and in the case of a partial contraction, the Preprocessing Phase
needs only update the values appropriately to take account of the
$O(n / s)$ records that are being relocated.}

Putting the pieces together, we can extend Theorem \ref{thm:dynamic} to support larger values of $h$. 

  \begin{thm}
    Consider a dynamic Iceberg hash table with average-bin-fill parameter $h$. 
    Suppose that the number $n$ of elements stays in the
    range such that $h  = O(\log^2 n / \log \log n)$. Suppose that the
    table used in the backyard supports
    constant-time operations (w.h.p.\ in $n$), has load factor
    at least $1 / \poly(h)$, and is stable. Finally,
    set the resize granularity $s = \sqrt{h}$.
    
    Consider an operation on a key $x$. The following guarantees hold.
    \begin{itemize}
    \item \textbf{Time Efficiency.} The
      operation on $x$ takes constant time in the RAM model, w.h.p.\ in $n$.
    \item \textbf{Cache Efficiency.} Consider the EM model using a cache line of size $B \geq 2h$ and a cache of size $M \ge ch^{1.5}B + s \log n$ for some sufficiently
      large constant $c$, and suppose that each bin in the front yard is memory aligned, that is, each bin is stored in a single cache line.
    Then the expected number of cache misses
      incurred by the operation on $x$ is $1 + O(1 / \sqrt{B}) = 1+o(1)$.
    \item \textbf{Space Efficiency.} The total space in machine words consumed by the table, w.h.p\ in $n$,  is
      $$\left(1 + O\left(\frac{\sqrt{\log h}}{\sqrt{h}}\right)\right) n = (1+o(1))n.$$

    \item \textbf{Stability.} 
    If a partial resize has not been triggered in the past $O(n/s)$ operations, then the table is stable.

    \end{itemize}
  \label{thm:dynamic_succinct}
\end{thm}

\begin{cor}
In Theorem~\ref{thm:dynamic_succinct}, when $h=\log^2 N/\log\log N$, the total space in machine words consumed becomes 
$$\left(1+ O\left(\frac{\log\log N}{\log N}\right)\right) N.$$
\label{cor:space_efficient_dynamic}
\end{cor}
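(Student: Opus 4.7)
The plan is to derive this corollary by direct substitution into the space guarantee of Theorem~\ref{thm:dynamic_succinct}. The theorem already hands us the bound $(1+O(\sqrt{\log h}/\sqrt{h}))\,n$ for any average-bin-fill parameter in the admissible range $h = O(\log^2 n/\log\log n)$, so the only work is (i) to confirm that $h=\log^2 N/\log\log N$ sits at the upper endpoint of this range, and (ii) to simplify the multiplicative overhead at this specific choice of $h$.

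For step (ii), I would first compute $\log h = \log(\log^2 N/\log\log N) = 2\log\log N - \log\log\log N = \Theta(\log\log N)$, so that $\sqrt{\log h} = \Theta(\sqrt{\log\log N})$. Meanwhile $\sqrt{h} = \log N / \sqrt{\log\log N}$. Dividing gives
\[ \frac{\sqrt{\log h}}{\sqrt{h}} \;=\; \Theta\!\left(\frac{\sqrt{\log\log N}\cdot\sqrt{\log\log N}}{\log N}\right) \;=\; \Theta\!\left(\frac{\log\log N}{\log N}\right), \]
and plugging back in yields space $(1 + O(\log\log N/\log N))\,N$, matching the claim.

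The only sanity check worth performing, and in fact the closest thing to an obstacle, is to revisit the two independent sources of space overhead identified in the proof of Theorem~\ref{thm:static_succinct}: the routing-table overhead of $1+O(\log h/\log N)$ per bin, and the $\tau_h$ overhead of $1+O(\sqrt{\log h}/\sqrt{h})$ per bin. At the extreme choice $h=\log^2 N/\log\log N$ one has $\log h/\log N = \Theta(\log\log N/\log N)$, which matches (up to constants) the $\sqrt{\log h}/\sqrt{h}$ term just computed. So both contributions to the overhead are simultaneously $O(\log\log N/\log N)$, and neither dominates nor is lost in the asymptotic simplification. The w.h.p.\ contribution from the backyard remains $N/\poly(h) = o(N/\log N)$ and is absorbed into the same bound. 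There is no genuine difficulty; the corollary is essentially a computation showing that the largest permissible value of $h$ balances the two sources of space overhead against one another.
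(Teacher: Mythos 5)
Your proof is correct and takes the same route the paper implicitly takes: the corollary is a direct substitution of $h=\log^2 N/\log\log N$ into the $(1+O(\sqrt{\log h}/\sqrt{h}))$ space bound, and your computation $\sqrt{\log h}/\sqrt{h}=\Theta(\log\log N/\log N)$ is exactly right. Your extra sanity check---that at this choice of $h$ the routing-table overhead $O(\log h/\log N)$ and the $\tau_h$ overhead $O(\sqrt{\log h}/\sqrt{h})$ are simultaneously $\Theta(\log\log N/\log N)$---is also consistent with the inequality chain the paper already uses in the proof of Theorem~\ref{thm:static_succinct}.
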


 \section{Achieving Subpolynomial Failure Probabilities}\label{sec:prob}

In this section, we consider the problem of achieving subpolynomial probabilities of failure
for Iceberg hashing assuming access to fully random hash functions (as in past work, \cite{goodrich2011fully, goodrich2012cache}, the assumption of fully random hash functions is needed to avoid failure probability that is introduced by the hash functions themselves).

We begin by stating a version of Theorem \ref{thm:dynamic_succinct} assuming fully random hash functions. The theorem
follows immediately from the w.s.h.p. guarantees offered by the lemmas in the previous sections. 
\begin{thm}[\Cref{thm:dynamic_succinct} with super-high probability]
In the conditions of \Cref{thm:dynamic_succinct}, suppose that the backyard table $\calT$ supports each operation in constant time with probability $1 - p(n)$. Then, assuming fully random hash functions, the guarantees of the Iceberg hash table hold with probability $1 - O(p(n) + 2^{-n/\polylog(n)})$ per operation.
\label{thm:dynamic_prob}
\end{thm}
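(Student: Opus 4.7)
The plan is to simply revisit the proof of Theorem \ref{thm:dynamic_succinct} and observe that, once one removes the $1/\poly n$ slack that was introduced by the use of an explicit hash family (so that full randomness is now assumed), every randomized claim in the chain of supporting lemmas was actually established at the w.s.h.p.\ level rather than merely w.h.p. The only remaining non-w.s.h.p.\ event in the analysis is the correctness/time-bound of a call into the backyard table $\calT$, which fails with probability at most $p(n)$ per operation. So the target bound $1 - O(p(n) + 2^{-n/\polylog n})$ should fall out of a union bound over a constant number of contributing events.

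More concretely, the steps I would carry out are as follows. First, I would enumerate the randomized subclaims that underlie each of the four guarantees (time, cache, space, stability) of Theorem \ref{thm:dynamic_succinct}, namely: (i) the number of capacity exposers and fingerprint floaters in the front yard (Lemmas \ref{lem:fingerprints}, \ref{lem:capacity_floaters}, \ref{lem:second_level}); (ii) the number of routing floaters (Lemma \ref{lem:routing_floaters}); (iii) the bounds on the linked-list lengths used by waterfall addressing (Theorem \ref{thm:findrecords}); and (iv) the invariants used in the partial-expansion/contraction analysis in Appendix \ref{app:expansion} (via the Iceberg Lemma applied after each time freeze). Each of these is proved by a Chernoff bound, a McDiarmid bound, or an application of the Iceberg Lemma, and each as stated in the excerpt already gives a failure probability of $1/2^{n/\polylog n}$, i.e.\ w.s.h.p. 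Second, I would point out that the only places where the proof of Theorem \ref{thm:dynamic_succinct} degraded from w.s.h.p.\ to w.h.p.\ were the explicit hash function failure (Appendix \ref{sec:hashing}) and the backyard operations---and the first of these is removed by assumption in the current theorem.

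With full randomness in hand, the proof is then just a union bound. For a fixed operation, the per-operation guarantees fail only if either (a) one of the w.s.h.p.\ structural events above fails for the current state of the table, contributing $O(2^{-n/\polylog n})$, or (b) the backyard invocation triggered by that operation fails, contributing $p(n)$; all other costs are deterministic given these events (constant front-yard work, a single cache line of memory-aligned access, the deterministic space accounting, and the fact that nothing moves unless a partial resize is active). Summing gives the stated $1 - O(p(n) + 2^{-n/\polylog n})$ bound.

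The main technical obstacle is bookkeeping rather than new mathematics: one must confirm that each lemma in Sections \ref{sec:iceberg-lemma}--\ref{sec:space} and Appendix \ref{app:expansion} was in fact proved at the w.s.h.p.\ level (not merely w.h.p.) under a fully random hash assumption. For the Iceberg Lemma and the McDiarmid-based arguments this is immediate. The more delicate piece is the waterfall-addressing machinery: the probability bound in Lemma \ref{lem:listsizes} is already w.s.h.p., and the correctness of the time-freeze/reshuffling analysis only uses the Iceberg Lemma applied at a single instantaneous snapshot, so it too inherits a w.s.h.p.\ guarantee. Once this is checked uniformly across the paper, the theorem follows.
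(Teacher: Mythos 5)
Your proposal is correct and is essentially the paper's own argument, just written out in more detail: the paper simply states that the theorem follows immediately from the w.s.h.p.\ guarantees established in the preceding lemmas, with the $p(n)$ term coming from the backyard table and the $2^{-n/\polylog n}$ term coming from the structural events once full randomness removes the hash-family slack. Your bookkeeping enumeration and union bound is the right way to fill that in.
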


We now consider the problem of designing a backyard hash table $\mathcal{T}$ that has a super small failure probability $p$ (the same failure probability can then be achieved by Iceberg hashing, using Theorem \ref{thm:dynamic_prob}). By employing the very-high probability hash table of Goodrich, Hirschberg, Mitzenmacher, and Thaler \cite{goodrich2012cache} one can achieve $p = 2^{-\polylog(n)}$.  In this section, we show how to do significantly better when each key is $\Theta(\log n)$ bits.
For this case, we are able to achieve
$p = O(2^{-n^{1-\epsilon}})$ for a positive constant $\epsilon$ of our choice. 

Throughout the rest of this section, set $\delta = \epsilon/4$, so we are aiming for $p = O(2^{-n^{1-4\delta}})$, and set the machine word size $w = \Theta(\log n)$. 

\paragraph{The difficulty of subpolynomial guarantees: not enough random bits.}
The main difficulty that one encounters when trying to achieve a failure probability
$p$ that is subpolynomial is that hash collisions must be treated as the common case.
That is, since any two keys have a $1/\poly(n)$ chance of colliding (on any $w = \Theta(\log n)$-bit hash function), we must be able to handle a superconstant number of keys
colliding on their hash functions. If we want $p = O(2^{-n^{1- 4 \delta}})$ then we must
be willing to tolerate $\Omega(n^{1-4\delta} / \log n)$ keys colliding with one another.

\paragraph{Storing $n^{1 - 2\delta}$ keys deterministically.}
In order to store a small set of $n^{1 - 2\delta}$ keys deterministically, we will make use of a radix trie with fanout $n^{\delta}$. We formalize the properties that we will need from the radix trie in the following lemma. 

\begin{lem}
Suppose keys are $w = \Theta(\log n)$ bits and let $\delta > 0$ be a constant.
There exists a deterministic data structure that can be initialized in time $o(n)$, that consumes space $o(n)$, and
that supports insertions, deletions, and queries in constant time on a 
set of up to $O(n^{1 - 2 \delta})$ keys.
\label{lem:trie}
\end{lem}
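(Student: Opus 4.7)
The plan is to implement the dictionary as a radix trie of fanout $n^{\delta}$. Since keys are $w = \Theta(\log n)$ bits and each level of the trie consumes $\delta \log n$ bits of the key, the trie has depth $\lceil w / (\delta \log n) \rceil = O(1/\delta) = O(1)$. A query for a key $x$ walks down the trie by extracting successive $(\delta \log n)$-bit chunks of $x$ and indexing into the child array at each node; an insertion does the same, allocating new child nodes along the way until it reaches a leaf where the key is recorded; a deletion similarly walks down and either marks the leaf as empty or, if its parent becomes empty, deallocates it. As long as each node access costs $O(1)$ time, the whole operation costs $O(1)$.

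For the space bound, I would observe that every key contributes at most $O(1/\delta) = O(1)$ nodes to the trie (one per level on its root-to-leaf path), so the total number of internal nodes at any time is $O(n^{1-2\delta})$. Each internal node stores a child array of size $n^\delta$ (plus a couple of auxiliary words for the lazy-initialization scheme described below), so the total space used by the trie is $O(n^{1-2\delta} \cdot n^\delta) = O(n^{1-\delta}) = o(n)$, as required.

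The main obstacle is that newly created child arrays have size $n^\delta$, which is much larger than the $O(1)$ time budget per operation allows for a zero-initialization. To handle this, I would use the folklore lazy array-initialization trick (attributable to Aho--Hopcroft--Ullman): alongside each child array $C$ of size $n^\delta$, maintain a parallel array $P$ of the same size and a stack $S$ (initially empty); a cell $C[i]$ is considered ``live'' iff $0 \le P[i] < |S|$ and $S[P[i]] = i$, and writing to $C[i]$ for the first time pushes $i$ onto $S$ and sets $P[i]$ accordingly. This lets us treat each freshly allocated child array as conceptually empty while paying only $O(1)$ per access and $O(1)$ per allocation, with no up-front scan. The initial state of the dictionary is just a single root node, which can be allocated and set up in $O(n^\delta) = o(n)$ time (and, if desired, even in $O(1)$ time by applying the same lazy-initialization trick at the root).

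Putting these pieces together, every operation walks $O(1)$ levels, performs $O(1)$ lazy-initialized array accesses per level, and possibly allocates or frees $O(1)$ lazily-initialized nodes, so it runs in $O(1)$ worst-case time. The invariant that the number of stored keys never exceeds $O(n^{1-2\delta})$ keeps the total space at $o(n)$, yielding the lemma.
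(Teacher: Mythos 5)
Your proof is correct and uses essentially the same data structure as the paper: a radix trie with fanout $n^\delta$, hence constant depth $O(1/\delta)$, with $O(n^{1-2\delta})$ nodes and total space $O(n^{1-\delta}) = o(n)$. The only place you diverge is in handling the cost of preparing each freshly created child array. You invoke the folklore lazy-array-initialization trick so that a node can be allocated and usable in $O(1)$ time on demand; the paper instead simply pre-allocates all $O(n^{1-2\delta})$ null-initialized arrays during an $O(n^{1-\delta})$-time initialization phase and hands one out whenever a new node is needed. Both variants fit within the $o(n)$ initialization-time and $o(n)$ space budgets, so the distinction is an implementation detail rather than a different algorithmic idea. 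One small imprecision in your write-up: the lazy-initialization overhead is not ``a couple of auxiliary words'' per node --- the parallel index array $P$ and the stack $S$ are themselves $\Theta(n^\delta)$-sized, so the per-node overhead is $\Theta(n^\delta)$. This is still $O(n^{1-\delta}) = o(n)$ in aggregate, so your space bound goes through unchanged, but the phrasing understates the constant.
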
 

\begin{proof}
As noted above, the data structure is a radix trie with fanout $n^{\delta}$. The root node $r$ of
the trie is an array of length $n^{\delta}$. The $i$th entry in $r$ 
is null if there are no keys $x$ whose first $\delta \log n$ bits equal $i$.
Otherwise, the $i$th entry points to a recursively-defined trie storing the 
final $w -  \delta \log n$ bits of each
key $x$ whose first $\delta \log n$ bits equals $i$.

The trie has depth $O(1/\delta) = O(1)$. Since the data structure stores $O(n^{1 - 2\delta})$
keys, the trie can have at most $O(n^{1 - 2\delta})$ nodes.
The total space consumption is therefore $O(n^{1 - \delta})$ since each node
consumes $n^{\delta}$ space.

The data structure requires $O(n^{1 - \delta})$ time to initialize, where the initialization time is spent
allocating $O(n^{1 - 2\delta})$ arrays that each consist of $n^{\delta}$ null
pointers. These arrays can then be used to implement operations on the trie in constant
time.
\end{proof}

\paragraph{Storing all but $O(n^{1 - 2\delta})$ keys in bins.}
We now describe a hash table with failure probability $p = O(2^{-n^{1 - 4\delta}})$. Because we are constructing a hash table to be used as a backyard, and thus we are
not concerned about space efficiency (a load factor of $\Theta(1)$ is okay), we can ignore
the issue of dynamic resizing (which can be performed with deamortized rebuilds) and
the issue of deletions (which can be performed by marking elements as deleted and then
rebuilding the data structure every $O(n)$ operations). Thus, we can assume there are 
$\Theta(n)$ records and that the only operations are queries and insertions.

We maintain $n / \log n$ bins, each with capacity $\Theta(\log n)$. Queries and insertions
are implemented in each bin using the dynamic fusion tree 
of P\v{a}tra\c{s}cu and Thorup \cite{patrascu2014dynamic},
which supports constant time deterministic operations on a set of size $\polylog n$.
If a bin overflows (that is, there are more than $c\log n$ records for some large constant $c$)
then the overflow records are stored in the data structure from Lemma \ref{lem:trie}.
Call these records \defn{stragglers}.

\begin{lem}
With probability $1 - O(2^{-n^{1 - 4\delta}})$ there are $O(n^{1 - 2\delta})$ 
stragglers at any given moment.
\label{lem:stragglers}
\end{lem}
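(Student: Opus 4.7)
My plan is to apply McDiarmid's inequality (Theorem \ref{thm:mc}) to $F$, the number of stragglers, viewed as a function of the $\Theta(n)$ independent hash values of the records currently in the backyard. At any fixed moment, each of the $\Theta(n)$ records hashes uniformly and independently to one of the $n / \log n$ bins, so the load $X_b$ of bin $b$ is binomial with mean $\Theta(\log n)$, and $F = \sum_b \max(0, X_b - c \log n)$, where $c \log n$ is the per-bin capacity.

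First I will bound $\mathbb{E}[F]$ by a Chernoff-tail computation. For a sufficiently large constant $c$ (chosen in terms of $\delta$), a standard Chernoff bound on the binomial $X_b$ yields $\Pr[X_b \geq c \log n + k] \leq n^{-\Omega(c)} \cdot 2^{-\Omega(k)}$ for every $k \geq 0$, so $\mathbb{E}[\max(0, X_b - c \log n)] = O(n^{-\Omega(c)})$. Summing over the $n / \log n$ bins gives $\mathbb{E}[F] \leq n^{1 - \Omega(c)} \leq \tfrac{1}{2} n^{1 - 2\delta}$ once $c$ is large enough.

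Next I will verify that $F$, as a function of the record hashes, has Lipschitz constant $\ell = 1$: changing a single hash value moves one ball between two bins $b$ and $b'$, which changes each of $\max(0, X_b - c\log n)$ and $\max(0, X_{b'} - c\log n)$ by at most $1$, and these changes partially cancel. Applying McDiarmid with $k = \Theta(n)$ and $\ell = 1$ then gives $\Pr\bigl[F \geq \mathbb{E}[F] + \tfrac{1}{2} n^{1 - 2\delta}\bigr] \leq \exp(-\Omega(n^{1 - 4\delta}))$, so $F = O(n^{1 - 2\delta})$ at the fixed moment with failure probability $2^{-\Omega(n^{1 - 4\delta})}$. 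Since the backyard is rebuilt every $O(n)$ operations, only $O(n)$ moments lie between rebuilds, and a union bound absorbs the $\log n$ factor into the exponent, preserving the form $O(2^{-n^{1 - 4\delta}})$.

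The main obstacle is simply the calibration between $c$ and $\delta$: one has to pick the overflow threshold $c \log n$ large enough that the straggler expectation is negligible compared to the McDiarmid deviation $n^{1 - 2\delta}$, which requires the Chernoff constant hidden in $n^{-\Omega(c)}$ to exceed $2\delta$. By contrast, there are no subtle historical dependencies here (unlike in the Iceberg Lemma), because whether a record is a straggler at time $t$ depends only on the set of records present at time $t$, not on the history of insertions and deletions.
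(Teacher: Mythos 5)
Your proposal is correct and takes essentially the same route as the paper: bound the expectation of the straggler count via a Chernoff tail, observe that the count is a $1$-Lipschitz function of the $\Theta(n)$ independent bin choices, and apply McDiarmid to get concentration at scale $n^{1-2\delta}$ with failure probability $\exp(-\Omega(n^{1-4\delta}))$. The only cosmetic slip is that the union bound at the end is over $O(n)$ moments, not a $\log n$ factor — but since $\log n = o(n^{1-4\delta})$, the extra factor is absorbed into the exponent exactly as you claim.
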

\begin{proof}
The fact that we need only consider insertions allows for the following analysis.
The expected number of stragglers is $o(1)$ since each bin has a $1/\poly(n)$ probability
of overflowing. On the other hand, the number of stragglers is a function of $O(n)$ independent
random variables (i.e., the bin choice for each ball that is present), and each 
of these random variables can only affect the number of stragglers by $\pm 1$. Thus
we can apply McDiarmid's inequality (see Theorem \ref{thm:mc}) to obtain a concentration bound on
the number of stragglers. This implies that there are $O(n^{1 - 2\delta})$ 
stragglers with probability at least $1 - O(2^{-n^{1 - 4\delta}})$.
\end{proof}

Putting the pieces together, and using $\delta = \epsilon / 4$, we arrive at the following theorem.

\begin{thm}
Consider keys that are $\Theta(\log n)$ bits and let $\epsilon > 0$ be a constant. 
There is a hash table (using fully random hash functions) that supports constant time operations and constant load factor with failure
probability $1 - O(2^{-n^{1 - \epsilon}})$ per operation.
\end{thm}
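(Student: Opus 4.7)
The plan is to simply assemble the pieces already built in this section. The hash table uses $n/\log n$ bins each of capacity $\Theta(\log n)$, with each bin implemented by a dynamic fusion tree of P\v{a}tra\c{s}cu and Thorup (giving deterministic constant-time operations on sets of polylogarithmic size), together with the radix-trie side structure from \Cref{lem:trie} that stores any straggler records that overflow a bin. As noted just before the theorem, we may restrict attention to insertions and queries only, since deletions and resizing are handled by standard deamortized rebuilding, and a constant load factor is the only space guarantee required.

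First I would describe operations. A query on $x$ hashes $x$ to its bin, performs a constant-time fusion-tree query there, and if $x$ is not found, performs a constant-time lookup in the straggler trie. An insertion on $x$ similarly hashes $x$ to its bin; if the bin has fewer than $c \log n$ records, $x$ is inserted into the fusion tree, and otherwise $x$ becomes a straggler and is inserted into the trie. Since both substructures support constant-time operations (the trie by \Cref{lem:trie}, the fusion tree by \cite{patrascu2014dynamic}), each operation is constant time deterministically once the straggler count stays within the trie's capacity.

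Second, I would invoke \Cref{lem:stragglers}, setting $\delta = \epsilon/4$: with probability $1 - O(2^{-n^{1-4\delta}}) = 1 - O(2^{-n^{1-\epsilon}})$, the number of stragglers at every moment is $O(n^{1-2\delta}) = O(n^{1-\epsilon/2})$, which lies within the capacity $O(n^{1-2\delta})$ permitted by \Cref{lem:trie}. On this event, every operation succeeds in constant time, so the failure probability per operation is at most $O(2^{-n^{1-\epsilon}})$. For the space accounting, the bins consume $O(n)$ words in total (since each of the $n/\log n$ bins stores $O(\log n)$ records), and the trie consumes only $o(n)$ words by \Cref{lem:trie}, so the overall load factor is $\Theta(1)$.

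The main thing to verify carefully is just the compatibility of capacities: \Cref{lem:stragglers} gives an $O(n^{1-2\delta})$ bound on stragglers and \Cref{lem:trie} is sized to hold exactly this many keys, so the two parameters line up by design once we take $\delta = \epsilon/4$. No further work is needed beyond packaging these ingredients.
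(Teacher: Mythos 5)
Your proposal is correct and follows exactly the paper's approach: hash into $n/\log n$ bins of capacity $\Theta(\log n)$ managed by dynamic fusion trees, store overflow as stragglers in the trie of Lemma~\ref{lem:trie}, and invoke Lemma~\ref{lem:stragglers} with $\delta = \epsilon/4$ to bound the straggler count within the trie's capacity with the claimed failure probability. The paper states the proof in one line citing precisely these two lemmas; you have simply unpacked the construction and the capacity matching, which is accurate.
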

\begin{proof}
This follows from Lemmas \ref{lem:trie} and \ref{lem:stragglers}.
\end{proof}

\begin{cor}
Consider keys that are $\Theta(\log n)$ bits and let $\epsilon > 0$ be a constant.
Iceberg hashing with fully random hash functions
can be implemented with failure probability $1 - O(2^{-n^{1 - \epsilon}})$ per operation.
\end{cor}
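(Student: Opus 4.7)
The plan is to reduce the corollary to the preceding theorem by plugging that theorem's hash table in as the backyard of Iceberg, and then to invoke \Cref{thm:dynamic_prob}. Recall that \Cref{thm:dynamic_prob} says that if the backyard has per-operation failure probability $p(n)$, then Iceberg inherits failure probability $O(p(n) + 2^{-n/\polylog(n)})$. So it suffices to exhibit a backyard whose per-operation failure probability is $O(2^{-n^{1-\epsilon}})$ when used inside an Iceberg table of size $n$.

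First I would fix a slightly smaller constant $\epsilon' \in (0, \epsilon)$, say $\epsilon' = \epsilon/2$, and instantiate the preceding theorem at parameter $\epsilon'$. This gives a hash table $\calT$ with constant-time operations, constant load factor, and failure probability $O(2^{-m^{1-\epsilon'}})$, where $m$ is the number of records currently in $\calT$. I would then verify every hypothesis that \Cref{thm:dynamic_prob} imposes on the backyard. Constant-time operations and constant load factor (hence certainly the required load factor $\ge 1/\poly(h)$) are immediate. Stability is not automatic from the dynamic fusion-tree bins plus radix trie, but can be obtained generically by the standard extra-indirection trick noted in \Cref{sec:iceberg-hashing}, at the cost of only a constant factor in load and one additional cache miss; since the backyard is only an $O(1/\polylog n)$ fraction of the table, these losses are absorbed without affecting the other Iceberg guarantees.

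The remaining step is a parameter translation. I would size $\calT$ to accommodate up to $m = \Theta(n/\polylog n)$ records, which by the Iceberg Lemma suffices w.s.h.p.\ to contain the backyard. For all sufficiently large $n$, the choice $\epsilon' < \epsilon$ yields
\[
  m^{1-\epsilon'} \ge \left(\tfrac{n}{\polylog n}\right)^{1-\epsilon'} \ge n^{1-\epsilon},
\]
because the polynomial gap $n^{\epsilon - \epsilon'}$ in the exponent absorbs any polylogarithmic factor. Hence $p(n) = O(2^{-n^{1-\epsilon}})$, and plugging into \Cref{thm:dynamic_prob} yields total failure probability $O(2^{-n^{1-\epsilon}} + 2^{-n/\polylog n})$. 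Since $n^{1-\epsilon} = o(n/\polylog n)$, the backyard term dominates, giving the claimed $O(2^{-n^{1-\epsilon}})$ bound.

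I expect the main obstacle to be little more than this bookkeeping: the backyard's failure-probability bound is naturally parameterized by its own size, and one must be careful to re-express it in terms of the Iceberg table's size $n$. The trick of applying the preceding theorem with $\epsilon/2$ in place of the target $\epsilon$ is precisely what allows the polylogarithmic factor in $m = \Theta(n/\polylog n)$ to be absorbed into the exponent, so that the final bound is clean.
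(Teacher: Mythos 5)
Your proposal is correct and follows the same route the paper implicitly intends: instantiate the preceding theorem as the backyard table and invoke \Cref{thm:dynamic_prob}, whose $O(p(n) + 2^{-n/\polylog n})$ bound is then dominated by the backyard term. The $\epsilon' = \epsilon/2$ downshift to absorb the polylogarithmic gap between the backyard size $m = \Theta(n/\polylog n)$ and the Iceberg table size $n$, and the stability-via-indirection remark, are exactly the bookkeeping the paper leaves implicit in stating the corollary without proof.
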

 \section{Succinctness Through Quotienting}\label{sec:quotient}

So far, we have focused on designing an explicit data structure, that is, a space-efficient data structure that explicitly stores each key-value pair somewhere in memory. Such a data structure does not achieve the information-theoretic optimum memory consumption, however. Given a set of $n$ keys from a universe $U$, the minimum number of bits needed to encode the set is
$$\log \binom{|U|}{n},$$
which by Stirling’s approximation is $n \log \frac{|U|}{n} - O(n)$. In this section, we give a succinct version of the dynamic Iceberg hash table that,  assuming that $|U| \le \poly n$, stores $n$ keys using space
$$n \log \frac{|U|}{n} + O(n \log \log n)$$
bits. The table can also support $v$-bit values for each key using an additional $v$ bits of space per key.

\paragraph{Using quotients to save space.}  Although we will remove the assumption later, for now let us assume that our keys are selected at random from the universe $U$. This means that the master hash $m(x)$ of each key can simply use the low-order bits of the key $x$. These bits, in turn, do not need to be explicitly stored in the hash table. (This space-saving technique is often called \defn{quotienting}).

\paragraph{Storing some keys with fewer bits than others.}  Recall that only part of each key's address is determined by its master hash, and that, at any given moment, different keys may use different numbers of bits from their master hash. All of the keys within a given bin use the same number of bits from their master hashes, however. In particular, the keys in bins whose indices are in the range $I_i = (s2^{i-1}, s2^i]$ all use $i$ bits from their master hash. Thus, we can implement the bins in $I_i$ to only explicitly store $\log |U| - i$ bits of each key, with the rest of the bits for the key being stored implicitly by quotienting.

A consequence of this design is that some bins are more space efficient than others. If there are $m$ bins, then the most space efficient bins use $R = \log |U| - \log m + \log s$ bits per key (ignoring space used for metadata and empty slots) and the least space efficient bins use $\log |U|$ bits per key. The fraction of bins that use $R + i$ bits per key is $\Theta(1 / 2^i)$. Thus, the total number of bits wasted by not storing exactly $R$ bits per key is
$$O\left(\sum_{i \ge 1} \frac{n}{2^i}i\right) \le O(n).$$
Critically, the fact that some keys save more bits than others only affects our space consumption by $O(n)$ bits.

\paragraph{Analyzing the total space consumption of the hash table.}  The use of quotients in place of the master hash function complicates several aspects of the analysis of Iceberg hashing. Before discussing these aspects, however, let us analyze the space consumption of the hash table, assuming the standard analysis of Iceberg hashing. Throughout the rest of the section we set $h$ to be $\Theta(\log^2 n / \log n)$, which maximizes the space efficiency of the data structure.

The number of bits used to store keys (in the front yard) of the table is
\begin{equation} nR + O(n) = \log \binom{|U|}{n} + O(n \log \log n)
\label{eq:nron}
\end{equation} bits. As shown in Theorem~\ref{thm:dynamic}, the space consumed by the backyard table is $O(n / \log n)$ bits, the space consumed by meta-data is $O(n\log \log n )$ bits, and the empty slots in front-yard bins induce at most a $1 + O(\log \log n / \log n)$ multiplicative overhead on the space needed to store the keys (that is, on \eqref{eq:nron}). Putting the pieces together, the total space consumption is
$$ \log \binom{|U|}{n} + O(n \log \log n)$$
bits.

\paragraph{Handling lack of independence in the master hash function.}  We now turn our attention to a subtle complication that arises in analyzing Iceberg hash tables that use quotienting. Because the keys are assumed to be random \emph{distinct} elements from a universe $U$, the master hashes (and thus the bin assignments) are not independent. In particular, the distinctness assumption introduces (negative) correlation between the bin assignments of keys. Since the bin assignments are no longer independent, we can no longer directly apply the Iceberg lemma.

Let $K \subset U$ be the set of keys that are ever placed into the hash table. In general, $K$ could contain all of $U$. At the cost of making each key $\Theta(\log \log n)$ bits longer, we can assume without loss of generality that $|K| \le |U| / \polylog n$ for a polylogarithmic factor of our choice. Call this the \defn{sparsity property}.

Let $y_1, \ldots, y_{|K|}$ be independently selected random elements of $U$. For the sake of analysis, we can treat the elements $x_1, \ldots, x_{|K|}$ of $K$ as being constructed via the following process: for $i = 1, \ldots, |K|$, if $y_i \not \in \{x_1, \ldots, x_{i-1} \}$, then set $x_i = y_i$ and otherwise select $x_i$ at random from $U \setminus \{x_1, \ldots, x_{i-1} \}$. Say that the key $x_i$ is \defn{dangerous} if $x_i \neq y_i$. 

When a key $x$ is inserted, say that  $x$ is \defn{vicariously dangerous} if either $x$ is dangerous or there is another key $y$ that is present and maps to the same bin as does $x$. In order to analyze the backyard of Iceberg hashing in the context of random keys (whose quotients are used as master hashes), it suffices to show that, at any given moment, the number of vicariously dangerous keys is $n / \polylog n$. All other keys can be analyzed as though the master hashes were determined by $y_1, \ldots, y_{|K|}$ (which are independent). 

\begin{lem}

Consider a moment in which there are $n$ keys in the table. Then w.s.h.p.\ in $n$, the number of dangerous keys present is $n / \polylog n$ (for a polylogarithmic factor of our choice).
\label{lem:danger}
\end{lem}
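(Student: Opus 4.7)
The plan is to combine a simple conditional-probability bound with a stochastic-dominance coupling and a Chernoff inequality. Since the adversary is oblivious to the random choices of $x_i$, the operation sequence is fixed in advance, so the set $S$ of indices $i$ such that $x_i$ is currently present is a deterministic function of the adversary's input; in particular, $|S| = n$ is not random.

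The key observation is that the construction maintains distinctness of the $x_j$'s, so $|\{x_1, \ldots, x_{i-1}\}| = i - 1$ regardless of the history. Letting $D_i$ be the indicator that $x_i$ is dangerous, i.e., that $y_i \in \{x_1, \ldots, x_{i-1}\}$, this gives, for every realization of $y_1, \ldots, y_{i-1}$ and of the tie-breaking randomness,
\[\Pr[D_i = 1 \mid \text{history}] = \frac{i-1}{|U|} \le \frac{|K|}{|U|} \le \frac{1}{\polylog n},\]
where the last step uses the sparsity property (with a polylogarithmic factor of our choice).

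Because this conditional probability is bounded by $p := 1/\polylog n$ uniformly in the past, I would set up a standard step-by-step coupling with fresh independent draws $D_i' \sim \mathrm{Bernoulli}(p)$ such that $D_i \le D_i'$ pointwise. Since $S$ is deterministic, this yields $\sum_{i \in S} D_i \le \sum_{i \in S} D_i' \sim \mathrm{Binomial}(n, p)$, and a Chernoff bound gives $\sum_{i \in S} D_i' \le 2np = 2n/\polylog n$ with probability at least $1 - \exp(-np/3) \ge 1 - 2^{-n/\polylog n}$, which is w.s.h.p.\ in $n$.

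The most delicate step is the stochastic dominance. Although the $D_i$'s are genuinely correlated (the tie-breaking choices made when an earlier $x_j$ was dangerous determine which later $y_i$'s trigger collisions), the invariant $|\{x_1, \ldots, x_{i-1}\}| = i - 1$ makes the conditional probability of $D_i = 1$ depend only on $i$ and not on any other feature of the past. This uniform-in-history bound is precisely what lets us dominate the whole vector of $D_i$'s by independent Bernoullis, sidestepping the more sophisticated two-phase conditioning and McDiarmid machinery used for the Iceberg Lemma itself.
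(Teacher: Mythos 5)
Your proof is correct and takes essentially the same route as the paper: exploit the deterministic identity $|\{x_1,\ldots,x_{i-1}\}| = i-1$, bound the probability of danger by $|K|/|U| \le 1/\polylog n$ via the sparsity property, and conclude with a Chernoff bound. One small quibble: your remark that the $D_i$'s are ``genuinely correlated'' is actually false. Since $\Pr[D_i = 1 \mid y_1,\ldots,y_{i-1}, \text{tie-breaks}] = (i-1)/|U|$ is a constant (not merely bounded) regardless of the realization of the past, the $D_i$'s are in fact mutually independent, which is what the paper asserts directly; your coupling/stochastic-dominance detour is valid but unnecessary here, and would only earn its keep if the conditional probability were merely bounded rather than exactly determined.
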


\begin{proof}

The probability that $x_i$ is dangerous is exactly $(i-1)/|U|$, and the property of being dangerous is independent between keys $x_i$. By the sparsity property, the probability $(i-1)/|U|$ is at most $1 / \polylog n$ for all keys $x_i$. The lemma therefore follows by a Chernoff bound.

\end{proof}

\begin{lem}

Consider a moment in which there are $n$ keys in the table. Then w.s.h.p.\ in $n$, the number of vicariously dangerous keys present is $n / \polylog n$ (for a polylogarithmic factor of our choice).
\label{lem:vicarious}
\end{lem}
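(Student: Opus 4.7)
The plan is to split the count of vicariously dangerous keys into $V_1 + V_2$, where $V_1$ counts the currently present dangerous keys, and $V_2$ counts the currently present \emph{non}-dangerous keys $x$ that, when inserted, had some dangerous key $y$ present with $\g(x) = \g(y)$. It suffices to show $V_1, V_2 \le n/\polylog n$ w.s.h.p.

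For $V_1$, I would simply invoke Lemma \ref{lem:danger}, applying it at the current moment and, if needed, at every insertion time in the history via a union bound over the polynomially many operations. The real work lies in bounding $V_2$. Condition on the set $D$ of keys that were dangerous at some relevant moment in the history along with their bin assignments; by the union-bound version of Lemma \ref{lem:danger} we may assume $|D| \le n/\polylog n$. Let $B_D = \{\g(y) : y \in D\}$, so $|B_D| \le |D| \le n/\polylog n$. Then $V_2$ is at most the number of currently present non-dangerous keys whose bin lies in $B_D$.

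The key observation is that for a non-dangerous $x_i$ we have $x_i = y_i$, so $\g(x_i)$ is a function of the low-order bits of the truly independent random variable $y_i$. Because the $y_i$ are i.i.d.\ uniform over $U$ and, by the sparsity property, $|K| \le |U|/\polylog n$, the conditional distribution of $\g(x_i)$ given ``$x_i$ is non-dangerous'' and given $D$ together with its bin assignments is $(1+o(1))/m$-uniform, and the bin assignments of distinct non-dangerous keys are essentially mutually independent. So the expected number of non-dangerous keys falling into $B_D$ is at most $(1+o(1)) \cdot n \cdot |B_D|/m \le nh/\polylog n \le n/\polylog n$, after adjusting the $\polylog$ factor and using $h \le \polylog n$. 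Viewing $V_2$ as a function of the nearly-independent bin choices of non-dangerous keys, changing any one choice affects $V_2$ by at most $1$, so McDiarmid's inequality (Theorem \ref{thm:mc}) upgrades this mean bound to a w.s.h.p.\ bound $V_2 \le \E[V_2] + n/\polylog n \le n/\polylog n$. Combining the two bounds yields the lemma.

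The main obstacle is cleanly handling the residual correlations introduced by the distinctness constraint on the $x_i$: both the event ``$x_i$ is dangerous'' and the bin $\g(x_i)$ depend on $y_i$ (and on earlier $x_j$), so conditioning on non-danger does not literally decouple the bin assignments, and the set $D$ itself depends on the whole history. The sparsity property keeps the bias to a $1+o(1)$ factor, which is harmless, but writing the conditioning carefully---in the same two-phase style (condition first on $D$ and its bin assignments, then use concentration in the remaining randomness) used for the Iceberg Lemma---is the technical step that needs care.
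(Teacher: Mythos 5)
Your decomposition $V_1 + V_2$ is in the right spirit, and the $V_1$ bound is fine, but there is a real gap in how you handle $V_2$: the claim that $|D| \le n/\polylog n$ does not follow from Lemma~\ref{lem:danger}, even with a union bound. Lemma~\ref{lem:danger} is a \emph{per-moment} statement: at any fixed time, the number of dangerous keys then present is $n/\polylog n$ w.s.h.p.\ Union-bounding over the (polynomially many) relevant moments gives you that the bound holds simultaneously at every moment, but it does \emph{not} bound the size of the union of these sets over time. Since $K$ (the set of keys ever inserted) can be polynomially large, and the keys present at distinct insertion moments can be nearly disjoint, $D$ --- and hence $B_D$ --- can be as large as $\Theta(|K|/\polylog n)$, which may be a polynomial in $n$ rather than $O(n/\polylog n)$. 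At that point $B_D$ can cover a constant (or larger) fraction of all $\Theta(n/h)$ bins, and the expected count of non-dangerous keys landing in $B_D$ is no longer $n/\polylog n$, so the rest of the argument collapses.

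The paper sidesteps exactly this by \emph{not} pooling the dangerous keys across time. It splits the candidates $y$ witnessing $a_i$'s vicarious danger into those still present now (a single fixed set $A$, handled by one application of Lemma~\ref{lem:danger}) and those in $Z_i = Y_i \setminus A$ (present at $a_i$'s insertion but since deleted). For each $a_i$ individually, the dangerous keys in $Z_i$ number at most $n/\polylog n$, and crucially, the bin $\g(a_i)$ of a \emph{non-dangerous} $a_i$ is determined by the fresh, independent randomness $y_i$, so $a_i$ has probability $1/\polylog n$ of colliding with any of them. The per-$a_i$ bound plus a concentration argument then gives the result. Your version would need this per-insertion structure: bound, for each currently present non-dangerous $a_i$, the probability of collision with a dangerous key present at $a_i$'s insertion time using only the $O(n/\polylog n)$ dangerous keys present \emph{at that time}, rather than trying to bound a single global set $B_D$.

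A secondary concern: even if $|D|$ were small, applying McDiarmid's inequality to the ``nearly independent'' bin choices of non-dangerous keys, conditioned on $D$ and its bin assignments, is not immediate. Conditioning on which $x_i$ are dangerous entangles the $y_i$'s (both the event ``$y_i \notin \{x_1,\ldots,x_{i-1}\}$'' and the events determining danger of later $x_j$'s involve $y_i$). The sparsity property controls the magnitude of these biases, but McDiarmid as stated requires exact independence; you would need either an explicit coupling to an independent process or a version of the two-phase conditioning from the Iceberg Lemma to make the concentration step rigorous.
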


\begin{proof}

Let $A = \{a_1, \ldots, a_n \}$ be the keys present in the table. Let $Y_1, \ldots, Y_n$ be such that $Y_i$ is the set of keys present at the time of $a_i$’s insertion. Let $Z_i = Y_i \setminus A$.

There are three ways that $a_i$ can be vicariously dangerous: 
\begin{itemize}
	\item The first case is that $a_i$ itself is dangerous. Lemma \ref{lem:danger} tells us that w.s.h.p.\ in $n$ there are at most $n / \polylog n $ dangerous keys $a_i$.
	\item The second case is that there is an element $y \in Z_i$ such that $y$ is dangerous and $a_i$ and $y$ map to the same bin. By Lemma \ref{lem:danger} (w.s.h.p./ in $n$), the number of dangerous keys in $Z_i$ is $n / \polylog n $. This, in turn, means that at most a $1 / \polylog n $ fraction of bins contain a dangerous key from $Z_i$. The probability of $a_i$ mapping to the same bin as such a key (and not itself being dangerous) is at most $1 / \polylog n$. Since the $Z_i$'s are disjoint (and ignoring the keys $a_i$ that are dangerous), these probabilities are independent across keys $a_i$. By a Chernoff bound, w.s.h.p.\ in $n$, the number of keys in this case (that are not dangerous) is $n / \polylog n$.
	\item The third case is that there is an element $y \in A \setminus \{a_i\}$ such that $y$ is dangerous and $a_i$ and $y$ map to the same bin. By Lemma \ref{lem:danger}, the number of dangerous keys in $A$ is at most $n / \polylog n$ w.s.h.p.\. Conditioning on this, each key independently has at most a $1/ \polylog n$ probability of being in this third case (and not being dangerous). By a Chernoff bound, w.s.h.p.\ in $n$, the number of keys in this case (that are not dangerous) is $n / \polylog n$.
\end{itemize}

Combining the cases completes the proof of the lemma.

\end{proof}

By Lemma \ref{lem:vicarious}, the fact that master hashes are determined by $x_1, \ldots, x_{|K|}$ (which are not independent) instead of $y_1, \ldots, y_{|K|}$ (which are independent) only affects the size of the backyard of the hash table by $n / \polylog n$ (because of 
vicariously dangerous records behaving differently in the two cases). 

\paragraph{Simulating random keys with almost random permutations.}  In order to simulate random keys, a natural approach is to apply a random permutation to the universe $U$. Constructing an efficiently describable random permutation that can be evaluated in constant time remains a significant open question. Fortunately, there do exist efficient \defn{$k$-wise $\delta$-dependent} permutations \cite{naor1999construction, luby1988construct}, that is, permutations drawn from a distribution that is $\delta$-close to being $k$-wise independent. In more detail, there exists some constant $\alpha > 0$ such that for $k = n^{\alpha}$ and $\delta = 1/ \poly n$, there is a $k$-wise $\delta$-dependent family of permutations whose members can be evaluated in constant time and described using $n^{\beta}$ bits for some $\beta < 1$. In particular, one can achieve $\delta = 1 / 2^{\Omega(\log n)}$ using Corollary 8.1 of \cite{kaplan2009derandomized} (along with the hash family of \cite{pagh2008uniform} for $f_1, f_2$),
and then, as shown by \cite{kaplan2009derandomized}, $\delta$ can be amplified to $1 / \poly n$ by
composing together $O(1)$ independently selected permutations that each satisfy 
$\delta = 1/2^{\Omega(\log n)}$.

Because $\delta = 1/ \poly n$, the fraction of the time that the hash family does not behave as $k$-wise independent can be easily absorbed into the failure probability of Iceberg hashing (assuming we are only proving a w.h.p. guarantee). On the other hand, $n^{\alpha}$-wise independence does not obviously suffice for our analysis of Iceberg hashing. Essentially the same problem was encountered previously in \cite{arbitman2010backyard}, and their solution also works here. For completeness we describe the solution below.

Let $N$ be a parameter. As in Appendix \ref{sec:hashing} (where we discuss how to construct explicit families of hash functions for Iceberg hashing), we break our table into $N^{1 - \epsilon}$ subtables for some $\epsilon$ sufficiently smaller than $\alpha$. We will guarantee that the subtables are all the same sizes as each other, up to negligible terms, which means that they can be resized synchronously with each other; this, in turn, means that each partial expansion/shrinkage can be implemented in $O(1)$ memory allocations, which allows for us to directly access all of the subtables without any extra layers of indirection. That is, the act of decomposing the hash table into $N^{1 - \epsilon}$ subtables does not hurt the cache-efficiency of our data structure.

We may assume without loss of generality that the size of the table stays in the range $[N/2,N]$ for some $N$, since every time the size of the table changes by a constant factor, we can rebuild the table (in a deamortized fashion) to accommodate the new value of $N$. Note that such a rebuild does not violate succinctness because, as we move elements from the old version of the table to the new version, the partial shrinkages that occur in the old subtables will keep them succinct until they get to small enough sizes that their space consumption is negligible. 

Keys $x$ are mapped to a subtable by performing a permutation $\pi_1(x)$ and then using the least significant $(1 - \epsilon)\log N$ bits as a subtable choice. Let $x'$ denote the most significant $\log U - \log N + \epsilon\log N$ bits of $\pi_1(x)$. Rather than storing $x$ in the subtable, it suffices to store $x'$. And rather than storing $x'$ in the subtable, we instead perform a second permutation $\pi_2(x')$ to obtain the actual key that we store in the subtable.

The second permutation $\pi_2$ can be implemented as a $N^{\alpha}$-wise $(1/ \poly n)$-dependent permutation. As in Section \ref{sec:hashing}, the small size of the subtable ensures that $N^{\alpha}$-independence suffices. The more difficult challenge is implementing $\pi_1$ so that, with high probability, each of the subtables receive $N^\epsilon + O(N^{(2 / 3)\epsilon})$ keys.

Arbitman et al. \cite{arbitman2010backyard} give an elegant solution to this problem by defining $\pi_1$ using a single-round Feistel permutation. Define the \defn{right part} $x_R$ of a key $x$ to be the least significant $(1 - \epsilon)\log N$ bits of $x$ and define the \defn{left part} $x_L$ to be the remaining bits.  Let $\mathcal{H}$ be the family of hash functions given by Pagh and Pagh \cite{pagh2008uniform} parameterized to simulate $k$-independence for $k = N / \log^2 N$ and so that each $h \in \mathcal{H}$ maps the left part $x_L$ of a key to an output of $(1 - \epsilon) \log N$ bits (which is the same number of bits in the right part $x_R$ of the key). The guarantee given by \cite{pagh2008uniform} is that for a random $h \in \mathcal{H}$, and for any given set $S$ of size $O(N / \log^2 N)$, the function $h \in \mathcal{H}$ acts fully randomly on $S$ with high probability in $N$; moreover, each hash function $h \in \mathcal{H}$ can be represented with $O(N / \log N)$ description bits and can be evaluated in constant time. Using a random $h \in \mathcal{H}$, the permutation $\pi_1(x)$ is defined by $h(x_L) \oplus x$ where $\oplus$ denotes the XOR operator. Note that $\pi_1$ changes only the least significant $(1-\epsilon)\log N$ bits of $x$, meaning that $x_L$ does not change. Thus, even though the function $h$ may not be invertible, the function $\pi_1$ is invertible (and, in fact, $\pi_1 = \pi_1^{-1}$). This ensures that $\pi_1$ is a permutation. On the other hand, as shown by Arbitman et al. \cite{arbitman2010backyard} (see their Claim 5.7), the randomness from $h$ is sufficient to ensure that $\pi_1$ distributes keys evenly among the subtables, that is, every subtable has $N^{\epsilon} + O(N^{(2/3) \epsilon})$ keys with high probability in $N$.

We remark that, since $\pi_1$ preserves $x_L$, the input to $\pi_2$ is actually just $x_L$. Thus, the subtable is selected by $h(x_L) \oplus x_R$ and then the key $\pi_2(x_L)$ is stored in the subtable.

We also remark that, although the permutations $\pi_1$ and $\pi_2$ are used to randomize the key (and thus determine the master hash), the chunk hash functions $\{g_i^{(a)}\}$ used by waterfall addressing must be generated through a separate process, and should thus be implemented using the hash-function construction given in Appendix \ref{sec:hashing}.

\paragraph{Putting the pieces together.}  To conclude the section, we give a theorem summarizing the guarantees of a quotiented Iceberg hash table.

\begin{thm}[Theorem \ref{thm:dynamic} with Quotienting] Consider a dynamic quotiented Iceberg hash table. Let $n$ be the current number of keys, and suppose $|U| \le \poly n$. Then the table consumes
$$ \log \binom{|U|}{n} + O(n \log \log n)$$
bits and supports operations which run in constant time with high probability in $n$. Additionally, the stability and cache-efficiency guarantees that hold on the non-quotiented Iceberg hash table continue to hold for the quotiented Iceberg hash table (although, of course, due to quotienting, some bits of each key may be stored implicitly based on where the key resides).
\end{thm}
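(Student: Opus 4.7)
The plan is to combine the dynamic succinct Iceberg hash table of Theorem \ref{thm:dynamic_succinct} (using $h = \Theta(\log^2 n / \log \log n)$, which yields the load factor of Corollary \ref{cor:space_efficient_dynamic}) with the quotienting machinery already developed in this section. I would first apply the single-round Feistel permutation $\pi_1$ built from a random $h \in \mathcal{H}$ (the Pagh--Pagh family) to split the universe into $N^{1 - \epsilon}$ subtables that all stay within $N^\epsilon + O(N^{(2/3)\epsilon})$ keys w.h.p., then apply the $N^\alpha$-wise $(1/\poly n)$-dependent permutation $\pi_2$ on each $x_L$ to obtain the actual stored key. Because the subtables are balanced, they can be resized synchronously, so all subtables can be stored in one contiguous layout with $O(1)$ allocations per partial expansion, and the data structure inherits cache efficiency without extra indirection.

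Next I would argue that all of the Iceberg analysis carries over to this setting. Two subtleties arise. First, since keys are distinct, the master hashes (the low-order bits of $\pi_1(x)$ and the bits of $\pi_2(x_L)$) are not fully independent; I resolve this with Lemma \ref{lem:vicarious}, which shows that all but $n / \polylog n$ keys can be analyzed as if their master hashes were drawn from independent $y_i$'s, and the remaining vicariously dangerous keys can simply be charged to the backyard budget. Second, $\pi_2$ gives only $N^\alpha$-wise $(1/\poly n)$-dependence; but each subtable has size at most $n^\epsilon$ with $\epsilon$ chosen sufficiently smaller than $\alpha$, so $N^\alpha$-independence covers every Chernoff/McDiarmid bound used in Sections \ref{sec:iceberg-lemma}--\ref{sec:space}, exactly as in the explicit-hash construction of Appendix \ref{sec:hashing}. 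The $1/\poly n$ distance from $N^\alpha$-wise independence is absorbed into the overall $1/\poly n$ failure probability.

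Then I would tally space. By Theorem \ref{thm:dynamic_succinct}, the front yard has $(1 + O(\sqrt{\log h / h})) n = (1 + O(\log \log n / \log n)) n$ slots. A slot in a bin whose index falls in $I_i = (s 2^{i-1}, s 2^i]$ explicitly stores only $\log |U| - i$ bits of the key, since $i$ bits are encoded by the bin address via quotienting. As shown earlier in the section, the geometric distribution of bin indices means that the total deviation from storing exactly $R = \log |U| - \log m + \log s$ bits per key is $O(n)$ bits; with $m = \Theta(n/h)$ this gives a key payload of $n \log(|U|/n) + O(n) = \log \binom{|U|}{n} + O(n)$ bits. The metadata per bin (routing tables, fill/vacancy/floating/demand counters, and the $O(\log h) = O(\log \log n)$-bit pointers for the linked lists from Theorem \ref{thm:findrecords}) contributes $O(n \log \log n)$ bits in total. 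The backyard holds $O(n / \polylog n)$ keys at $O(\log n)$ bits each, which is absorbed. Whenever $n$ changes by a constant factor, we rebuild the table as in Remark \ref{rem:window-run-rebuilds}, in a deamortized fashion that keeps both succinctness and cache behavior intact.

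Finally, time, cache, and stability follow directly from Theorem \ref{thm:dynamic_succinct} applied to each (constant-time-evaluable) subtable, once one verifies that $\pi_1$, $\pi_2$, and the chunk hash functions $\{g_i^{(a)}\}$ of waterfall addressing all evaluate in constant time with cacheable descriptions of $o(n)$ bits. The main obstacle I anticipate is precisely the interface between the quotienting randomness (which provides only limited independence and operates on distinct keys) and the super-high-probability machinery of the Iceberg Lemma, but this is exactly what Lemma \ref{lem:vicarious} plus the subtable decomposition are designed to address.
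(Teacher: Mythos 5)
Your proposal follows essentially the same route as the paper's Section~\ref{sec:quotient}: the paper does not give a separate boxed proof of this theorem but rather treats the entire section (quotient accounting, the dangerous/vicariously-dangerous lemmas, the $\pi_1$ Feistel permutation and $\pi_2$ almost-independent permutation, and the synchronous subtable resizing) as the argument, and you hit every one of these ingredients in the same order. One small arithmetic slip: with $m = \Theta(n/h)$ and $s = \sqrt{h}$, $R = \log|U| - \log m + \log s = \log(|U|/n) + \Theta(\log h) = \log(|U|/n) + \Theta(\log\log n)$, so the key payload is $n\log(|U|/n) + O(n\log\log n)$ rather than your $n\log(|U|/n) + O(n)$ — but this is harmless since the metadata already costs $O(n\log\log n)$ bits, so the stated bound is unaffected.
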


We remark that the quotiented Iceberg hash table can also easily be adapted to store an $O(\log n)$-bit value for each key. If values are $j$ bits, then the table uses space 
$$ \log \binom{|U|}{n} + nj + O(n \log \log n)$$
bits.

 \section{Other Related Work on Hash Tables}
\label{sec:related}

In this section we summarize some of the milestones in past work on hash-table design. Although many of these works are also discussed earlier in the paper, we include a discussion of them all together here for completeness. 

The first hash table to achieve constant-time operations with high probability was that of Dietzfelbinger et al.~\cite{dietzfelbinger1990new} in 1990 (building on previous work by Fredman et al.~\cite{Fredman82FKS} and Dietzfelbinger et al.~\cite{dietzfelbinger88hash}). Subsequently, Pagh and Rodler \cite{Pagh:CuckooHash} introduced a much simpler hash table, namely Cuckoo hashing, that achieves constant-time queries but allows for insertions to sometimes take longer. By queuing the work to be performed in a Cuckoo hash table, and performing it incrementally, Arbitman et al.~\cite{Arbitman09Deamortized} showed how to make all operations in a Cuckoo hash table take constant time.

A separate line of work has focused on optimizing space utilization. The first dynamically-resizable, succinct (i.e., the space consumption is comparable with the information theory lower bound) hash table was proposed by Raman and Rao \cite{Raman03Succinct} in 2003, but the insertion cost was only $O(1)$ expected in the amortized sense. Demaine et al.~\cite{demaine2005dynamic} improved this to constant time in the worst case in exchange for a constant factor loss in space consumption. In 2010, Arbitman et al.~\cite{arbitman2010backyard} gave the first hash table to both be succinct and provide all worst-case costs (although it is not dynamically resizable). They used a front yard/backyard table, in which the backyard is implemented as a deamortized Cuckoo hash table, which naturally lends to a mechanism for controlling the occupancy of the backyard by moving records back to the front yard. Similar ideas were used by Bercea and Even to build hash tables for random multisets \cite{Bercea2020Filter} and for multisets \cite{Bercea2020Dictionary}. Recently, Liu et al.~\cite{liu2020succinct} presented a dictionary that, in addition to succinctness and worst-case costs, supports dynamic resizing. As in this paper, the results of \cite{arbitman2010backyard} and \cite{liu2020succinct} are presented both in terms of hash tables with high load factors and in terms of succinct data structures.

Research on external memory hashing has taken two avenues. The first is to allow for super-constant time queries in exchange for sub-constant (amortized) time inserts and deletes \cite{Iacono12Hashing,DBLP:conf/icalp/ConwayFS18, verbin2013limits}. The second is to achieve $1 + o(1)$ cache misses per operation, for both queries, insertions, and deletes~\cite{jensen2008optimality, PaghWeYi14}.  Particularly interesting is the external memory hash table by Jensen and Pagh \cite{jensen2008optimality}, which supports all operations in $1 + o(1)$ expected amortized cache misses, uses $(1 + o(1))n$ space, and is dynamically resizable, but does not achieve constant-time operations in the RAM model. Their hash table \cite{jensen2008optimality} makes use of cache-efficient resizing techniques that were previously developed by Larson~\cite{larson1980linear} for external-memory file storage (as discussed in Section \ref{sec:dynamic}, the same resizing techniques serve as a starting point in our design of waterfall addressing), which in turn extend previous work on the topic by Litwin \cite{litwin1980linear}.

A major open question is whether randomness is needed to achieve constant-time operations (see discussion in \cite{arbitman2010backyard} as well as~\cite{Sundar91,HagerupMiPa01, Ruzic08,Pagh00,patrascu2014dynamic}). In the case where the hash table is very small, the dynamic fusion tree of P\v{a}tra\c{s}cu and Thorup \cite{patrascu2014dynamic} achieves this goal, but for larger hash tables, the question remains open. This raises the simpler question of what the smallest-achievable failure probability is. Until this paper, the only known schemes to achieve subpolynomial probabilities were those of \cite{goodrich2011fully, goodrich2012cache}, resulting in a failure probability of $1 / 2 ^ {\polylog n}$. Whether these schemes are compatible with explicit families of hash functions (without amplifying the failure probability) remains an open question.

 \section*{Acknowledgments}

We would like to thank Sepehr Assadi, Rob Johnson, John Kuszmaul, Rose Silver, and Janet Vorobyeva for helpful discussions and John Owens for suggesting the name Iceberg hashing.

This research was supported in part by NSF grants CSR-1938180, CCF-2106999, CCF-2118620, CCF-2118832, CCF-2106827, CCF-1725543, CSR-1763680, CCF-1716252 and CNS-1938709, as well as an NSF GRFP fellowship and a Fannie and John Hertz Fellowship.

This research was also partially sponsored by the United States Air Force Research Laboratory and was accomplished under Cooperative Agreement Number FA8750-19-2-1000. The views and conclusions contained in this document are those of the authors and should not be interpreted as representing the official policies, either expressed or implied, of the United States Air Force or the U.S. Government. The U.S. Government is authorized to reproduce and distribute reprints for Government purposes notwithstanding any copyright notation herein.
 
\markeverypar{\the\everypar\looseness=0 }

\bibliographystyle{plain}
\bibliography{bibliography}

\appendix
\setlength{\parindent}{1.5em}

\section{An Explicit Family of Hash Functions for Iceberg Hashing} \label{sec:hashing}

In this section, we show how to implement Iceberg hashing using $O(n^{\alpha} \log n)$ random bits for a positive constant $\alpha > 0$ of our choice. As in past work \cite{arbitman2010backyard, liu2020succinct, goodrich2012cache}, the hash-function families that we use will introduce
an additional $1 / \poly n$ probability of failure, meaning that they cannot be used to offer
anything better than w.h.p.\ guarantees.

\paragraph{Reducing to the case where keys are $\Theta(\log n)$ bits.}
In general, Iceberg hashing allows for keys as large as $\Theta(\word)$ bits, where $\word$ is
the machine word size. 
We can assume without loss of generality, however, that keys are 
$\Theta(\log n)$ bits. In particular, prior to computing the hash of a key $x$, we can use
pairwise-independent hashing to map $x$ to an intermediate value $x'$ that is $\Theta(\log n)$ bits,
and then we can compute the hash of $x'$ rather than $x$. The intermediate values introduce
a $1 / \poly n$ probability of collision between pairs of keys, but 
this is easily absorbed into the failure probability of a hash table.

\paragraph{Two families of hash functions.} We will make use of two families of hash functions, both of which map a universe $U$ of size polynomial in $n$ to $\Theta(\log n)$ bits.

The first family $\mathcal{H}_1$, which is due to Pagh and Pagh \cite{pagh2008uniform} (see also related work by Dietzfelbinger and Woelfel \cite{dietzfelbinger2003almost}), offers the following guarantee for a randomly selected hash function $g \in \mathcal{H}_1$: for any fixed set $S \subset U$ of size $|S| = n^{\alpha}$, with high probability in $n$, $g$ is random on $S$. Moreover, each hash function $g \in \mathcal{H}_1$ can be represented with $O(n^{\alpha} \log n)$ description bits and can be evaluated in constant time.

The second family $\mathcal{H}_2$ uses tabulation hashing \cite{patracscu2012power}. We will use $\mathcal{H}_2$ to map records to random “buckets” in the range $[n^{1 - \varepsilon}]$ for some small constant $\varepsilon$ to be selected later. By using tabulation hashing with an appropriately small table-size parameter $c$, we can arrive at the following guarantee for a randomly selected hash function $g \in \mathcal{H}_2$: for any set $S$ of $O(n)$ records, with high probability in $n$, the number of records from $S$ that map to any given bucket is $|S| / n^{1 - \varepsilon} \pm n^{(2/3) \varepsilon}$ (see Theorem $1$ of \cite{patracscu2012power}). Moreover, each function $g \in \mathcal{H}_2$ can be represented with $O(n^{\varepsilon})$ description bits and can be evaluated in constant time.

\paragraph{Using $\mathcal{H}_1$ and $\mathcal{H}_2$ in Iceberg hashing.} Let $\alpha > 0$ be a small positive constant of our choice and let $\varepsilon > 0$ be a sufficiently small positive constant relative to $\alpha$. Let $N$ be a parameter and consider a hash table whose size stays in the range $[N^{1-\varepsilon/4}, N]$. We remark that this size restriction is without loss of generality using the window rebuild technique described in Remark \ref{rem:window-run-rebuilds} in Section \ref{sec:dynamic}.

We maintain $k = N^{1 - \varepsilon}$ Iceberg hash tables $T_1, \ldots, T_k$, each of which is managed using a single hash function $g$ drawn at random from $\mathcal{H}_1$. \footnote{We can use the hash function to generate all of the hash functions needed for an Iceberg hash table. To evaluate the $i$th hash function on a key $x$, we simply compute $g(x \circ i)$, where $\circ$ denotes concatenation.} Keys are then mapped to a random table $T_i$, where $i$ is selected using a random hash function $f$ from $\mathcal{H}_2$.

The guarantee of $\mathcal{H}_2$ ensures that all of the tables $T_1, \ldots, T_k$ have the same numbers of records assigned to them up to $\pm N^{(2/3) \varepsilon}$, which is a low order term for each table. As a consequence, we can dynamically resize all of the tables $T_1, \ldots, T_k$ in sync with one another. That is, when we perform a partial expansion or shrinkage on one of the tables, we perform it on all of them. This is important, as it eliminates the need to have $k$ pointers pointing to different data structures, and allows us to store pointers to all of our memory allocations in cache, as in Theorem \ref{thm:dynamic}.

The guarantee of $\mathcal{H}_2$, on the other hand, allows us to treat each of the tables $T_1, \ldots, T_k$ as being managed by $N^{\alpha}$-independent hash functions. Since each table $T_i$ stores at most $N^{\varepsilon}$ keys at any given moment (with high probability), and since the analysis of Iceberg hashing on $N^{\varepsilon}$ keys can be performed with $O(N^{2\varepsilon})$-wise independence (note, in particular, that the proof of the Iceberg lemma on $m$ balls requires only $O(m^2)$-wise independence so that the random variables $\alpha = \{\alpha_i\}$ and $\beta = \{beta_i\}$ are mutually independent), it follows that $N^{\alpha}$-independence suffices for the analysis of each individual table $T_i$.

Call the resulting data structure a \defn{low-randomness Iceberg hash table}. We have the following theorem.

\begin{thm}

Consider a low-randomness Iceberg hash table whose size stays in the range $[N^{1-\varepsilon/4}, N]$. Furthermore, suppose that the description bits for $g$ and $f$ fit in cache. Then the guarantees from Theorems \ref{thm:static} \ref{thm:dynamic}, \ref{thm:static_succinct}, and \ref{thm:dynamic_succinct} hold.

\end{thm}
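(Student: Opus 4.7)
The plan is to verify that the decomposition into $k = N^{1-\varepsilon}$ subtables $T_1, \ldots, T_k$ preserves each of the guarantees of Theorems \ref{thm:static}, \ref{thm:dynamic}, \ref{thm:static_succinct}, and \ref{thm:dynamic_succinct}. The strategy is to reduce every property of the low-randomness table to the corresponding property of a single subtable analyzed under the bounded independence that $\mathcal{H}_1$ provides.

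First I would establish a ``balanced load'' event $\mathcal{E}$: by the guarantee of $\mathcal{H}_2$, with probability $1 - 1/\poly N$ every subtable $T_i$ contains $N^{\varepsilon} \pm N^{(2/3)\varepsilon}$ keys at every relevant moment in time (applying the concentration bound across the $\poly N$ time steps by a union bound). Condition on $\mathcal{E}$ for the rest of the argument. Under $\mathcal{E}$ the subtables are essentially the same size, so the partial expansions and contractions of waterfall addressing can be executed synchronously across all of $T_1, \ldots, T_k$. A single sequence of $O(1)$ memory allocations per partial resize then suffices to resize every subtable; pointers to every allocation fit in cache, and no extra layer of indirection is incurred per operation.

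Next I would argue that each subtable individually satisfies the Iceberg-hashing analyses from Sections \ref{sec:iceberg-lemma}--\ref{sec:space}, even though its hash functions are generated from a single $g \in \mathcal{H}_1$ (via $g(x \circ i)$ for the $i$th subtable). Inspecting each probabilistic argument -- the Iceberg Lemma, Lemmas \ref{lem:fingerprints}--\ref{lem:second_level}, and the waterfall-addressing analyses -- the only randomness involved in a given $T_i$ is over the hash images of at most $m = O(N^{\varepsilon})$ keys, and the largest independence requirement is polynomial in $m$. The most delicate case is the Iceberg Lemma, whose two-phase analysis conditions on the vector $\beta$ of dimension $O(m^2)$, so roughly $O(N^{2\varepsilon})$-wise independence suffices (with a polylogarithmic slack for $h$). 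Choosing $\alpha$ to exceed $2\varepsilon$ by a constant, and invoking the guarantee of $\mathcal{H}_1$ that $g$ is fully random on any fixed set of size $N^{\alpha}$ with probability $1 - 1/\poly N$, we may treat each $T_i$'s hash functions as fully random on the actual set of keys inserted into $T_i$, except on a $1/\poly N$ event. A union bound over $i \in [k]$ absorbs these failure events into the overall w.h.p.\ guarantee.

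Finally I would assemble the four conclusions. For time efficiency, each operation touches one $T_i$ and, by the corresponding static/dynamic theorem applied to $T_i$, takes $O(1)$ time w.h.p. For cache efficiency, locating the relevant $T_i$ costs no extra cache miss (the descriptions of $f$ and $g$, and the pointers to every subtable's memory chunks, are all cached by assumption), so each operation inherits the $1 + O(1/\sqrt{B})$ expected cache-miss bound from $T_i$. For space efficiency, summing the per-subtable bound $(1 + O(\sqrt{\log h}/\sqrt{h})) \cdot |T_i|$ under $\mathcal{E}$ yields the global bound, since the $\pm N^{(2/3)\varepsilon}$ discrepancies in $|T_i|$ are dominated by the main $N^{\varepsilon}$ term and sum to a lower-order additive term. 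Stability is immediate because $f$ never changes, so keys never migrate between subtables except during synchronous resizes. The main obstacle is the third step: carefully certifying, lemma by lemma, that every probabilistic argument in the Iceberg analysis really does go through with only polynomial-in-$N^{\varepsilon}$ independence. In each case the proof only ever uses Chernoff or McDiarmid bounds on sums of indicator variables whose joint distribution depends on polynomially many hash values, so the bounded-independence guarantee provided by $\mathcal{H}_1$ is sufficient.
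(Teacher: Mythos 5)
Your proof sketch follows essentially the same line of reasoning as the paper's own (informal) argument, which is given in the paragraph preceding the theorem statement: condition on $\mathcal{H}_2$ balancing the subtable loads so that all $T_i$ stay within $\pm N^{(2/3)\varepsilon}$ of one another and can be resized synchronously with $O(1)$ allocations per partial resize; then observe that, per subtable, the Iceberg analysis only touches $O(N^{2\varepsilon})$ hash values (the $\alpha$'s and the $\beta$'s of the Iceberg Lemma, the colliding keys in Lemmas~\ref{lem:fingerprints}--\ref{lem:second_level}, and the waterfall bookkeeping), so the $\mathcal{H}_1$ guarantee of full randomness on any fixed set of size $N^\alpha$ (with $\alpha$ a sufficiently large constant multiple of $\varepsilon$) substitutes for true randomness after a union bound over subtables. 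You correctly identify the $O(m^2)$-wise requirement stemming from the $\beta$ vector as the binding constraint, which is exactly the point the paper emphasizes parenthetically. One small imprecision: the subtable occupancy under $\mathcal{E}$ should be stated as $n/k \pm N^{(2/3)\varepsilon}$ where $n$ is the current number of keys (which may be as small as $N^{1-\varepsilon/4}$), not $N^\varepsilon \pm N^{(2/3)\varepsilon}$; since $1-\varepsilon/4 - (1-\varepsilon) = 3\varepsilon/4 > 2\varepsilon/3$, the main term still dominates the error, but the numeric target is $n/k$, not $N^\varepsilon$. This does not affect the correctness of the argument.
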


\begin{rmk}

As was the case in Theorem \ref{thm:dynamic} (see Remark \ref{rem:window-run-rebuilds}), the size restriction on the hash table can be removed by performing random rebuilds very rarely. As in Remark \ref{rem:window-run-rebuilds} this preserves the other guarantees of the hash table.

\end{rmk}

\section{The Full Analysis of Partial Resizing}\label{app:expansion}
In this section we give the full analysis of partial expansions in an Iceberg hash table (as described in Section \ref{sec:resizing_iceberg}).

To simplify the exposition, we shall perform our analysis as though we were using waterfall addressing (rather than truncated waterfall addressing). The relevant difference is that truncated waterfall addressing is not quite uniform, selecting some bins with a $(1 + O(1 / s))$-factor greater likelihood than others. This factor is easily absorbed into the Iceberg hash table by simply reducing the entire load of the table by a factor of $1 + O(1 / s)$ (or by increasing $h$ by a factor of $1 + O(1 / s)$). Rather than carry this factor of $1 + O(1 / s)$ (on the load of the table) around with us through the analysis, we instead perform the analysis assuming uniform bin assignments, and then adjust the analysis at the end appropriately.

The next lemma shows that the guarantee from Lemma~\ref{lem:second_level}
(i.e., the analysis of the backyard in static-size Iceberg
hashing) continues to hold after a partial expansion.

\begin{lem}
  Let $h \le \polylog n$. Consider any time $t > t_2$ prior to the
  next partial expansion or contraction. Let $r_1$ be the number of bins
  prior to the partial expansion and $r_2$ be the number of bins after
  the partial expansion. Suppose that, during the partial expansion,
  the total number of records never exceeds $r_1 h$, and set
  $N = r_2 h$.

  With super high probability in $N$, at time $t$, the number of records in
  the backyard is $N / \poly(h)$. Moreover, for a
  given record $x$, the probability that $x$ hashes to a bin $g_{\text{new}}(x)$
  with a non-zero floating counter at time $t$ is at most
  $1 / \poly(h)$.
  \label{lem:dynamic_correctness1}
\end{lem}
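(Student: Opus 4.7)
The plan is to reduce the post-expansion analysis to the static Iceberg analysis (Lemma \ref{lem:second_level}) applied under $\g_{\textnormal{new}}$, and then to carefully bound the slack introduced by the three-phase expansion protocol. First, I would construct a hypothetical coupled process $\mathcal{P}$ that replays the entire user sequence up through time $t$, but uses $\g_{\textnormal{new}}$ as the bin-assignment function throughout. Because (truncated) waterfall addressing picks each bin in $[r_2]$ nearly uniformly and independently across keys, $\mathcal{P}$ is an instance of the static Iceberg setup on $r_2 = N/h$ bins; Lemma \ref{lem:second_level} therefore gives that in $\mathcal{P}$ the backyard has size $N/\poly(h)$ w.s.h.p.\ in $N$, and that for any fixed $x$, the probability $\g_{\textnormal{new}}(x)$ contains a fingerprint floater or capacity floater in $\mathcal{P}$ is at most $1/\poly(h)$. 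The $1 + O(1/s)$ factor by which truncated waterfall addressing fails to be uniform is absorbed by slightly lowering the effective load, as noted at the start of the section.

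Next, I would argue that the real post-expansion backyard at time $t$ is contained, up to a $N/\poly(h)$ additive term, in the union of: (i) the capacity exposers in $\mathcal{P}$ during $[t_1,t]$; (ii) the fingerprint floaters in $\mathcal{P}$ at time $t$; and (iii) the real-system backyard records already present at time $t_1$ that failed to be relocated into the front yard by the Reshuffling Phase. The three-phase design is engineered for precisely this: demand counters and reservations at the time freeze guarantee that every grandfathered record whose $\g_{\textnormal{new}}$-bin has at most $r = h + \tau_h$ claimants obtains a reserved slot, matching its fate in $\mathcal{P}$. A grandfathered record is diverted to the backyard only when its $\g_{\textnormal{new}}$-bin was overcommitted at $t_1$ (making it a capacity exposer in $\mathcal{P}$) or when it encounters a fingerprint collision with another record targeting the same bin (making it a fingerprint floater in $\mathcal{P}$). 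Non-grandfathered insertions made in $[t_1,t]$ behave identically in both processes except possibly when they find only reserved-and-unfilled slots, and the count of such wasted reservations is bounded by set (i), so the excess is absorbed.

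For the per-record claim, I would condition on $x$'s value of $\g_{\textnormal{new}}(x)$ and invoke the per-record parts of Lemmas \ref{lem:fingerprints} and \ref{lem:capacity_floaters} inside $\mathcal{P}$: the probability that $\g_{\textnormal{new}}(x)$ collides (in bin) with a capacity exposer or fingerprint floater of $\mathcal{P}$ is at most $1/\poly(h)$. The contribution from (iii) is bounded inductively: by Lemma \ref{lem:second_level_current_size} and/or an inductive hypothesis over prior partial resizes, there are at most $N/\poly(h)$ such stale backyard records, each of which has independent probability $h/N$ of hashing via $\g_{\textnormal{new}}$ into $\g_{\textnormal{new}}(x)$, so a union bound adds only $1/\poly(h)$.

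The main obstacle will be item (iii) together with the bookkeeping needed to formalize the coupling in the presence of concurrent deletions and re-insertions: a grandfathered record can be deleted from its $\g_{\textnormal{old}}$ slot mid-Reshuffling before ever being examined, and a record can enter/leave the backyard multiple times within the expansion window, both of which threaten to desynchronize the real system from $\mathcal{P}$. The decrement rules for demand counters, reserved slots, and floating counters laid out in Section \ref{sec:resizing_iceberg} are designed to keep those accounting invariants tight, and the core technical work will be to verify that at every intermediate time, the size of the real backyard exceeds the corresponding quantity in $\mathcal{P}$ by at most $N/\poly(h)$, so that taking $t > t_2$ yields the claimed w.s.h.p.\ bound.
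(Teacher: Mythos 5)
Your decomposition into (i) capacity exposers of $\mathcal{P}$, (ii) fingerprint floaters of $\mathcal{P}$, and (iii) pre-existing backyard records is close in spirit, but the use of a \emph{single} coupled process $\mathcal{P}$ (addressed throughout by $\g_{\text{new}}$ on $r_2$ bins) has a genuine gap. The paper's proof introduces \emph{two} hypothetical static Iceberg tables --- ``Situation (1)'' with $r_1$ bins addressed by $\g_{\text{old}}$, and ``Situation (2)'' with $r_2$ bins addressed by $\g_{\text{new}}$ (your $\mathcal{P}$) --- and splits each case into subcases depending on whether $\g_{\text{new}}(x)$ lies in the new chunk $C$ or not. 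The subcases with $\g_{\text{new}}(x) \notin C$ are compared to Situation (1), not Situation (2), and this is essential: for a bin $b \notin C$, the demand counter and hence the reserved slots at the time freeze are keyed to the $\g_{\text{old}}$-demand on $b$, not the $\g_{\text{new}}$-demand. Since many records with $\g_{\text{old}}(y)=b$ have $\g_{\text{new}}(y)\in C$, the $\g_{\text{old}}$-demand on $b$ can far exceed the $\g_{\text{new}}$-demand.

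Here is where your argument breaks. Consider a non-grandfathered record $z$ inserted at some $t' \in (t_1, t_2]$ with $\g_{\text{new}}(z)=b\notin C$, at a moment when the reshuffling has not yet processed the departing records from $b$. Bin $b$ may be completely full of grandfathered records, nearly all of which are scheduled to move into $C$, so $z$ finds no free unreserved slot and is sent to the backyard. This $z$ is \emph{not} a capacity exposer of $\mathcal{P}$, because $\mathcal{P}$ (addressing only via $\g_{\text{new}}$) never places those about-to-depart records in $b$ at all; $\mathcal{P}$'s bin $b$ at time $t'$ may contain only $O(1)$ records. Nor is $z$ a fingerprint floater, nor is it in your set (iii) (it was inserted after $t_1$). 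So your union (i)$\cup$(ii)$\cup$(iii) misses $z$, and your claim that ``the count of such wasted reservations is bounded by set (i)'' is false precisely because the reserved slots in $b\notin C$ correspond to $\g_{\text{old}}$-load, which $\mathcal{P}$ is blind to. The paper bounds such $z$ by showing they are capacity exposers in Situation (1): since $\g_{\text{old}}(z)=\g_{\text{new}}(z)=b$ for $b\notin C$, and the occupied-plus-free-reserved slots in $b$ track the number of present records with $\g_{\text{old}}(y)=b$, a diverted $z$ necessarily sees $\ge h+\tau_h$ such records, making it an exposer in the $\g_{\text{old}}$-addressed table. You would need to add an analog of Situation (1) (and split your case analysis on $b\in C$ versus $b\notin C$) to close this gap. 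As a secondary point, your bound on (iii) relies on an inductive appeal to Lemma~\ref{lem:second_level_current_size} at $t_1$, whereas the paper's argument is self-contained at each time $t$: its Case~(1a) bound is obtained directly from the capacity exposers of Situation~(1) evaluated at $t_1$, with no induction over prior partial resizes needed.
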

\begin{proof}
  Consider the following two situations.
  \begin{itemize}
    \item \textbf{Situation (1):} Suppose we create an Iceberg table $T$
    consisting of $r_1$ bins with capacities $h + \tau_h$. We then
    insert into $T$ the records present at time $t_0$. Finally, we
    duplicate on table $T$ everything that happens between time $t_0$
    and $t$.
  \item \textbf{Situation (2):} Suppose we create an Iceberg table $T$
    consisting of $r_2$ bins with capacities $h + \tau_h$. We then
    insert into $T$ the records present at time $t_0$. Finally, we
    duplicate on table $T$ everything that happens between time $t_0$
    and $t$.
  \end{itemize}
  For $i \in \{1, 2\}$ and for any time $j$, let $X_{i}(j)$ denote the
  number of capacity exposers at time $j$ in Situation
  ($i$).\footnote{In both situations, time is measured from the frame
    of reference of the \emph{actual} table that the lemma is
    about. That is, in each situation, once we have inserted the
    elements present at time $t_0$, we consider that point in time to
    be $t_0$.} Let $Y_{i}(j)$ denote the set of bins $b$ at time $j$
  in Situation ($i$) such that there is at least one capacity exposer $x$ satisfying $g_{\text{new}}(x) = b$. By Lemma
  \ref{lem:capacity_floaters}, we have that
  $X_i(t_1) \le N / \poly(h)$ and $X_i(t) \le N / \poly(h)$ with super high
  probability in $N$ for both $i \in \{1, 2\}$; and that for a given
  record $x$, the probability that $g_{\text{old}}(x) \in Y_{1}(t_1)$,
  that $g_{\text{old}}(x) \in Y_1(t)$, that
  $g_{\text{new}}(x) \in Y_2(t_1)$, or that
  $g_{\text{new}}(x) \in Y_2(t)$ is at most $1 / \poly(h)$. We shall
  use these bounds as the main tools for proving our lemma.

  The process for performing partial expansions is designed so that at
  time $t$, since $t > t_2$, there are only four ways that a record $x$ can be
  in the backyard:
  \begin{enumerate}
  \item The record $x$ is grandfathered, and at the time freeze $t_1$,
    the demand counter $d$ for the bin $g_{\text{new}}(x)$ was larger than
    $g_{\text{new}}(x)$'s capacity $h + \tau_h$.
  \item The record $x$ is grandfathered, and at the time freeze $t_1$,
    there was another record $y$ such that $g_{\text{new}}(x) = g_{\text{new}}(y)$ and both
    $x$ and $y$ had the same fingerprints. 
  \item The record $x$ was inserted after time $t_1$, and when $x$ was
    inserted, there was another record $y$ such that $g_{\text{new}}(x) = g_{\text{new}}(y)$
    and both $x$ and $y$ had the same fingerprints.
  \item The record $x$ was inserted after time $t_1$, and when $x$ was
    inserted, there were no free unreserved slots in bin $g_{\text{new}}(x)$.
  \end{enumerate}

  We begin by considering the records that fall into Case (1) and for
  which $g_{\text{new}}(x)$ is not in the new chunk $C$ (call
  this Case (1a)). The basic idea in this case will be to compare our
  situation to that of Situation (1) at time $t_1$. Let $x$ denote a
  record in Case (1a), and let $Q$ denote the set of records in Case
  (1a) that reside in the backyard at time $t$. Since bin
  $b = g_{\text{new}}(x)$ is not in $C$, the demand counter $d$ for
  bin $b$ is equal to the number of records $y$ at time $t_1$ (i.e.,
  at the time freeze) such that $g_{\text{old}}(y) = b$. Moreover, the
  bin $b$ contributes only $d - (h + \tau_h)$ of those records to
  $Q$. Thus, if $X$ is the set of records at time $t_1$, then
$$|Q| = \sum_{b = 1}^{r_1} \max\left(0, |\{x \in X \mid g_{\text{old}}(x) = b\}| - (h + \tau_h)\right).$$
But this expression is a lower bound on $X_1(t_1)$, which we know is at
most $ N / \poly(h)$ with super high probability in $N$. Moreover, if we
define $P$ to be the set of bins $b$ for which
$|\{x \in X \mid g_{\text{old}}(x) = b\}| > h + \tau_h$, then we know that
$P \subseteq Y_1(t_1)$, which means that the probability of a given
record $x$ satisfying $g_{\text{old}}(x) \in P$ is $1 / \poly(h)$. This completes
the analysis of Case (1a).

Next we consider the records that fall into Case (1) and for which
$g_{\text{new}}(x)$ is in the new chunk $C$ (call this Case (1b)). The
analysis in this case is very similar to that of Case (1a), except
that we now compare to Situation (2) at time $t_1$. Let $x$ denote a
record in Case (1b), and let $Q$ denote the set of records in Case
(1b) that reside in the backyard at time $t$. Since bin
$b = g_{\text{new}}(x)$ is in $C$, the demand counter $d$ for bin $b$ is equal to
the number of records $y$ at time $t_1$ (i.e., at the time freeze)
such that $g_{\text{new}}(y) = b$. Moreover, the bin $b$ contributes only $d - (h + \tau_h)$
of those records to $Q$. Thus, if $X$ is the set of records at time $t_1$,
then
$$|Q| = \sum_{b = 1}^{r_2} \max\left( 0, |\{x \in X \mid g_{\text{new}}(x) = b\}| - (h + \tau_h)\right).$$
But this expression is also a lower bound on $X_2(t_1)$, which
we know is at most $N / \poly(h)$ with super high probability in
$N$. Moreover, if we define $P$ to be the set of bins $b$ for which
$|\{x \in X \mid g_{\text{new}}(x) = b\}| > h + \tau_h$, then we know that
$P \subseteq Y_2(t_1)$, which means that the probability of a given
record $x$ satisfying $g_{\text{new}}(x) \in P$ is $1 / \poly(h)$. This completes
the analysis of Case (1b).

The analysis of Cases (2) and (3) follows directly from Lemma
\ref{lem:fingerprints}. In particular, the number of records $x$ at
time $t$ that Cases (2) and (3) contribute to the backyard is at
most $N / \poly(h)$ with super high probability, and the probability of a
record $y$ hashing to a bin $g_{\text{new}}(y)$ containing such a record $x$ is
at most $1 / \poly(h)$.

We break Case (4) into two subcases just as we did for Case (1).
Consider the records $x$ in Case (4) and such that $g_{\text{new}}(x) \not\in C$
(call this Case (4a)). The number of such records $x$ is at most
$X_{1}(t)$, which we know is at most $N / \poly(h)$ with super high
probability. Moreover, the set $Y$ of bins containing such records $x$
satisfies $Y \subseteq Y_1(t)$, meaning that the probability of a
record $y$ hashing to a bin $g_{\text{old}}(y) \in Y$ is at most $1 / \poly(h)$.

Finally, consider the records $x$ in Case (4) and such that
$g_{\text{new}}(x) \in C$ (call this Case (4b)). The number of such records $x$
is at most $X_2(t)$, which we know is at most $N / \poly(h)$ with super high
probability. Moreover, the set $Y$ of bins containing such records $x$
satisfies $Y \subseteq Y_2(t)$, meaning that the probability of a
record $y$ hashing to a bin $g_{\text{new}}(y) \in Y$ is at most $1 / \poly(h)$.
\end{proof}

We can extend the preceding lemma to consider times
$t \in [t_1, t_2]$.

\begin{lem}
  Consider any time $t \in [t_1, t_2]$ prior to the next partial
  expansion or contraction. Let $r_1$ be the number of bins prior to the
  partial expansion and $r_2$ be the number of bins after the partial
  expansion. Suppose that, during the partial expansion, the total
  number of records never exceeds $r_1 h$, and set $N = r_2
  h$. Finally, let $k$ be the number of records $x$ in the backyard at time $t_1$ and let $p$ be the probability that a record $x$
  hashes to a bin $g_{\text{old}}(x)$ with a non-zero floating counter at time
  $t_1$.

  With super high probability in $N$, at time $t$, the number of records in
  the backyard is at most $k + N /
  \poly(h)$. Moreover, for a given record $x$, the probability that
  $x$ hashes to a bin $g_{\text{new}}(x)$ with a non-zero floating counter at
  time $t$ is at most $p + 1 / \poly(h)$.
  \label{lem:dynamic_correctness2}
\end{lem}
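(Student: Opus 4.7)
The plan is to bound the backyard contents and the floating counters at an intermediate time $t \in [t_1, t_2]$ by mirroring the case analysis in the proof of Lemma~\ref{lem:dynamic_correctness1}, while isolating the contribution of records that were already in the backyard at $t_1$.

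First I would partition the records present in the backyard at time $t$ into two classes: (A) records that were in the backyard at $t_1$ and are still in the backyard at $t$; and (B) records that entered the backyard during the window $(t_1, t]$. Class (A) trivially contributes at most $k$ records. For class (B), every such record enters the backyard for exactly one of the four reasons enumerated in the proof of Lemma~\ref{lem:dynamic_correctness1}: a capacity overflow of $g_{\text{new}}$ at the time freeze, a fingerprint collision with another grandfathered record at $t_1$, a post-$t_1$ fingerprint collision, or a post-$t_1$ insertion with no free unreserved slot in $g_{\text{new}}$. The counting arguments used in Lemma~\ref{lem:dynamic_correctness1}, which invoke the Iceberg Lemma together with Lemmas~\ref{lem:fingerprints} and~\ref{lem:capacity_floaters}, remain valid: Cases (1) and (2) are determined by the state at $t_1$ (so the count is monotone in $t$ and bounded by its $t_2$ value), while Cases (3) and (4) count events occurring in a time prefix of $[t_1, t_2]$ (so they are also dominated by the $t_2$ bound). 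This yields class (B) $\le N / \poly(h)$ w.s.h.p., and hence the backyard bound $k + N/\poly(h)$.

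For the floating-counter claim I would apply the same two-class split to whichever backyard record $y$ makes the floating counter of $g_{\text{new}}(x)$ non-zero. For class (B) contributions, the four-case argument from Lemma~\ref{lem:dynamic_correctness1} shows that the probability that any class-(B) record $y$ satisfies $g_{\text{new}}(y) = g_{\text{new}}(x)$ (or is otherwise made ``responsible'' for $g_{\text{new}}(x)$ by the bookkeeping) is at most $1/\poly(h)$. For class (A) contributions, the bookkeeping rules in the Preprocessing/Reshuffling phases are designed so that the currently ``responsible'' bin of any class-(A) record $y$ is either $g_{\text{old}}(y)$ or $g_{\text{new}}(y)$: records in bins outside the new chunk $C$ keep $g_{\text{old}}(y) = g_{\text{new}}(y)$ as their responsible bin, while records whose $g_{\text{new}}(y) \in C$ have their responsible bin moved from $g_{\text{old}}(y)$ to $g_{\text{new}}(y)$ at the moment their floating-counter transfer occurs. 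In either subcase, the event that $g_{\text{new}}(x)$ ``holds'' a class-(A) record is implied by the event that, already at time $t_1$, the bin $g_{\text{old}}(x)$ (which equals $g_{\text{new}}(x)$ unless $g_{\text{new}}(x) \in C$, in which case the relevant bin is the unique predecessor under the partial expansion) had a non-zero floating counter. Hence the class-(A) probability is at most the hypothesized $p$, and summing with the class-(B) bound gives $p + 1/\poly(h)$.

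The main obstacle I anticipate is bookkeeping the floating counters through the transition from $g_{\text{old}}$ to $g_{\text{new}}$ during the Reshuffling Phase: one must check that at every intermediate moment the total floating contribution from class-(A) records to a given bin $b$ is dominated by the $t_1$-time floating contribution to either $b$ itself or to $b$'s predecessor under the partial expansion. A secondary subtlety is that some class-(A) records are removed from the backyard when reshuffling places them into a reserved slot; this only decreases floating counters and so preserves the desired one-sided bound. Modulo these careful bookkeeping checks, the proof reduces to a direct inheritance of the Lemma~\ref{lem:dynamic_correctness1} analysis restricted to the time window $(t_1, t]$, plus the crude $+k$ slack for the leftover backyard population at $t_1$.
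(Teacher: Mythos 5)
Your proposal is correct and follows essentially the same approach as the paper. The paper's proof is a one-liner: reuse the four-case analysis of Lemma~\ref{lem:dynamic_correctness1} and add a fifth case for records that were already in the backyard at time $t_1$; your class-(A)/class-(B) split is precisely that decomposition, with class (A) being the "fifth case" contributing the additive $k$ (and $p$ for the floating-counter claim) and class (B) handled by the four cases from the previous lemma. Your additional remarks on how the floating-counter responsibility migrates from $g_{\text{old}}$ to $g_{\text{new}}$ during the Reshuffling Phase flesh out bookkeeping the paper leaves implicit, but the underlying argument is the same.
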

\begin{proof}
  This follows by the same analysis as Lemma
  \ref{lem:dynamic_correctness1}, except that we also consider a fifth
  way that records can reside in the backyard, which is that they
  resided in the backyard at time $t_1$.
\end{proof}

\paragraph{Performing partial contractions and hysteresis.}
Partial contractions can be implemented using the same approach as is
described above for partial expansions. We perform a time freeze
$t_1$ at which point we reserve space in each bin for the
records that are currently present and that wish to reside in that bin
(reserving up to $h + \tau_h$ slots in each bin). We then perform the
Reshuffling Phase in the same way as for partial expansions.

Partial expansions and contractions are performed via
hysteresis. Consider a chunk $C$ that is the $j$-th 
chunk in a doubling from $2^a$ bins to $2^{a + 1}$ bins. Let
$E = 2^a / s$ denote the number of bins in $C$. We must perform
partial expansions and contractions so that, whenever the number of
records is $h \cdot (2^a + (j - 1)E)$ or larger, the chunk $C$
is included in the table, and whenever the number of records is
$h \cdot (2^a + (j - 2)E)$ or smaller, the chunk $C$ is not
included in the table. To achieve this, whenever the number of records
reaches $h \cdot (2^a + (j - 2)E + (2/3) E)$, if $C$ is not yet
present, then we perform a partial expansion during the next $hE / 3$
operations. Likewise, whenever the number of records reaches
$h \cdot (2^a (j - 2)E + (1/3)E)$, if $C$ is present, then we perform
a partial contraction during the next $hE / 3$ operations. These
thresholds ensure that partial expansions and partial contractions do
not overlap temporally.

By combining the analyses of partial expansions and partial
contractions, we arrive at a (super) high probability guarantee in terms of the
table's current size $n$.

\begin{lem}
  Consider a dynamic Iceberg table maintained with $s = \sqrt{h}$, and
  suppose the size $n$ stays in the range such that
  $h \le \polylog n$.

  Consider a time $t$, and let $n$ be the current size of the
  table. Then w.s.h.p.\ in $n$, there are at most
  $n / \poly(h)$ records in the backyard.  Moreover, for a
  given key $x$, the probability that bin $g(x)$ has a non-zero
  floating counter is at most $1 / \poly(h)$.
  \label{lem:second_level_current_size}
\end{lem}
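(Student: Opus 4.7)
The plan is to combine Lemmas \ref{lem:dynamic_correctness1} and \ref{lem:dynamic_correctness2} (together with their partial-contraction analogues, which hold by symmetric arguments) via induction over the sequence of resize events, translating their $N$-based bounds into $n$-based ones.

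First I would observe that the hysteresis policy with $s = \sqrt{h}$ keeps the number $n$ of records within a constant factor of $rh$, where $r$ is the current number of front-yard bins: expansions are triggered when the load reaches $(2/3)E$ above $rh$ and contractions when it drops $(1/3)E$ below, and the phases are spaced so these triggers cannot overlap. Hence the parameter $N = r_2 h$ appearing in the component lemmas satisfies $N = \Theta(n)$, so $N/\poly(h) = \Theta(n/\poly(h))$, and w.s.h.p.\ in $N$ is equivalent to w.s.h.p.\ in $n$.

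Next, enumerate the partial resize events that have occurred up to time $t$ as $\mathcal{R}_1, \ldots, \mathcal{R}_I$, each with its phases $t_0^{(i)} < t_1^{(i)} < t_2^{(i)}$. I would induct on $i$ on the invariant: at every time in the window $[t_2^{(i)}, t_1^{(i+1)})$ (the quiescent period after $\mathcal{R}_i$ together with the Preprocessing Phase of $\mathcal{R}_{i+1}$) and in the Reshuffling interval $[t_1^{(i+1)}, t_2^{(i+1)}]$, the backyard has at most $O(N/\poly(h))$ records and any fixed key $x$ has floating-counter collision probability $O(1/\poly(h))$. The base case, before any resize, is exactly Lemma \ref{lem:second_level}. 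For the inductive step, times strictly between $t_2^{(i)}$ and $t_0^{(i+1)}$ are handled by Lemma \ref{lem:dynamic_correctness1} applied to $\mathcal{R}_i$, which gives the bounds absolutely, without reference to prior backyard state; the sub-interval $[t_0^{(i+1)}, t_1^{(i+1)}]$ (the Preprocessing Phase) is covered by the same argument since during it the front yard and the backyard are modified only by ordinary operations on the pre-freeze table, whose analysis is identical to that of the quiescent window. Finally, times in $[t_1^{(i+1)}, t_2^{(i+1)}]$ are covered by Lemma \ref{lem:dynamic_correctness2}, instantiated with the inductive-hypothesis values $k = O(N/\poly(h))$ and $p = 1/\poly(h)$, giving backyard bound $k + N/\poly(h) = O(N/\poly(h))$ and collision probability $p + 1/\poly(h) = O(1/\poly(h))$.

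The main technical point to verify is that the bounds compose without accumulation across resizes. Although the Reshuffling Phase of each resize transiently adds an additive $N/\poly(h)$ contribution to the backyard via Lemma \ref{lem:dynamic_correctness2}, the next post-resize application of Lemma \ref{lem:dynamic_correctness1} re-analyzes the table from scratch and caps the backyard by $N/\poly(h)$ \emph{absolutely}, so the contributions do not telescope. Since there are at most $\poly(n)$ resize events up to time $t$ and each has failure probability at most $2^{-n/\polylog n}$, a union bound preserves the super-high-probability guarantee in $n$. The hardest part is really just the bookkeeping: cleanly partitioning the timeline so that exactly one of the two component lemmas applies in each sub-interval, and confirming that the Preprocessing Phase does not need a separate analysis because no front-yard records move during it.
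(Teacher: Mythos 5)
Your proposal is correct and follows essentially the same approach as the paper: the paper's proof of Lemma \ref{lem:second_level_current_size} is a one-sentence citation to the analysis of Lemmas \ref{lem:dynamic_correctness1} and \ref{lem:dynamic_correctness2}, and you have fleshed out exactly what that citation is doing. You correctly identify the two key ingredients the paper leaves implicit: that the hysteresis parameters keep $N = r_2 h = \Theta(n)$ so that $N$-based bounds transfer to $n$-based bounds, and that Lemma \ref{lem:dynamic_correctness1} is an \emph{absolute} bound that resets the backyard size after each partial resize, preventing the additive $N/\poly(h)$ term from Lemma \ref{lem:dynamic_correctness2} from accumulating. Your observation that the Preprocessing Phase $[t_0, t_1]$ is covered because no records physically move during it is the right reason the two component lemmas suffice to cover the whole timeline.

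One small imprecision worth tightening: the final union bound ``over $\poly(n)$ resize events, each with failure probability $2^{-n/\polylog n}$'' is both unnecessary and slightly wrong as stated. Earlier resize events occurred at table sizes $n' \ll n$, and their failure probabilities are $2^{-n'/\polylog n'}$, which can dominate $2^{-n/\polylog n}$; summing those would not preserve w.s.h.p.\ in the current $n$. But you do not need them: as your own non-telescoping observation shows, the bound at time $t$ depends only on Lemma \ref{lem:dynamic_correctness1} applied once to bound the state at $t_1$ (or at $t$ itself, if $t > t_2$) and Lemma \ref{lem:dynamic_correctness2} applied once if $t \in [t_1, t_2]$ --- at most two events, both occurring at size $\Theta(n)$, so the union bound is over $O(1)$ events of probability $2^{-n/\polylog n}$ each. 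Replacing ``$\poly(n)$ events'' with ``$O(1)$ events at size $\Theta(n)$'' makes the argument airtight and also removes a latent inconsistency with your (correct) claim that the induction does not actually need to unroll.
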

\begin{proof}
  This follows directly from the analysis of the probabilistic guarantees during and after each partial resize.
\end{proof}

  \paragraph{Performing cache efficient resizing.}
  Next we consider the question of how to implement partial expansions
  and partial contractions efficiently in the EM model. Suppose
  $s = \sqrt{h}$, and suppose that the size of a cache line is
  $B = \Omega(h)$. Finally, suppose that we have a cache of size at least $M = ch^{1.5}B + (s \log n)$ for some sufficiently
  large constant $c$. (Note that $s \log n$ space is
  simply for storing pointers to the chunks of memory that have been
  allocated during each partial expansion in the table's history.)

  We begin by describing how to efficiently implement the Reshuffling
  Phase of a partial expansion. Traversing the backyard and
  attempting to move grandfathered records back down to the front
  yard requires only $O(n / \poly(h))$ cache misses (w.s.h.p.), where $n$ is the current table size. Reshuffling
  records within the front yard is slightly more subtle, however,
  since we wish to move roughly $\Theta(n / s)$ records with much
  fewer than $n / s$ cache misses.

  Say that a record $x$ is \defn{promoted $k$-levels} during a partial
  expansion if, due to the partial expansion, the number of bits in $x$'s bin position that are
  determined by $m(x)$ increases by $k$.

  Partition the bins into \defn{reshuffling groups}
  where the reshuffling group of each bin is determined by the bin
  number modulo $E$.  There are $O(hs) \le O(h^{1.5})$ elements in
  each rearrangement group. Importantly, any record that is promoted
  fewer than $\log h$ levels has the property that, when it is
  promoted, its rearrangement group doesn't change (i.e., it's moved
  between two bins in the same rearrangement group). We perform the
  partial expansion group by group, loading a given reshuffling group
  into cache, and then performing the reshufflings for that
  group. Once a group is loaded into cache, moving records around
  within the group is free (in terms of cache misses). Promoting
  records more than $\log h$ levels is not free, and each such
  promotion may incur up to $O(1)$ cache misses.

  By analyzing the above scheme, we can bound the number of cache misses
  needed to perform the Reshuffling Phase.
 
  \begin{lem}
    Suppose $s = \sqrt{h}$, and suppose that the size of a cache line
    is $B = \Omega(h)$. Finally, suppose that we have a cache of size at least $M = ch^{1.5}B + \sqrt{h} \log n$ for some sufficiently large constant $c$. Let $n$ satisfying $h \le \polylog n$ be the
    current table size, and suppose
    we perform a partial expansion. Then the Reshuffling Phase can be
    implemented with $O(n / \sqrt{h})$ cache misses, w.s.h.p.\ in $n$.
    \label{lem:reshuffling}
  \end{lem}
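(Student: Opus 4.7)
The plan is to process reshuffling groups one at a time, exploiting the fact that each group's contents fit in cache and that almost all record movements induced by the partial expansion stay within a group. Recall that the reshuffling group of a bin $b$ is $b \bmod E$, where $E = 2^a/s = \Theta(n/h^{1.5})$ is the size of the new chunk $C$. Each group therefore contains $(2^a + jE)/E = s + j = O(\sqrt{h})$ bins, and since each memory-aligned bin fits in one cache line of size $B = \Omega(h)$ words, a full group occupies $O(\sqrt{h})$ cache lines. This is comfortably inside the cache budget $M \ge ch^{1.5}B$, which leaves ample slack for the $s\log n$ cached allocation pointers and any temporary workspace.

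First I would stream through the backyard, attempting to push each grandfathered record back to the front yard; by Lemma~\ref{lem:second_level_current_size} the backyard holds at most $n/\poly h$ records w.s.h.p., so this phase contributes $O(n/\poly h)$ cache misses. Next I would iterate over the $E$ reshuffling groups. For each group I pay $O(\sqrt{h})$ cache misses to load its bins, after which walking the embedded linked lists $L_j(b)$ (Theorem~\ref{thm:findrecords}) identifies exactly the records to be moved within that group for free. Any record promoted by fewer than $\log h$ levels stays inside its group by construction, so its move is also free. Summing over all groups gives $E \cdot O(\sqrt{h}) = O(n/h)$ cache misses for the loads.

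The remaining cost comes from records promoted by $\ge \log h$ levels, which can cross group boundaries and thus pay $O(1)$ cache misses each. A record moves during this partial expansion only if its new address lands in $C$, an event of probability $O(1/s)$; and conditioned on moving, its promotion level is a geometric-type random variable (governed by the independent sequence $g^{(a)}_1(x), g^{(a-1)}_1(x), \dots$) that exceeds $\log h$ with probability $O(1/h)$. Hence the expected number of cross-group moves is $O(n/(sh)) = O(n/h^{1.5})$, and because these indicators depend on disjoint hash inputs across distinct records, a Chernoff bound upgrades this to the same w.s.h.p.\ upper bound. Summing the three contributions gives $O(n/h) + O(n/h^{1.5}) + O(n/\poly h) = O(n/\sqrt{h})$ cache misses w.s.h.p., as claimed.

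The main obstacle I expect is verifying that each cross-group move actually costs only $O(1)$ cache misses in the presence of the subtle bookkeeping of Section~\ref{sec:resizing_iceberg}. In particular, each cross-group move must consult the destination bin's reserved-slot count, fingerprints, and floating counter, and must update the $L_{j'}(b')$ linked list of Theorem~\ref{thm:findrecords} in both the source and the destination bin, all without incurring a second cache miss; the case in which the destination bin is already saturated (so the record is redirected to the backyard), and maintaining consistency of the demand counters and floating counters throughout this process, is the point that will need the most care.
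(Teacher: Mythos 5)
Your proof is correct and follows essentially the same three-part decomposition as the paper's proof: cache misses on backyard records, per-group loading costs, and cross-group moves due to promotions of $\ge \log h$ levels. Your analysis of the cross-group moves is actually slightly tighter than the paper's (you factor the probability as $O(1/s)\cdot O(1/h) = O(1/h^{1.5})$ rather than the paper's coarser per-record bound of $O(1/h)$), but both suffice since the dominant term is the $O(n/h)$ group-loading cost, which is safely within the stated $O(n/\sqrt{h})$ budget.
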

  \begin{proof}
    The number of cache misses spent on grandfathered records in the
    backyard is $O(n / \poly(h))$ w.s.h.p.\ in $n$. The number of cache misses
    spent rearranging records within each rearrangement group is
    $O(n / h)$, since each rearrangement group is loaded into and out
    of cache once. Finally, since each record $x$ has probability at
    most $1 / h$ of being promoted $\log h$ or more levels (all at once) during the
    partial expansion\footnote{In particular, in the event that $x$ is promoted by $\log h$ or more levels, we must have that $P_a(x) = 1$ and that $P_{a - 1}(x),\ldots,P_{a - \log h + 1}(x) = 0$, where $P(x)$ is the promotion sequence and the current number of bins is in the range $(2^a,2^{a + 1}]$. This, in turn, happens with probability $1 / h$.}, we have by a Chernoff bound that the number of records $x$ that are promoted
    $\log h$ or more levels is $O(n / h)$ w.s.h.p.\ in $n$.
  \end{proof}

  So far we have described how to perform the Reshuffling Phase
  efficiently. Notice, however, that the Preprocessing Phase can be
  implemented with the same grouping approach, and that the approach
  also works for the phases of partial contractions. Thus we have the
  following lemma:
  
  \begin{lem}
    Suppose $s = \sqrt{h}$, and suppose that
    the size of a cache line is $B = \Omega(h)$. Finally, suppose that
    we have a cache of size at least $M = ch^{1.5}B + \sqrt{h} \log n$ for some
    sufficiently large constant $c$. Let $n$ satisfying $h \le \polylog n$ be the current table
    size. Then a partial expansion or contraction can be implemented to
    incur at most $O(n / h)$ cache misses w.s.h.p.\
    in $n$.
    \label{lem:resize_cache}
  \end{lem}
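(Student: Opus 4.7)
The plan is to extend the grouping strategy used in the proof of Lemma~\ref{lem:reshuffling} to cover the remaining phases: the Preprocessing Phase of partial expansions and both phases of partial contractions. All three cases rely on the same key observation---that partitioning the bins into reshuffling groups of $O(h^{1.5})$ records each, and processing one group at a time, amortizes the cost of accessing metadata and shuffling records against the cost of loading each group into cache exactly once.

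First, for the Preprocessing Phase of a partial expansion, the task is to populate the demand counters $d$ for bins in $C$ and adjust them in the existing bins. Every grandfathered record $x$ in the front yard contributes to at most one counter in $\g_{\text{old}}(x)$'s reshuffling group and at most one counter in $\g_{\text{new}}(x)$'s group; since reshuffling groups are indexed by bin number mod $E$, and since $\g_{\text{new}}$ differs from $\g_{\text{old}}$ only in its high-order bits, both bins lie in the same group whenever the promotion level is less than $\log h$. Thus, loading a single group into cache together with its corresponding demand counters suffices to update all intra-group contributions for free. The backyard is traversed once for at most $O(n/\poly(h))$ additional cache misses by Lemma~\ref{lem:second_level_current_size}, and the per-group cost of $O(\sqrt{h})$ cache misses times $E = \Theta(n/(hs)) = \Theta(n/h^{1.5})$ groups yields the $O(n/h)$ bound.

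For partial contractions, the Preprocessing Phase is even easier: the records to be moved are precisely those in the chunk being removed, which occupy $O(E)$ contiguous bins and can therefore be scanned with $O(E) = O(n/h^{1.5})$ cache misses. The Reshuffling Phase redistributes these records into their target bins in the surviving chunks. By the same grouping analysis as in Lemma~\ref{lem:reshuffling}, intra-group relocations are absorbed into the $O(n/h)$ cost of loading each reshuffling group once, and the expected costs for the backyard updates and floating counters are bounded by Lemmas~\ref{lem:dynamic_correctness1} and~\ref{lem:dynamic_correctness2}.

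The main obstacle is accounting for records that move across reshuffling groups---records promoted by $\log h$ or more levels in an expansion, and symmetrically, records demoted from their removed chunk to a non-collocated surviving bin. These moves cannot be amortized against the group loads and each may cost $O(1)$ cache misses. Exactly as in Lemma~\ref{lem:reshuffling}, however, any given record incurs such a cross-group move with probability at most $1/h$, since this requires the relevant bit of its promotion sequence $P(x)$ (together with $\log h - 1$ preceding bits) to take a specific value. A Chernoff bound, which applies w.s.h.p.\ in $n$ since $h \le \polylog n$, shows that at most $O(n/h)$ records cross group boundaries during a single partial resize. Summing the per-phase bounds---$O(n/h^{1.5})$ for identifying the removed chunk, $O(n/h)$ for loading groups, $O(n/h)$ for cross-group moves, and $O(n/\poly(h))$ for the backyard---we obtain the claimed $O(n/h)$ cache-miss bound w.s.h.p.\ in $n$ for both partial expansions and partial contractions.
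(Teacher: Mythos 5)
Your proposal is correct and follows the same approach as the paper, which proves this lemma only by briefly remarking that the reshuffling-group strategy from Lemma~\ref{lem:reshuffling} ``can be implemented with the same grouping approach'' for the Preprocessing Phase and both phases of partial contraction. You fill in the details that the paper leaves implicit---grouping by bin number modulo $E$, the $E \cdot O(\sqrt{h}) = O(n/h)$ accounting, and the Chernoff bound on records that cross group boundaries---all of which match the intended argument.
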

  
  Since each partial expansion and contraction incurs at most $O(n / h)$
  cache misses (w.s.h.p.), and is spread across
  $\Theta(hE) = n / \sqrt{h}$ operations, we can randomize on which
  operations the cache misses occur, so that each operation incurs
  only $O(1 / \sqrt{h})$ resizing cache misses in expectation. Thus we
  arrive at the following lemma:
  \begin{lem}
    Consider a dynamic Iceberg hash table whose size $n$ stays in the
    range such that $h \le O(\log n / \log \log n)$, and suppose that
    we set $s = \sqrt{h}$. Suppose that the table used
    in the backyard supports constant-time operations (w.h.p. in $n$) and has load factor at least
    $1 / \poly(h)$.  Finally, suppose that
    we have a cache of size at least $M = ch^{1.5}B + \sqrt{h} \log n$ for some
    sufficiently large constant $c$, and suppose that each bin is stored in
    a cache line of size $\Theta(B)$.

    The expected number of cache
    misses incurred by a given operation is $1 + O(1 /
    \sqrt{h})$.
    \label{lem:cache-misses-dynamic}
  \end{lem}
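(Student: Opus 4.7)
The plan is to decompose the expected cache cost of any given operation into two contributions: the cost of executing the operation itself on the current (static) state of the table, and the amortized cost of the background resizing work that the operation is charged with. I will bound each contribution separately and then add them.

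For the first contribution, I would invoke the analysis of the static Iceberg table, but in the current-size version given by Lemma \ref{lem:second_level_current_size}. That lemma says that, w.s.h.p.\ in the current size $n$, the backyard contains only $n/\poly(h)$ records and, more importantly, that for any individual key $x$ the probability that $\g(x)$ has a non-zero floating counter is $1/\poly(h)$. Combined with the fact that each bin fits in a single cache line (by the hypothesis $B = \Omega(h)$ and memory-aligned bins), this is exactly the content of the proof of cache efficiency in Theorem \ref{thm:static}: a query or deletion incurs a second cache miss only when the bin's floating counter is nonzero, and an insertion incurs a second cache miss only when the key is itself sent to the backyard (either by a fingerprint collision or by the bin being full), each of which happens with probability $1/\poly(h)$. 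Thus the operation-level contribution to the expected number of cache misses is $1 + 1/\poly(h)$.

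For the second contribution, I would use Lemma \ref{lem:resize_cache}: each partial expansion or partial contraction can be implemented with $O(n/h)$ cache misses w.s.h.p.\ in $n$, provided the cache has the stated size $M = c h^{1.5} B + \sqrt{h}\log n$ (the additive $\sqrt{h}\log n$ term precisely accommodates the cached chunk pointers, and the $c h^{1.5} B$ term accommodates a full reshuffling group plus the staging memory). Hysteresis guarantees that consecutive partial expansions and contractions cannot overlap in time, so the resize work during a single partial resize is $O(n/h)$ cache misses, and the next resize will only be triggered after at least $\Theta(hE) = \Theta(n/\sqrt{h})$ hash-table operations. Spread the resize work across those $\Theta(n/\sqrt{h})$ operations by randomly assigning each unit of resize work (a group reshuffle) to a uniformly random operation within the window; each operation is then charged, in expectation, with $O((n/h)/(n/\sqrt{h})) = O(1/\sqrt{h})$ cache misses of background resizing work. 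Here it is essential that the assignment is random rather than adversarial, so that the bound on expected cache misses per operation does not degrade.

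Summing the two contributions gives an expected cache cost of $1 + O(1/\poly(h)) + O(1/\sqrt{h}) = 1 + O(1/\sqrt{h})$ per operation, as required. The main subtlety I would expect is bookkeeping: I must make sure that the worst-case $O(n/h)$ cache-miss bound from Lemma \ref{lem:resize_cache} is used only as an expected upper bound inside the randomization-over-operations argument, so that the rare event ``this resize was much worse than typical'' contributes only a lower-order term; and I should check that the operation-itself analysis from Theorem \ref{thm:static} still goes through in the middle of a resize, where queries must also consult $\g_{\text{old}}(x)$ in addition to $\g_{\text{new}}(x)$, costing at most one extra cache miss that occurs only when the current operation actually falls inside the $O(n/\sqrt{h})$-window of an active resize---but that extra miss can be charged to the resize budget and is already absorbed by the $O(1/\sqrt{h})$ bound.
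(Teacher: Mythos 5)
Your overall decomposition matches the paper's: add the cache cost of the operation itself (analyzed via Lemma~\ref{lem:second_level_current_size} and the static argument of Theorem~\ref{thm:static}) to the amortized cache cost of background resizing (from Lemma~\ref{lem:resize_cache}, spread over the $\Theta(n/s) = \Theta(n/\sqrt{h})$ operations between consecutive partial resizes). Those two pieces are handled correctly.

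The gap is in your treatment of the semi-present chunk. You write that an operation may have to consult both $\g_{\text{old}}(x)$ and $\g_{\text{new}}(x)$, and that this ``extra miss... occurs only when the current operation actually falls inside the $O(n/\sqrt{h})$-window of an active resize'' and can be ``charged to the resize budget.'' That charging does not work: with hysteresis, a partial resize lasts $\Theta(n/s)$ operations and the next one may be triggered $\Theta(n/s)$ operations later, so resize windows can occupy a constant fraction of all operations. If every operation inside a window genuinely paid a second cache miss, the amortized overhead would be $\Theta(1)$, not $O(1/\sqrt{h})$. The observation the paper uses, and that your argument is missing, is that for a \emph{fixed} key $x$ the clean-promotion / near-uniformity properties of waterfall addressing give $\Pr[\g_{\text{old}}(x) \neq \g_{\text{new}}(x)] = O(1/s) = O(1/\sqrt{h})$ over the hash-function randomness, so even while a chunk is semi-present only a $O(1/\sqrt{h})$-probability event forces the operation to look at two bins. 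It is this $O(1/s)$ probability per key, not the brevity of the resize window, that keeps the extra term down to $O(1/\sqrt{h})$.
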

  \begin{proof}
    We have already shown that the expected number of cache misses
    incurred by work spent on resizing is $O(1 / \sqrt{h})$. Whenever
    a partial expansion or contraction is occurring, each record $x$ has
    an $O(1 / s) = O(1 / \sqrt{h})$ chance of having
  $g_{\text{old}}(x) \neq g_{\text{new}}(x)$, in which case an
  operation on $x$ may be forced to visit multiple bins. In the cases
  where $g_{\text{old}}(x) = g_{\text{new}}(x)$, we can analyze the
  cache misses just as in Theorem \ref{thm:static}, except that we now
  use Lemma \ref{lem:second_level_current_size} in place of Lemma
  \ref{lem:second_level}.
  \end{proof}

 We can now prove Theorem \ref{thm:dynamic}.
  
\begin{proof}[Proof of Theorem \ref{thm:dynamic}]
  The claim of cache efficiency follows from Lemma
  \ref{lem:cache-misses-dynamic}. The claim of stability follows from
  the design of the data structure.

  To prove the claim of time efficiency, we must verify that each
  partial expansion/contraction can be completed in time $O(n / s)$. 
  By Lemma
  \ref{lem:second_level_current_size}, the backyard has size
  $O(n / \poly(h)) \le O(n / s)$ with high probability, and thus we can ignore resizing time spent on records in the backyard.
  By Lemma \ref{lem:listsizes}, the time spent in the Preprocessing Phase and the Reshuffling Phase on records in the front yard 
  is $O(n / s)$ with high probability in $n$. The other time costs (not from resizing) can be analyzed
  just as in Theorem \ref{thm:static}.

  Finally we prove space efficiency. By Lemma
  \ref{lem:second_level_current_size}, the space consumed by the
  backyard is negligible. On the other hand, our scheme always
  maintains an average load of $(1 - O(1 / s))h$ on the bins in the
  table.\footnote{This differs from the static case, where we
    maintained an average load of $h$. The difference stems from (a)
    the fact that we perform resizing using hysteresis, allowing for
    the load to change by a factor of $1 \pm O(1 / s)$ before
    performing resizing, and (b) the fact that we must decrease the
    load by a factor of $1 - O(1 / s)$ in order to account for
    truncated waterfall addressing being slightly nonuniform.} Since
  each bin takes space $(1 + O(\log h / h))$, the claim of
  space efficiency follows.
  \end{proof}

\newpage 
\section{Additional Figures}\label{app:figures}

Figure \ref{fig:metas} summarizes the different types of metadata in an Iceberg hash table. In some cases, it is easiest to bound the metadata by considering the overhead per record (typically either $O(1)$ or $O(\log \log n)$ bits). In other cases, it is easier to bound the overhead on a per-bin basis (there are $n / h$ bins). Keeping these distinctions in mind, all of the bounds are straightforward to derive (and have already been derived in previous sections). The only case where a metadata takes `negative space' is the use of the quotienting technique, which saves space overall. 

Figure \ref{fig:floaters} summarizes the three ways that an element can end up in the backyard. As the data structure is stable, once an element is in the backyard, it remains there until it is next deleted (or, in some cases, until a resize occurs). 

\begin{figure}\begin{tabular}{|c | c | c| }
\hline
Metadata Type & Number of Bits Across Hash Table & Sections in Use \\ \hline
Frontyard Bins   & $O(wn + w \tau_h /h )$ & All sections \\Backyard Hash Table   & $O(wn / \poly(h))$  & All sections \\Routing Tables & $O(n \log \log n)$ & Sections 3, 4, 5, 6 \\Per-Bin Floating Counters & $O\left( nw / h \right)$ & Sections 3, 4 \\Per-Bin Vacancy Bitmaps & $O(n)$ & Sections 3, 4 \\Per-Routing-Table Floating Counters & $O(n \log \log n) $& Sections 5, 6\\Per-Routing-Table Free Lists & $O(n \log \log n)$ & Sections 5, 6\\Linked Lists for Fast Resizing & $O(n \log \log n) $& Sections 4, 5, 6\\Quotienting &$- n\log n + O(n \log \log n)$ & Section 7 (saves space)\\ \hline \end{tabular}
\caption{Space consumption of different types of metadata in hash table storing $n$ keys, assuming keys of size w$ = \Theta(\log n)$ bits, $\tau_h = \Theta(\sqrt{\log h} / h)$, and a linear-space backyard $\mathcal{T}$. For each type of metadata, we also indicate which sections make use of that metadata.}\label{fig:metas}\end{figure}

\begin{figure}
\begin{tabular}{|c|c |c|}
\hline Type & Sections & Reason \\ \hline
Capacity Floaters & Sections 3, 4, 5, 6 & key hashes to a bin with $h + \tau_h$ elements \\ 
Fingerprint Floaters & Sections 3, 4, 5, 6 & key has fingerprint colliding with another key in the same routing table \\ 
Routing Floaters & Sections 5, 6 & key hashes to a routing table (within bin) that has $> 2 \log n / \log \log n$ keys. \\ \hline
\end{tabular}
\caption{Reasons that an element can be in the backyard, along with which sections the reason applies to.}
\label{fig:floaters}
\end{figure}

\end{document}